\newcommand{\dconj}{\mathbin{\dot\wedge}}
\newcommand{\ddisj}{\mathbin{\dot\vee}}
\newcommand{\dexists}{\mathop{\dot\exists}}
\newcommand{\dlift}[2]{{#1}\uparrow{#2}}
\renewcommand{\dlift}[2]{{#2}{\cdot}{#1}}
\newcommand{\dplus}{\mathbin{\dot+}}
\newcommand{\dmult}{\mathbin{\dot\times}}
\newcommand{\dequal}{\mathrel{\dot=}}
\newcommand{\genvar}{x}
\newcommand{\dvar}[1]{\genvar_{#1}}
\newcommand{\dsucc}[1]{1+{#1}}
\newcommand{\dcuto}[2]{\abst{\dvar i}{{#1}\,\dvar{{#2}+i}}}
\newcommand{\dcut}[1]{\dcuto{#1}{1}}
\newcommand{\plus}{+}
\newcommand{\mult}{\times}
\newcommand{\lconj}{\mathbin{\wedge}}
\newcommand{\ldisj}{\mathbin{\vee}}
\newcommand{\lneg}{\mathop{\neg}}
\newcommand{\gsubst}{\sigma}
\newcommand{\gren}{\rho}
\newcommand{\gval}{\nu}
\newcommand{\genv}{\varphi}
\newcommand{\grel}{\mathrel{\bowtie}}
\newcommand{\sem}[2]{\llbracket{#1}\rrbracket_{#2}}
\newcommand{\semb}[3]{\llbracket{#1}\rrbracket_{#2}^{#3}}
\newcommand{\abst}[2]{\lambda{#1}.{#2}}
\newcommand{\diotype}[1]{\mathbb D_{\mathrm{#1}}}
\newcommand{\dform}{\typefont{dio\_form}}
\newcommand{\diorel}{\cstfont{dio\_rel}}
\renewcommand{\dform}{\diotype{form}}
\newcommand{\dsingle}{\diotype{single}}
\newcommand{\dpoly}{\diotype{poly}}
\newcommand{\dpolyz}{\dpoly^\Z}
\newcommand{\delem}{\diotype{cstr}}
\renewcommand{\diorel}{\diotype{rel}}
\newcommand{\diofun}{\diotype{fun}}
\newcommand{\Var}{\typefont V}
\newcommand{\BVar}{\typefont U}
\newcommand{\dsize}[1]{|{#1}|}
\newcommand{\recalg}[1]{\mathcal{A}_{#1}}
\keywords{Hilbert's tenth problem, Diophantine equations, undecidability, 
          computability theory, reduction, 
          Minsky machines, Fractran,
          Coq, type theory}
\definecolor[named]{ACMPurple}{cmyk}{0.55,1,0,0.15}
\definecolor[named]{ACMDarkBlue}{cmyk}{1,0.58,0,0.21}
\newcommand{\coqmathlink}[2]{\text{\coqlink[#1]{$#2$}}}
\newcommand{\defeq}{\mathrel{:=}}
\newcommand{\bnfdef}{\mathrel{::=}}
\let\cdef\defeq
\newcommand\kleenestar{*} %
\newcommand{\toot}{\mathrel\leftrightarrow}
\newcommand{\cfun}{\mathrel\rightarrow}
\newcommand{\coqshorttype}[1]{\mathbb{#1}}
\newcommand{\fin}[1]{\coqshorttype{F}_{#1}}
\newcommand{\Type}{\coqshorttype{T}}
\renewcommand{\Type}{\cstfont{Type}}
\newcommand{\Prop}{\coqshorttype{P}}
\newcommand{\bool}{\coqshorttype{B}}
\newcommand{\nat}{\coqshorttype{N}}
\newcommand{\Z}{\coqshorttype{Z}}
\newcommand{\List}{\coqshorttype{L}\,}
\newcommand{\option}{\coqshorttype{O}\,}
\newcommand{\some}[1]{\lfloor#1\rfloor}
\newcommand{\cnil}{\mathalpha{[\,]}}
\newcommand{\ccons}{\mathbin{::}}
\newcommand{\capp}{\mathbin{{+\mskip-10mu+}}}
\newcommand{\typefont}[1]{\ensuremath{\mathsf{#1}}}
\newcommand{\cstfont}[1]{\text{\tt{#1}}}
\newcommand{\filefont}[1]{\text{\sffamily{#1}}}
\newcommand{\ctactic}{\cstfont}
\newcommand{\True}{\cstfont{True}}
\newcommand{\False}{\cstfont{False}}
\newcommand{\ol}{\overline}
\newcommand{\app}{\mathbin{\ensuremath{+\!\!\!+}}} 
\renewcommand{\app}{\capp}
\newcommand{\Fin}[1]{\mathbb{F}_{#1}}
\providecommand{\M}{}
\renewcommand{\M}[1]{{\textsf{#1}}}
\newcommand{\reducesto}{\mathrel\preceq}
\newcommand{\semstep}{\succ}
\newcommand{\sssstepsG}[5]{{#2}\mathrel{#1}{#3}\mathrel{\semstep^{#4}}{#5}}
\newcommand{\sssstepG}[4]{\sssstepsG{#1}{#2}{#3}{}{#4}}
\newcommand{\ssscomputeG}[4]{\sssstepsG{#1}{#2}{#3}*{#4}}
\newcommand{\sssoutputG}[4]{{#2}\mathrel{#1}{#3}\mathrel\rightsquigarrow{#4}}
\newcommand{\sssterminatesG}[3]{{#2}\mathrel{#1}{#3}\,\mathalpha{\downarrow}}
\newcommand{\dslash}{\mbox{$/\mskip-5mu/$}}
\newcommand{\pslash}[1]{\dslash_{#1}}
\newcommand{\sssstepsX}[1]{\sssstepsG{\pslash{#1}}}
\newcommand{\sssstepX}[1]{\sssstepG{\pslash{#1}}}
\newcommand{\ssscomputeX}[1]{\ssscomputeG{\pslash{#1}}}
\newcommand{\sssoutputX}[1]{\sssoutputG{\pslash{#1}}}
\newcommand{\sssterminatesX}[1]{\sssterminatesG{\pslash{#1}}}
\newcommand{\clength}[1]{\mathalpha{|{#1}|}}
\newcommand{\divides}{\mathrel{|}}
\newcommand{\ndivides}{\mathrel{\nmid}}
\newcommand{\INC}[1]{\llinstr{INC}\ #1}
\newcommand{\DEC}[2]{\llinstr{DEC}\ #1\ #2}
\newcommand{\instr}{\typefont{I}}
\newcommand{\llinstr}[1]{\texttt{#1}}
\newcommand\modeq[3]{{#1}\equiv {#2}~\mathrm{mod}~{#3}}
\renewcommand\modeq[3]{{#1}\equiv {#2}\,[{#3}]}
\newcommand{\bwleq}{\mathrel{\preccurlyeq}}
\newcommand{\binomial}[2]{\mathcal{C}^{#2}_{#1}}
\newcommand{\Zp}[1]{\Z/{#1}\Z}
\newcommand{\FRACTRAN}{\M{FRACTRAN}\xspace}
\newcommand\Hten{\M{H10}\xspace}
\newcommand\Htenz{\ensuremath{\M{H10}_\Z}\xspace}
\begin{document}

\title[Hilbert's Tenth Problem in Coq (Extended Version)]{Hilbert's Tenth Problem in Coq\texorpdfstring{\\}{ }(Extended Version\rsuper*)}
\titlecomment{{\lsuper*}extended version of~\cite{fscdversion}} 

\author{Dominique Larchey-Wendling}
\address{Université de Lorraine, CNRS,  LORIA, Vand{\oe}uvre-l\`es-Nancy, France}
\email{dominique.larchey-wendling@loria.fr}
\author{Yannick Forster}
\address{Saarland University, Saarland Informatics Campus, Saarbrücken, Germany}
\email{forster@cs.uni-saarland.de}{}{}

\begin{abstract}
  We formalise the undecidability of solvability of Diophantine equations, i.e.\ polynomial equations over natural numbers, in Coq's constructive type theory.
  To do so, we give the first full mechanisation of the Davis-Putnam-Robinson-Matiyasevich theorem, stating that every recursively enumerable 
  problem~--~in our case by a Minsky machine~--~is Diophantine.
  We obtain an elegant and comprehensible proof by using a synthetic approach to computability and by introducing Conway's \FRACTRAN language as intermediate layer.
  Additionally, we prove the reverse direction and show that every Diophantine relation is recognisable by $\mu$-recursive functions and give a certified compiler 
  from $\mu$-recursive functions to Minsky machines.
\end{abstract}

\maketitle

\section{Introduction}

Hilbert's tenth problem (\Hten) was posed by David Hilbert in 1900 as part of his famous 23 problems~\cite{hilbert1902mathematical} and asked for the ``determination of the solvability of a Diophantine equation\rlap.''\enspace
A Diophantine equation\footnote{Named after the Greek mathematician Diophantus of Alexandria, who started the study of polynomial equations in the third century.}
 is a polynomial equation over natural numbers (or, equivalently, integers) with constant exponents, e.g.\ $x^2 + 3 z = yz + 2$.
When Hilbert asked for ``determination\rlap,'' he meant, in modern terms, a decision procedure, but computability theory was yet several decades short of being developed.

The first undecidable problems found by Church, Post and Turing were either native to mathematical logic or dependent on a fixed model of computation.
\Hten, to the contrary, can be stated to every mathematician and its formulation is independent from a model of computation.
Emil Post stated in 1944 that \Hten ``begs for an unsolvability proof''~\cite{post1944recursively}.
From a computational perspective, it is clear that \Hten is recursively enumerable (or \textit{recognisable}), meaning there is an algorithm 
that halts on a Diophantine equation iff it is solvable.

Post's student Martin Davis conjectured that even the converse is true, i.e.\ that every recognisable set is also Diophantine.
More precisely, he conjectured that if $A \subseteq \nat^k$ is recognisable then $(a_1, \dots, a_k) \in A \toot \exists x_1\dots x_n,P(a_1, \dots, a_k,x_1,\dots,x_n) = 0$ holds for some polynomial $P$ in $k + n$ variables.
He soon improved on a result by Gödel~\cite{godel1931formal} and gave a proof of his conjecture, however requiring up to one bounded universal quantification~\cite{davis1953arithmetical}: 
$(a_1, \dots, a_k) \in A \toot\allowbreak \exists z, \forall y < z, \exists x_1 \ldots x_n,P(a_1, \dots, a_k,x_1,\dots,x_n,y,z) = 0$.
Davis and Putnam~\cite{davis1959computational} further improved on this, and showed that, provided a certain number-theoretic assumption holds, every recognisable set is \textit{exponentially} Diophantine, meaning variables are also allowed to appear in exponents.
Julia Robinson then in 1961 modified the original proof to circumvent the need for the assumption, resulting in the DPR theorem~\cite{davis1961decision}, namely that every recognisable set is exponentially Diophantine.
Due to another result from Robinson~\cite{robinson1952}, the gap now only consisted of proving that there is a Diophantine equation exhibiting exponential growth.
In 1970, Yuri Matiyasevich showed that the Fibonacci sequence grows exponentially while being Diophantine, closing the gap and finishing the proof of the theorem nowadays called \textit{DPRM theorem}, ultimately establishing that exponentiation is Diophantine itself~\cite{matijasevic1970enumerable} (known as ``Matiyasevich's theorem'').

Even the most modern and simpler proofs of the DPRM theorem
still require many preliminaries and complicated number-theoretic ideas, for an overview see~\cite{matiyasevich2016martin}.
We formalise one such proof as part of our ongoing work on a 
\href{https://github.com/uds-psl/coq-library-undecidability}{library of undecidable problems~\cite{forstertowards,PSLSyntCT}} in the proof assistant Coq~\cite{Coq}.
Since \Hten is widely used as a seed~\cite{dudenhefner2018simpler, goldfarb1981undecidability} 
for showing the undecidability of problems using \textit{many-one reductions}, this will open further ways of extending the library.
Given that our library already contains a formalisation of Minsky machines~\cite{FLW19}, we follow the approach of Jones and Matijasevi\v{c}~\cite{JonesM84}, who use register machines, being very well-suited since they already work on numbers.
They encode full computations of register machines as Diophantine equations in one single, monolithic step.
To make the proof more tractable for both mechanisation and explanation, we factor out an intermediate language, John Conway's \FRACTRAN~\cite{Conway1987}, which can simulate Minsky~machines.

We first introduce three characterisations of Diophantine equations over natural numbers, namely \textit{Diophantine logic} $\M{DIO\_FORM}$ (allowing to connect basic Diophantine equations with conjunction, disjunction and existential quantification), \textit{elementary Diophantine constraints} $\M{DIO\_ELEM}$ (a finite set of constraints on variables, oftentimes used for reductions~\cite{dudenhefner2018simpler,goldfarb1981undecidability}) and \textit{single Diophantine equations} $\M{DIO\_SINGLE}$, including parameters,  as described above.
$\M{H10}$ then asks about the solvability of single Diophantine equations with no parameters.

\newcommand{\figlink}[3]{\item[\normalfont\setCoqFilename{#1}\coqmathlink{#2}{\M{#3}}]}

\begin{figure}
\begin{description}
\figlink{PCP.PCP}{PCP}{PCP} Post correspondence problem, see e.g.~\cite{FLW19}. (matching)
\figlink{MinskyMachines.MM}{MM_HALTING}{MM} Given $n:\nat$, a Minsky machine $P:\List{\instr_n}$ with $n$ registers,
  and $\vec v:\nat^n$, does $(1,P)$ terminate from input state $(1,\vec v)$? (termination)
\figlink{FRACTRAN.FRACTRAN}{FRACTRAN_HALTING}{FRACTRAN} Given a regular \FRACTRAN program $Q:\List{(\nat\times\nat)}$
  and an input state $s$, does $Q$ terminate from input state $x$? (termination)
\figlink{H10.FRACTRAN_DIO}{DIO_LOGIC_SAT}{DIO\_FORM} Given a Diophantine logic formula $A:\dform$ and a valuation
  $\gval:\Var\cfun\nat$, does $\sem A\gval$ hold? (satisfaction)
\figlink{H10.FRACTRAN_DIO}{DIO_ELEM_SAT}{DIO\_ELEM} Given a list $l:\List\delem$ of elementary Diophantine constraints 
  and a valuation $\gval:\Var\cfun\nat$, does there exist $\genv:\BVar\cfun\nat$ such that 
  $\forall c\in l,\, \semb c\gval\genv$? (simultaneous satisfiability)
\figlink{H10.FRACTRAN_DIO}{DIO_SINGLE_SAT}{DIO\_SINGLE} Given a single Diophantine equation $p\dequal q:\dsingle(\nat,\nat)$
  and a valuation $\gval:\nat\cfun\nat$, does there exist $\genv:\nat\cfun\nat$ s.t.\ 
  $\semb p\gval\genv=\semb q\gval\genv$? (solvability)
\figlink{H10.H10}{H10}{H10} Given a single Diophantine equation $p\dequal q:\dsingle(\fin n,\fin 0)$
  (over $\nat$ with possibly $n$ variables but no parameters), does it have a solution in $\nat$? (solvability)
\figlink{MuRec.recalg}{MUREC_HALTING}{$\mu$-rec} Given $n:\nat$, an $n$-ary $\mu$-recursive function $f :\recalg n$,
  and $\vec v:\nat^n$, does $\vec v$ belong to the domain of $f$? (termination)
\figlink{H10.H10Z}{H10Z}{H10${}_{\mathbb{Z}}$} Given a single Diophantine equation $p\dequal q:\dsingle(\fin n,\fin 0)$
  (over $\mathbb{Z}$ with possibly $n$ variables but no parameters), does it have a solution in $\mathbb{Z}$? (solvability)
\end{description}
\caption{\label{fig:problems}Summary description of some decision problems.}
\end{figure}

Technically, the reduction chain to establish the unsolvability of \Hten starts at the halting problem for single-tape Turing machines $\M{Halt}$, reduced to the Post correspondence problem $\M{PCP}$ in~\cite{FHS18}.
In previous work~\cite{FLW19} we have reduced $\M{PCP}$ to a specialised halting problem for Minsky machines, which we use here in a slightly generalised form as $\M{MM}$.
We then reduce Minsky machine halting to \FRACTRAN termination.
\FRACTRAN is very natural to describe using polynomials, and the encoding 
does not rely on any complicated construction.
The technical difficulty then only lies in the Diophantine encoding of the reflexive-transitive closure of a relation 
which follows from the direct elimination of bounded universal quantification,
given that the proof in~\cite{Matiyasevich1997} involves no detour via models of computation.
In total, we obtain the following chain of reductions\rlap,\footnote{A \emph{many-one reduction $P\reducesto Q$}, later defined formally in this section, is a computable
function from problem $P$ to problem $Q$ that maps instances of $P$ into instances of $Q$, pereserving both validity and invalidity.} 
establishing the undecidability of \Hten and it's many-one interreducibility with several decision problems:
  \[\footnotesize\M{Halt}\reducesto \M{PCP}\reducesto \M{MM}\reducesto \FRACTRAN\reducesto \M{DIO\_FORM}\reducesto \M{DIO\_ELEM}\reducesto \M{DIO\_SINGLE}\reducesto \M{H10} \reducesto \M{$\mu$-rec} \reducesto \M{MM} \]
where Fig.~\ref{fig:problems} lists high-level descriptions of these problems.
Furthermore, we prove that $\M{H10} \reducesto \M{H10}_{\mathbb{Z}}$ via Lagrange's theorem.
In the present paper, we focus on explaining this factorisation of the proof and give some details for the different stages.
While we contribute Coq mechanisations of Matiyasevich's theorem and the elimination of bounded universal quantification, 
we treat them mainly as black-boxes and only elaborate on their challenging formalisation rather 
than the proofs themselves (see Section~\ref{sec:expo_buq}).

To the best of our knowledge, we are the first to give a \emph{full verification} of the DPRM theorem and the undecidability of Hilbert's tenth problem in a proof assistant.
We base the notion of recognisability in the DPRM theorem on Minsky machines.

When giving undecidability proofs via many-one reductions, it is critical to show that all reduction functions are actually computable.
We could in theory verify the computability of all functions involved using an explicit model of computation.
In pen-and-paper proofs, this approach is however almost never used, because implementing high-level mathematical transformations as provably correct low-level programs is a daunting task. 
Instead, we rely on a synthetic approach~\cite{FHS18,0002KS19,FLW19} based on the computability of all functions definable in Coq's constructive type theory, which is closer to the practice of pen-and-paper proofs.
In this approach, a problem $P$ is considered undecidable if there is a reduction from an obviously undecidable problem, 
e.g.\ $\M{Halt} \preceq P$.

The axiom-free Coq formalisation of all the results in this paper is available online~\cite{zenodolibrary}
and the main lemmas and theorems in the pdf version of the paper are hyper-linked 
with the html version of the source code
at
\begin{center}
{\color{ACMDarkBlue}\url{https://github.com/uds-psl/coq-library-undecidability/tree/H10-LMCS-v1.1}}
\end{center}
Starting from our already existing library which included most of the Minsky machine code~\cite{FLW19},
the additional code for proving the undecidability of \Hten and the DPRM theorem consists of about 8k loc including 3k loc for
Matiyasevich's results alone, together with a 4k loc addition to our shared libraries; 
see Appendix~\ref{code_comments_appendix} for more details.
The paper itself can be read without in-depth knowledge of Coq or type theory.

\subsection{Contribution}
This paper is an extended journal version of a conference paper~\cite{fscdversion} which, besides a full mechanisation of the DPRM theorem, contributed a novel refactoring of the proof via \FRACTRAN improving the explainability of the DPRM theorem.
Compared to the conference version, we contribute mechanised proofs showing that
\begin{itemize}
\item $\M{H10}$ reduces to solvability of Diophantine equations over integers, reduction 
      obtained via a low-level implementation of Lagrange's theorem (\Cref{sec:lagrange});
\item Diophantine relations are recognisable by $\mu$-recursive algorithms (\Cref{sec:murec});
\item $\mu$-recursively recognisable relations are $\M{MM}$-recognisable  (\Cref{sec:murec}), 
      thereby proving that all considered problems are in the same many-one reduction class;
\item $\mu$-recursive algorithms can be simulated in the weak call-by-value $\lambda$-calculus (\Cref{sec:wcbv}), thereby proving that all considered problems are in the same many-one reduction class as most problems in the Coq library of undecidable problems~\cite{PSLSyntCT}.
\end{itemize}

Apart from the new results, we have simplified the account of Diophantine logic considerably and expanded various explanations of proofs.

\subsection{Preliminaries}

For the text, while we cannot completely avoid it, we will try to minimize reliance on type theoretic language
and notations. We write $\Prop$ for the (impredicative) type of propositions 
and $\Type$ for the (predicative hierachy of) types of Coq. When $X$ and $Y$ are types,
we write $X\to Y$ for functions from $X$ to $Y$.\footnote{In the case where $X$ and $Y$ are 
moreover propositions, the type/proposition $X\to Y$ is understood as $X$ implies $Y$,
inhabited with functions mapping proofs of $X$ into proofs of $Y$; the type theory of Coq
fully implements the Curry-Howard isomorphism.}
We write $x\mult y$ or $x\cdot y$ for multiplication of natural numbers $x,y : \nat$ and we will leave out the symbol where convenient.
We write $\List X$ for the type of \emph{lists} over $X$ and $l \capp l'$ for the \emph{concatenation} of two lists.
We write $X^n$ for \emph{vectors} $\vec v$ over type $X$ with length $n$, and $\fin n$ for the \emph{finite type} with 
exactly $n$ elements. For $p:\fin n$, we write $\vec v_p$ for the $p$-th component of $\vec v:X^n$.
Notations for lists are overloaded for vectors.

\newcommand\inl{\mathsf{inl}~}
\newcommand\inr{\mathsf{inr}~}
We write $\option X$ for the type of \emph{options} over $X$ with elements $\some x$ where $x : X$ and $\emptyset$.
We write $X + Y$ for the type-theoretic sum of types $X$ and $Y$, consisting of $\inl x$ for $x : X$ and $\inr y$ for $y : Y$.
For a list $l : \List X$, $l_n : \option X$ denotes the $n$-th value in $l$ if it exists.

If $P:X\cfun\Prop$ is a predicate (on $X$) and 
$Q:Y\cfun\Prop$ is a predicate, we write $P \reducesto Q$ if there is a function $f : X \to Y$ 
s.t.\ $\forall x : X,\,P\,x \leftrightarrow Q(f\,x)$, i.e.\ a \emph{many-one reduction} from $P$ to $Q$.
In the synthetic approach~\cite{FHS18,0002KS19,FLW19}, the computability of
the reduction~$f$ is automatically ensured because $f$ is typeable in Coq without relying on any axiom. 
    
\section{Diophantine Relations}

Diophantine relations are composed of polynomials over natural numbers.
There are several equivalent 
approaches to characterise these relations and oftentimes, the
precise definition is omitted from papers. Basically, one can
form equations between polynomial expressions and then combine these 
with conjunctions, disjunctions, and existential quantification\rlap.\footnote{Universal quantification or
negation are not accepted as is.}\enspace 
For instance, these
operations are assumed as Diophantine producing operators
in e.g.~\cite{JonesM84, matijasevic1970enumerable,Matiyasevich1997,Matiyasevich2000}.
Sometimes Diophantine relations 
are restricted to a single polynomial equation.
Sometimes the exponentiation function $x,y\mapsto x^y$ 
is assumed as Diophantine~\cite{JonesM84}.
To complicate the picture, Diophantine relations might equivalently
range over $\mathbb Z$ (instead of $\nat$) but expressions
like $x^y$ implicitly assume that $y$ never gets a negative value.

Although seemingly diverging, these approaches are not contradictory
because in the end, they characterise the same class of relations
on natural numbers. However, mechanisation does not allow for such
implicit assumptions. To give some mechanisable structure to some of 
these  approaches, we propose three increasingly restricted
characterisations of Diophantine relations: \emph{Diophantine logic}, 
\emph{elementary Diophantine constraints} and \emph{single
Diophantine equations}, between which we provide computable transformations 
in Sections~\ref{dio_elem_sect} and~\ref{dio_single_sect}.
In Section~\ref{sec:lagrange},
we also elaborate on the case of polynomials over $\Z$, 
i.e.\ we give an undecidability proof for Hilbert tenth problem over integers.
But before turning to formal definitions, we motivate our approach
for the automated analysis and recognition of Diophantine shapes.

\subsection{Diophantine Shapes}

We introduce the meta-level notion of \emph{Diophantine shape}. On purpose,
this notion does not have a precise formal definition because it is a
\emph{dynamically evolving property} of meta-level expressions
that upgrades itself as more and more closure results are proved about those shapes. 

Of course, we cannot rely on some blurry notion to formally
prove theorems about Diophantine relations. So at some point, we have to
choose one (or several) formal representation(s) of Diophantine relations.
Irrelevant to the actual syntax we finally pick up for the formal description
of Diophantine relations, we call them \emph{object-level representations}. 

The problem we face is the following: how can we minimize the work we have to do 
to actually build an object-level witness corresponding to a given
meta-level expression representing a Diophantine relation. 
Directly manipulating object-level 
syntax is far from the ideal way to proceed for a lazy Coq 
programmer\rlap,\footnote{being able to defer repetitive work to 
computers is critical to the successful completion of mechanizations.}
and indeed, this empirical lesson can be learned the painful way.

\smallskip

Let us illustrate this on the following complex example of meta-level expression 
$$x = z^{x+k}\lconj \forall y,\, y<k \to \exists u\,v, \modeq{u}{2v}p \lconj p = 2^y 
\lconj \modeq{\binomial u z}x v.$$
After a quick analysis of its structure, it appears that before being able to actually
establish that this is a Diophantine relation, we would probably have to show 
that polynomials, identities, arithmetic congruences, the exponential function,
binomial coefficients, conjunction, existential
quantification and bounded universal quantification are all Diophantine 
admissible, hence to give \emph{procedures} to derive object-level
representations for all these meta-level constructions. 
This already amounts to significant work.
But once this is done, we want to avoid both the hand-building of 
the object-level witness for the above expression, and the proofs
that it correctly reflects its semantics.
We essentially require our framework to be able to automatically 
combine those procedures and build a provably correct witness for us. 

To summarise, we aspire at the formal definition of an object-level representation 
and at the same time, at avoiding its direct manipulation. 
This is where the dynamic notion of Diophantine shape plays a central role. 
At first, there are very few basic Diophantine shapes, typically constants, 
variables,
addition, multiplication, equality. But at some point, we would e.g.\ have a 
result stating that if the expressions $f$ and $g$ have a Diophantine 
shape, then so does the expression $f^g$, that is the Diophantine 
admissibility of the exponential function, nowadays called
Matiyasevich's theorem. 
Critically, shapes can  be added
dynamically as they are proved admissible as opposed to be carved in 
the stone of a given object-level syntax. We now describe how to do this in a 
successful way using some of the automation provided by Coq.

\subsection{Diophantine Logic}

\label{dio_logic_sect}

\setCoqFilename{H10.Dio.dio_logic}

We define the type $\dform$ of Diophantine formul{\ae} for the
abstract syntax of Diophantine logic. An atomic Diophantine logic formula
is just expressing basic atomic identities between variables like
 $\dvar i \dequal \dvar j\dplus \dvar k$
or $\dvar i \dequal \dvar j\dmult \dvar k$
and we combine those with binary disjunction, binary conjunction, and existential quantification.
\[
A,B : \coqlink[dio_formula]{\dform} \bnfdef 
{\dvar i \dequal n}
\mid
{\dvar i \dequal \dvar j}
\mid
{\dvar i \dequal \dvar j\dplus \dvar k}
\mid
{\dvar i \dequal \dvar j\dmult \dvar k} 
\mid A\dconj B \mid A\ddisj B \mid \dexists A
\]
The letters $A,B$ range over formul{\ae} and $n:\nat$ represents constant ranging over natural numbers.
We use standard \emph{De Bruijn syntax} with variables $\dvar0, \dvar1, \dots$ of type $\Var := \nat$
for better readability.
If we have $\dvar i : \Var$, we write $\dvar{\dsucc i}$ for the next variable in $\Var$. 
As an example, the meta-level formula $\exists y,(y=0 \land \exists z,\,y= z\plus k)$ would be represented as
e.g.\ $\dexists\bigl(x_0 \dequal 0 \dconj \dexists(x_1 \dequal x_0\dplus x_2)\bigr)$, i.e.\ the variable
$x_i$ refers to the $i$-th binder in the context. Notice that there is no type or syntactic distinction 
between variables and parameters in Diophantine logic. However some variables are bound in their context
and others are free.

We provide a semantics for Diophantine logic.
Given a valuation for variables $\gval:\Var\cfun\nat$,
we define the interpretation $\sem{{\dvar i\dequal\ldots}}\gval:\Prop$ of atomic formul{\ae} by
\[ \begin{array}{l@{~\defeq~}l@{\qquad}l@{~\defeq~}l}
  \sem{\dvar i\dequal n}\gval                      & \gval\,\dvar i = n
& \sem{\dvar i\dequal \dvar j\dplus \dvar k}\gval  & \gval\,\dvar i = \gval\,\dvar j\plus \gval\,\dvar k \\
  \sem{\dvar i\dequal \dvar j}\gval                & \gval\,\dvar i = \gval\,\dvar j 
& \sem{\dvar i\dequal \dvar j\dmult \dvar k}\gval  & \gval\,\dvar i = \gval\,\dvar j\mult \gval\,\dvar k
\end{array} \]
and $\coqlink[df_pred]{\sem A{\gval}}:\Prop$ for a compound formula $A:\dform$ by the following recursive equations:
\[\begin{array}{c@{\qquad}c@{\qquad}c}
\sem{A\dconj B}\gval\cdef\sem A\gval\lconj\sem B\gval 
& \sem{A\ddisj B}\gval\cdef\sem A\gval\ldisj\sem B\gval &  \sem{\dexists A}\gval\cdef\exists n:\nat,\sem A{\dlift\gval n}
\end{array}\]
where $\dlift\gval n:\Var\cfun\nat$ is the standard De Bruijn extension%
\footnote{The notation $\dlift\gval n$ emphasizes that the value $n$ is pushed ahead of the infinite sequence 
$\gval\,\dvar 0;\gval\,\dvar 1;\gval\,\dvar 2;\ldots$}
of a valuation $\gval$ by $n$,
defined by $\dlift\gval n\,(\dvar 0)\cdef n$ and $\dlift\gval n\,(\dvar{\dsucc i})\cdef \gval\,\dvar i$.

We give a first object-level representation of Diophantine relations as  
members of type $(\Var\cfun\nat)\cfun\Prop$ mapping valuations of variables to propositions. 
Moreover, they must be identical to $\abst\gval{\sem A\gval}$
for some Diophantine formula $A$, up to propositional extensionality. 
We give an informative
content to this sub-type of $(\Var\cfun\nat)\cfun\Prop$ to be able to do some computations
with the witness $A:\dform$ of Diophantineness,
typically 
when moving to another formal representation
like elementary Diophantine constraints in Section~\ref{dio_elem_sect}.

\begin{definition}[][dio_rel]
\label{dio_rel_def}
The class of \emph{Diophantine relations} $\diorel:\bigl((\Var\cfun\nat)\cfun\Prop\bigr)\cfun\Type$
is the informative sub-type defined for $R:(\Var\cfun\nat)\cfun\Prop$ by
\[\diorel\,R \,\defeq \sum {A:\dform},\bigl(\forall \gval, \sem A\gval\toot R~\gval\bigr).\]
\end{definition}

Note that $\sum$ denotes type-theoretic dependent pairs.
Hence an inhabitant $w$ of\ $\diorel\,R$ is a (dependent) pair
$(A,H_A)$ where $A=\pi_1(w)$ is a Diophantine formula and $H_A=\pi_2(w)$ a proof that
$\sem A{(\cdot)}$ and $R$ are extensionally equivalent\rlap.\footnote{For the efficiency
of computations, we usually hide the purely logical part $H_A$ into an \emph{opaque proof term.}}\enspace
The informal notion of Diophantine shape will correspond to the dynamically growing collection
of established closure properties of the class $\diorel$ of Diophantine relations.
Definition~\ref{dio_rel_def} of the sub-type $\diorel$ already entails
that $\diorel$ is closed under conjunction, disjunction, existential quantification
and renaming. 

\newcommand{\compose}{\mathbin\circ}

\begin{proposition}
\label{diorel_prop1}
Let\/ $R,S:(\Var\cfun\nat)\cfun\Prop$ be relations,\/
$T:\nat\cfun(\Var\cfun\nat)\cfun\Prop$ be a relation with a singled out variable, and\/
$\gren:\Var\cfun\Var$ be a renaming function. We have the maps:
$$\begin{array}{l@{\qquad}l}
  \text{1.}~\coqlink[dio_rel_conj]{\diorel\,R\cfun\diorel\,S\cfun\diorel(\abst\gval{R\,\gval\lconj S\,\gval})};
& \text{4.}~\coqlink[dio_rel_equiv]{(\forall\gval, S\,\gval\toot R\,\gval)\cfun\diorel\,R\cfun\diorel\,S};\\
  \text{2.}~\coqlink[dio_rel_disj]{\diorel\,R\cfun\diorel\,S\cfun\diorel(\abst\gval{R\,\gval\ldisj S\,\gval})};
& \text{5.}~\coqlink[dio_rel_ren]{\diorel\,R\cfun\diorel\bigl(\abst\gval{R\,(\gval\compose\gren)}\bigr)}.\\[.3ex]
\multicolumn{2}{l}
  {\text{3.}~\coqlink[dio_rel_exst]{\diorel\bigl(\abst \gval{T\,(\gval\,\dvar 0)\,(\dcut\gval)}\bigr) \cfun \diorel\bigl(\abst \gval{\exists u,T\,u~\gval}\bigr)};}
\end{array}$$
\end{proposition}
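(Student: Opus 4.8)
The plan is to prove all five items by unfolding Definition~\ref{dio_rel_def}, which presents an inhabitant of $\diorel\,R$ as a pair $(A,H)$ with $A:\dform$ a Diophantine formula and $H:\forall\gval,\sem A\gval\toot R\,\gval$; in each case I will produce a new such pair, transporting the logical obligation across the defining clauses of the semantics $\sem{\cdot}{\cdot}$. For items~1 and~2, given $(A,H_A):\diorel\,R$ and $(B,H_B):\diorel\,S$, I would take as witnesses $A\dconj B$ and $A\ddisj B$ respectively; since $\sem{A\dconj B}\gval=\sem A\gval\lconj\sem B\gval$ and $\sem{A\ddisj B}\gval=\sem A\gval\ldisj\sem B\gval$ hold by definition, the required equivalences follow pointwise from $H_A\,\gval$ and $H_B\,\gval$ by pure propositional reasoning. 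For item~4 the formula stays the same: from $e:\forall\gval,S\,\gval\toot R\,\gval$ and $(A,H_A):\diorel\,R$, the pair consisting of $A$ together with the proof composing $H_A\,\gval:\sem A\gval\toot R\,\gval$ with $e\,\gval:S\,\gval\toot R\,\gval$ at each $\gval$ inhabits $\diorel\,S$.

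For item~3, given $(A,H_A)$ witnessing $\diorel\bigl(\abst\gval{T\,(\gval\,\dvar 0)\,(\dcut\gval)}\bigr)$, I would take $\dexists A$ as the witness. Unfolding $\sem{\dexists A}\gval=\exists n:\nat,\sem A{\dlift\gval n}$ and instantiating $H_A$ at the extended valuation $\dlift\gval n$ turns the body into $T\,\bigl((\dlift\gval n)\,\dvar 0\bigr)\,\bigl(\dcut(\dlift\gval n)\bigr)$; here $(\dlift\gval n)\,\dvar 0=n$, and $\dcut(\dlift\gval n)$ coincides with $\gval$ pointwise (and in fact definitionally, with the De Bruijn conventions fixed above), so $\sem A{\dlift\gval n}\toot T\,n\,\gval$, and therefore $\sem{\dexists A}\gval\toot\exists u,T\,u\,\gval$, which is the goal.

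The only item calling for a genuine syntactic construction is item~5. I would first define, by recursion on formul\ae{}, a renaming operation $A\langle\gren\rangle$ that rewrites the variables of every atom through $\gren$ and descends under $\dexists$ via the \emph{lifted} renaming $\tilde{\gren}$ — the one that fixes the bound variable $\dvar 0$ and maps $\dvar{\dsucc i}$ to the successor of $\gren\,\dvar i$. The key lemma is then the renaming lemma $\forall A\,\gval,\ \sem{A\langle\gren\rangle}\gval\toot\sem A{(\gval\compose\gren)}$, proved by induction on $A$: the atomic and the $\dconj,\ddisj$ cases are routine, whereas the $\dexists$ case reduces, after unfolding the semantics, to checking that $(\dlift\gval n)\compose\tilde{\gren}$ agrees pointwise with $\dlift{(\gval\compose\gren)}n$, which I would feed into an auxiliary \emph{coincidence lemma} asserting that $\sem A{(\cdot)}$ is invariant, up to $\toot$, under valuations that agree on all variables. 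Item~5 then follows by taking $A\langle\gren\rangle$ as witness and composing the renaming lemma with $H_A\,(\gval\compose\gren)$. I expect the main obstacle to be precisely this last item: getting the De Bruijn bookkeeping right so that $\tilde{\gren}$ commutes with the valuation extension under $\dexists$, and in particular handling the resulting pointwise equality of valuations through the explicit coincidence lemma rather than relying on a definitional equality that no longer holds in this case. All the other items are immediate from the definitions of the Diophantine-logic syntax and its semantics.
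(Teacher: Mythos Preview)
Your proposal is correct and matches the paper's (implicit) approach: the paper does not spell out a proof at all, merely remarking that Definition~\ref{dio_rel_def} ``already entails'' these closure properties, so what you wrote is exactly the expected unfolding. Your identification of item~5 as the only case requiring genuine work---via a syntactic renaming on formul\ae{} with the lifted renaming under $\dexists$, together with a coincidence lemma for $\sem{\cdot}{\cdot}$---is accurate and is indeed how the Coq development proceeds.
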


Understood as Diophantine shapes, maps number~1--3 recognise 
the logical connectives  of conjunction, disjunction 
and existential quantification as newly allowed shapes.
Map number~5 allows renaming (free) variables hence Diophantine
shapes are closed under renaming\rlap.\footnote{$\gval\compose\gren:\Var\to\nat$
denotes the composition $\abst{\dvar i}{\gval\,\dvar{\gren_i}}$ of the valuation 
$\gval:\Var\to\nat$ with the renaming $\gren:\Var\to\Var$.}\enspace 
Map number~4 provides a way to replace the goal $\diorel\,S$ with
$\diorel\,R$ once a proof that they are logically equivalent is
established. Hence, if $S$ cannot be analysed because it does
not currently have a Diophantine shape, 
it can still be replaced by an equivalent 
relation $R$, hopefully better behaved; 
see e.g.\ the proof of Proposition~\ref{prop:diofun_basic}.

Working with Diophantine relations already gives a satisfying implementation
of Diophantine shapes but it is sometimes
more convenient to manipulate Diophantine functions instead of
relations so we define the following specialization.

\begin{definition}[][dio_fun]
\label{dio_fun_def}
The class of \emph{Diophantine functions} $\diofun:\bigl((\Var\cfun\nat)\cfun\nat\bigr)\cfun\Type$
is the informative sub-type defined for $f:(\Var\cfun\nat)\cfun\nat$ by\/
$\diofun\,f \defeq \diorel\bigl(\abst\gval{\gval\,\dvar 0 = f (\dcut\gval)}\bigr)$.
\end{definition}

We extend Diophantine shapes with polynomials expressions and equations between them.
To illustrate the mechanics behind Diophantine shape recognition, for once we
give a detailed account of the steps that are usually performed automatically in this 
framework. 

\begin{proposition}
\label{prop:diofun_basic}
Let\/ $\dvar i:\Var$, $n:\nat$, $\gren:\Var\cfun\Var$ and\/ $f,g:(\Var\cfun\nat)\cfun\nat$.
We have the maps:
$$\begin{array}{l@{\qquad}l}
  \text{1.}~\coqlink[dio_fun_var]{\diofun\,(\abst\gval{\gval\,\dvar i})};
& \text{4.}~\coqlink[dio_fun_plus]{\diofun\,f\to \diofun\, g\to \diofun\,(\abst\gval{f\,\gval\plus g\,\gval})};\\
  \text{2.}~\coqlink[dio_fun_cst]{\diofun\,(\abst\gval n)};
& \text{5.}~\coqlink[dio_fun_mult]{\diofun\,f\to \diofun\, g\to \diofun\,(\abst\gval{f\,\gval\mult g\,\gval})};\\
  \text{3.}~\coqlink[dio_fun_ren]{\diofun\,f\to \diofun\bigl(\abst\gval{f(\gval\compose\gren)}\bigr)};
& \text{6.}~\coqlink[dio_rel_fun_eq]{\diofun\,f\to \diofun\, g\to \diorel\,(\abst\gval{f\,\gval = g\,\gval})}.
\end{array}$$
\end{proposition}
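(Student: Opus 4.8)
The plan is to build each of the six maps by exhibiting an explicit Diophantine formula witness and invoking the closure operations from Proposition~\ref{diorel_prop1}. Recall that $\diofun\,f$ unfolds to $\diorel(\abst\gval{\gval\,\dvar 0 = f(\dcut\gval)})$, so all six items are ultimately statements about $\diorel$, and the whole proof reduces to producing atomic Diophantine formul{\ae} and gluing them with conjunction, disjunction, existential quantification and renaming.

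For the three base cases the witnesses are immediate. For item~1, the relation $\abst\gval{\gval\,\dvar 0 = (\abst\gval{\gval\,\dvar i})(\dcut\gval)}$ simplifies to $\abst\gval{\gval\,\dvar 0 = \gval\,\dvar{\dsucc i}}$ (since $\dcut\gval$ shifts indices by one), which is exactly the semantics of the atomic formula $\dvar 0\dequal\dvar{\dsucc i}$; if one prefers, one uses a renaming via map~5 of Proposition~\ref{diorel_prop1} together with the atomic equality formula $\dvar i\dequal\dvar j$. For item~2, the atomic formula $\dvar 0\dequal n$ has semantics $\gval\,\dvar 0 = n$, which matches $\diofun\,(\abst\gval n)$. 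For item~3, closure under renaming is just map~5 of Proposition~\ref{diorel_prop1} applied after adjusting the singled-out variable; one takes the renaming that fixes $\dvar 0$ and applies $\gren$ shifted by one to the rest, so that $(\gval\compose\gren')\,\dvar 0 = \gval\,\dvar 0$ and the tail reproduces $f(\gval\compose\gren)$.

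The interesting cases are~4,~5 and~6, which introduce auxiliary existentially-quantified variables to name the outputs of $f$ and $g$. For item~6, given $\diofun\,f$ and $\diofun\,g$, I want $\diorel(\abst\gval{f\,\gval = g\,\gval})$. The idea is: $f\,\gval = g\,\gval$ holds iff there exists $u$ with $u = f\,\gval$ and $u = g\,\gval$. Pushing $u$ under an existential (map~3 of Proposition~\ref{diorel_prop1}) and reindexing with $\dcut{}$, the relation $u = f\,\gval$ becomes an instance of $\diofun\,f$ applied to the extended valuation $\dlift\gval u$ (whose $\dvar 0$-value is $u$), and similarly for $g$; their conjunction is handled by map~1. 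So item~6 is: existential quantification over a conjunction of (renamed) copies of the $\diofun$ hypotheses. Items~4 and~5 follow the same recipe but with one more layer: $f\,\gval\plus g\,\gval$ (resp.\ $f\,\gval\mult g\,\gval$) equals $\gval\,\dvar 0$ iff there exist $a,b$ with $a = f\,\gval$, $b = g\,\gval$ and $\dvar 0\dequal a\dplus b$ (resp.\ $\dvar 0\dequal a\dmult b$), the last being an atomic formula; one nests two existentials, takes the three-fold conjunction, and discharges the $a = f\,\gval$ and $b = g\,\gval$ parts by renamed instances of the hypotheses.

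The main obstacle is purely bookkeeping: every time one goes under a $\dexists$, the De Bruijn indices of the free variables shift, so the instances of $\diofun\,f$ and $\diofun\,g$ must be pre-composed with the appropriate shift renaming (via map~5) and the singled-out variable must be repositioned before conjoining. Getting these index shifts exactly right — so that the $\dvar 0$ of each hypothesis lines up with the freshly introduced existential witness and the remaining context is untouched — is the delicate part, and in the mechanisation it is precisely what the Diophantine-shape automation is built to handle; on paper it is routine but error-prone, so I would state the shift renamings explicitly once and reuse them. No genuinely new mathematical idea is needed beyond Proposition~\ref{diorel_prop1} and the four atomic formula constructors.
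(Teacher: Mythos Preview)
Your proposal is correct and matches the paper's approach: introduce existentially-quantified auxiliaries to name the outputs of $f$ and $g$, discharge the resulting conjuncts via the atomic formula constructors and renamed instances of the $\diofun$ hypotheses, and acknowledge that the real work is the De~Bruijn bookkeeping when pushing under $\dexists$. The only cosmetic difference is in item~6, where the paper uses two auxiliaries ($\exists a\,b,\,a=b\land a=f\,\gval\land b=g\,\gval$) and then needs a renaming step, whereas your single-existential version ($\exists u,\,u=f\,\gval\land u=g\,\gval$) lands directly on the unshifted hypotheses and is marginally simpler.
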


\begin{proof}
Items~1, 2 and map~3 are for projections, constants and renaming functions respectively. 
The non-trivial cases are for $\plus$, $\mult$ and $=$. 
We cover the cases of $+$ and then $=$ in details to illustrate how the representations
of Diophantine relations/functions behave in proof scripts.
In particular, we prove the results reasoning backwards (as is usually done in Coq),
applying established theorems to convert a given proof goal into (hopefully)
simpler proof goals.

\smallskip
  
For the goal $\diofun\,(\abst\gval{f\,\gval\plus g\,\gval})$, unfolding the assumptions $\diofun\,f$
and $\diofun\,g$ we have
\begin{equation}
\label{eq:proof_equiv0}
\diorel\bigl(\abst\gval{\gval\,\dvar 0 = f\,(\dcut\gval)}\bigr)%
\qquad\text{and}\qquad 
\diorel\bigl(\abst\gval{\gval\,\dvar 0 = g\,(\dcut\gval)}\bigr)%
\end{equation}
and we want to establish $\diofun\,(\abst\gval{f\,\gval\plus g\,\gval})$, i.e.\
\begin{equation}
\label{eq:proof_equiv1}
\diorel\bigl(\abst\gval{\gval\,\dvar 0 = f\,(\dcut\gval)+g\,(\dcut\gval)}\bigr).
\end{equation}
By map~4 of Proposition~\ref{diorel_prop1}, we replace Eq.~\eqref{eq:proof_equiv1} with
the (obviously) equivalent goal
\begin{equation}
\label{eq:proof_equiv}
\diorel\bigl(\abst\gval{\exists a\exists b,\, \gval\,\dvar 0 = a+b\lconj a = f\,(\dcuto\gval1)\lconj b = g\,(\dcuto\gval1)}\bigr)
\end{equation}
and we then apply map~3 of Proposition~\ref{diorel_prop1} twice to get the goal
$$\diorel\bigl(\abst\gval{\gval\,\dvar 2 = \gval\,\dvar 1+\gval\,\dvar 0\lconj \gval\,\dvar 1 = f\,(\dcuto\gval3)\lconj \gval\,\dvar 0 = g\,(\dcuto\gval3)}\bigr).$$
We now apply twice map~1 of Proposition~\ref{diorel_prop1} and we get the three following sub-goals:
$$\diorel(\abst\gval{\gval\,\dvar 2 = \gval\,\dvar 1+\gval\,\dvar 0})\quad
\diorel\bigl(\abst\gval{\gval\,\dvar 1 = f\,(\dcuto\gval3)}\bigr)\quad
\diorel\bigl(\abst\gval{\gval\,\dvar 0 = g\,(\dcuto\gval3)}\bigr)$$
\begin{enumerate}
\item For the first sub-goal, we use the formula $\dvar 2\dequal\dvar 1\dplus\dvar 0:\dform$
      as object-level witness;
\item for the second sub-goal, 
      we consider the renaming function $\gren_1:\Var\to\Var$ defined by $\gren_1(\dvar 0)\defeq \dvar 1$ 
      and $\gren_1(\dvar{1+i})\defeq \dvar{3+i}$ and derive 
      $\diorel\bigl(\abst\gval{\gval\,\dvar 1 = f\,(\dcuto\gval3)}\bigr)$ by applying map~5 of Proposition~\ref{diorel_prop1} 
      to the hypothesis $\diorel\bigl(\abst\gval{\gval\,\dvar 0 = f\,(\dcut\gval)}\bigr)$
      in Eqs.~\eqref{eq:proof_equiv0};
\item the third and last sub-goal $\diorel\bigl(\abst\gval{\gval\,\dvar 0 = g\,(\dcuto\gval3)}\bigr)$ 
      is solved similarly with the renaming function $\gren_0:\Var\to\Var$
      defined by $\gren_0(\dvar 0)\defeq \dvar 0$ and $\gren_0(\dvar{1+i})\defeq \dvar{3+i}$.
\end{enumerate}

\smallskip

We now deal with map~6, hence with the goal $\diorel\,(\abst\gval{f\,\gval = g\,\gval})$ under the same previous assumptions $\diofun\,f$
and $\diofun~g$, i.e.\ Eqs.~\eqref{eq:proof_equiv0}. 
We proceed in a somewhat less detailed explanation. We replace the goal by 
$\diorel\,(\abst\gval{\exists a\exists b,\, a = b \lconj a = f\,\gval \lconj b = g\,\gval})$
which is equivalent and then,
after applying the maps of Proposition~\ref{diorel_prop1}, we get three
sub-goals $\diorel\,(\abst\gval{\gval\,\dvar 1=\gval\,\dvar 0})$,
$\diorel\bigl(\abst\gval{\gval\,\dvar 1=f\,(\dcuto\gval2)}\bigr)$
and  $\diorel\bigl(\abst\gval{\gval\,\dvar 0=g\,(\dcuto\gval2)}\bigr)$.
In turn, the first sub-goal corresponds to the witness $\dvar 1\dequal\dvar 0:\dform$, while the second and third 
sub-goals follow from Eqs.~\eqref{eq:proof_equiv0} respectively using straightforward renaming functions.
\end{proof}

On paper these proofs look somehow complicated by the need to infer the renaming 
functions but from a \emph{mechanisation point of view,} Coq's unification algorithm automatically 
solves such goals.
Provided we populate the \emph{hint database} with enough admissible shapes,
we can automate the analysis of the meta-level syntax to establish Diophatineness
and reflect a meta-level expression of Diophantine shape into the
corresponding object-level witness of type $\dform$ together with the proof 
that it is an appropriate witness, hence packed into the types $\diorel\,R$ 
for relational expressions or $\diofun\,f$ for functional expressions.

\newcommand{\dioauto}{\ensuremath{\ctactic{dio}~\ctactic{auto}}}

\smallskip

With Propositions~\ref{diorel_prop1} and~\ref{prop:diofun_basic}, we
populate the hint database for relations with the shapes conjunction, disjunction,
existential quantification, renaming and identity between two functional expressions, 
and for functions, we add the shapes of projections,
constants, addition, multiplication and renaming.
In our implementation, the analysis of Diophantine shapes is performed by
the automatic \coqlink[dio_rel_auto]{\dioauto} tactic.
With such an automated approach, the remaining (and sometimes difficult) work 
occurs when we apply map~4 of Proposition~\ref{diorel_prop1}, that is, we have 
to find an equivalent expression of Diophantine shape, like in Eq.~\eqref{eq:proof_equiv}
and to prove it is indeed equivalent to Eq.~\eqref{eq:proof_equiv1}, which, unlike that
specific example, might be non-trivial; see e.g.\ the discussion in
Section~\ref{discuss_matiya_sect}.

\begin{proposition}
\label{diorel_prop}
Let\/ 
$f,g :(\Var\cfun\nat)\cfun\nat$. We have the maps:
$$\begin{array}{l@{\qquad\qquad}l}
  \text{1.}~\coqlink[dio_rel_True]{\diorel\,(\abst\gval\True)};
& \text{3.}~\coqlink[dio_rel_le]{\diofun\,f\to \diofun\, g\to \diorel\,(\abst\gval{f\,\gval \leq g\,\gval})};\\
  \text{2.}~\coqlink[dio_rel_False]{\diorel\,(\abst\gval\False)};
& \text{4.}~\coqlink[dio_rel_lt]{\diofun\,f\to \diofun\, g\to \diorel\,(\abst\gval{f\,\gval < g\,\gval})};\\
& \text{5.}~\coqlink[dio_rel_neq]{\diofun\,f\to \diofun\, g\to \diorel\,(\abst\gval{f\,\gval \neq g\,\gval})}.
\end{array}$$
\end{proposition}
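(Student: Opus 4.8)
The plan is to obtain all five relations from the closure properties already recorded in Propositions~\ref{diorel_prop1} and~\ref{prop:diofun_basic}, in each case exhibiting an equivalent relation of recognised Diophantine shape and then appealing to map~4 of Proposition~\ref{diorel_prop1} to transport Diophantineness back along the equivalence.

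For items~1 and~2, the constant functions $\abst\gval 0$ and $\abst\gval 1$ are Diophantine by map~2 of Proposition~\ref{prop:diofun_basic}, so by map~6 of that proposition both $\abst\gval{0 = 0}$ and $\abst\gval{0 = 1}$ are Diophantine relations. The former is extensionally equivalent to $\abst\gval\True$ and the latter to $\abst\gval\False$, and map~4 of Proposition~\ref{diorel_prop1} finishes these cases.

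The core of the argument is item~3. Given $\diofun\,f$ and $\diofun\,g$, I use the equivalence $f\,\gval \leq g\,\gval \toot \exists d,\, g\,\gval = f\,\gval + d$, so by map~4 of Proposition~\ref{diorel_prop1} it suffices to prove $\diorel\bigl(\abst\gval{\exists d,\, g\,\gval = f\,\gval + d}\bigr)$. Applying map~3 of Proposition~\ref{diorel_prop1} reduces this to $\diorel\bigl(\abst\gval{g\,(\dcut\gval) = f\,(\dcut\gval) + \gval\,\dvar 0}\bigr)$, where the quantified variable has become $\dvar 0$ and the ambient context has been shifted by one. Here $\abst\gval{f\,(\dcut\gval)}$ and $\abst\gval{g\,(\dcut\gval)}$ are Diophantine functions by map~3 of Proposition~\ref{prop:diofun_basic} (renaming along $\abst{\dvar i}{\dvar{1+i}}$), $\abst\gval{\gval\,\dvar 0}$ is Diophantine by map~1, the sum $\abst\gval{f\,(\dcut\gval) + \gval\,\dvar 0}$ is Diophantine by map~4, and so the equation between the two Diophantine functions $\abst\gval{g\,(\dcut\gval)}$ and $\abst\gval{f\,(\dcut\gval) + \gval\,\dvar 0}$ is a Diophantine relation by map~6 of Proposition~\ref{prop:diofun_basic}.

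Items~4 and~5 then reduce to item~3. For item~4 I rewrite along $f\,\gval < g\,\gval \toot (f\,\gval + 1) \leq g\,\gval$; since $\abst\gval{f\,\gval + 1}$ is Diophantine ($f$ together with the constant $1$, via maps~2 and~4 of Proposition~\ref{prop:diofun_basic}), item~3 applies. For item~5 I rewrite along $f\,\gval \neq g\,\gval \toot f\,\gval < g\,\gval \ldisj g\,\gval < f\,\gval$ and combine two instances of item~4 using map~2 (disjunction) of Proposition~\ref{diorel_prop1}. The only delicate point is the De Bruijn bookkeeping in item~3: one must check that, after map~3 of Proposition~\ref{diorel_prop1} has renumbered the context, the shifted functions and the projection $\gval\,\dvar 0$ recombine into exactly the intended equation. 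On paper this is routine but has to be tracked with care; in the mechanisation it is discharged automatically by unification, as in the worked example following Proposition~\ref{prop:diofun_basic}.
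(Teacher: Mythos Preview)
Your proposal is correct and follows essentially the same approach as the paper: replace each relation by an equivalent one built from addition, equality, existential quantification and disjunction, then appeal to the closure properties of Propositions~\ref{diorel_prop1} and~\ref{prop:diofun_basic}. The paper's sketch treats $<$ directly via $\exists a,\,1+a+f\,\gval=g\,\gval$ whereas you factor through $\leq$, but this is an inessential reordering of the same argument.
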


\begin{proof}
For e.g.\ $\diorel\,(\abst\gval{\,f\,\gval < g\,\gval})$, 
we shift to the equivalent $\diorel\,(\abst\gval{\exists a,1+a+f\,\gval = g\,\gval})$
using map~4 of Proposition~\ref{diorel_prop1} and finish the proof calling \dioauto\rlap.%
\footnote{Notice that the actual implemented proofs might differ slightly because we
sometimes optimize the shape of expressions for smaller witnesses,
especially for these basic shapes which pop up over and over again.}
\end{proof}

Again, we populate the hint database with the new shapes of 
Proposition~\ref{diorel_prop}.
We follow up with the slightly more complex example of the ``does not divide''
relation defined by $u\ndivides v\cdef\lneg(\exists k, v=k\times u)$. At this point, 
this expression cannot be recognized as a Diophantine shape 
because it contains a negation.

\begin{proposition}[][dio_rel_ndivides]
\label{ndiv_dio_prop}
$\forall f\,g:(\Var\cfun\nat)\cfun\nat,\,\diofun\,f\cfun\diofun\,g\cfun\diorel\,(\abst\gval{f\,\gval \ndivides g\,\gval})$.
\end{proposition}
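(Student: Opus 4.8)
The plan is to get rid of the negation hidden in $u\ndivides v\cdef\lneg(\exists k,\,v=k\mult u)$ by reformulating ``$u$ does not divide $v$'' with an explicit quotient and a necessarily non-zero remainder, an expression that is manifestly of Diophantine shape, and then let \dioauto do the rest. Concretely, under the hypotheses $\diofun\,f$ and $\diofun\,g$, I would apply map~4 of Proposition~\ref{diorel_prop1} to replace the goal $\diorel\,(\abst\gval{f\,\gval\ndivides g\,\gval})$ by
\[\diorel\Bigl(\abst\gval{\exists q\,\exists r,\; g\,\gval = q\mult f\,\gval\plus r\;\lconj\; 0 < r\;\lconj\;(r < f\,\gval\ldisj f\,\gval = 0)}\Bigr).\]
Choosing this single equivalent expression, which has to work both for a positive divisor and for the degenerate case of a zero divisor, is the only non-routine step.

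Then I would prove the required equivalence $\forall\gval,\; f\,\gval\ndivides g\,\gval\toot(\dots)$ by a case analysis on whether $f\,\gval=0$. If $f\,\gval=0$, then $0\divides v$ holds iff $v=0$, so $0\ndivides v$ iff $0<v$; on the right-hand side the disjunct $f\,\gval=0$ holds trivially, so the body collapses to $\exists r,\,v=r\lconj 0<r$, i.e.\ $0<v$ as well. If $f\,\gval>0$, Euclidean division writes $v=q\mult u\plus r$ with $0\le r<u$ uniquely; then $u\divides v$ iff $r=0$, hence $u\ndivides v$ iff $0<r$, which is exactly the body (the disjunct $f\,\gval=0$ being false, it reduces to $v=q\mult u\plus r\lconj 0<r\lconj r<u$). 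Conversely, any witnesses $q,r$ with $0<r<u$ force $v$ to leave a non-zero remainder modulo $u$, so $u\ndivides v$.

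Finally, the replacement relation is built only from existential quantification, conjunction, disjunction, multiplication, addition, equality to a constant, equality between two functional expressions, and the strict order (map~4 of Proposition~\ref{diorel_prop})~--~all shapes already populated in the hint database, with the occurrences of $f$ and $g$ underneath binders being handled by the renaming closure (map~3 of Proposition~\ref{prop:diofun_basic}) available from $\diofun\,f$ and $\diofun\,g$. Hence \dioauto recognises its Diophantine shape and discharges the goal. I expect no genuine difficulty here: the only thing to get right is the corner case $f\,\gval=0$, which is precisely what the extra disjunct $f\,\gval=0$ in the reformulation takes care of.
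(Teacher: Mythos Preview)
Your proposal is correct and follows essentially the same approach as the paper: both eliminate the negation in $u\ndivides v$ via Euclidean division, exhibiting a strictly positive remainder, with a separate treatment of the degenerate case $u=0$, and then invoke \dioauto. The only cosmetic difference is that the paper places the case split at the top level, using the equivalence $u\ndivides v\toot\bigl(u=0\lconj v\neq 0\bigr)\ldisj\bigl(\exists a\,b,\,v=a\mult u\plus b\lconj 0<b<u\bigr)$, whereas you fold the disjunction inside the existential; both formulations are equivalent and equally recognisable by the hint database.
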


\begin{proof}
However, using  Euclidean division, we (easily) prove the equivalence
\begin{equation*}
u\ndivides v\toot (u=0\lconj v\neq 0 \ldisj \exists a\,b,\, v = a\mult u\plus b\lconj 0 < b < u)
\end{equation*}
and this new expression can now be recognised as a Diophantine shape.
Using this equivalence in combination with map~4 of Proposition~\ref{diorel_prop1},
we replace the goal $\diorel\,(\abst\gval{f\,\gval\ndivides g\,\gval})$ with
$$\diorel\,(\abst\gval{f\,\gval=0\lconj g\,\gval\neq 0 
\ldisj \exists a\,b,\, g\,\gval = a\mult f\,\gval\plus b\lconj 0 < b \lconj b < f\,\gval})$$
and then let the magic of \dioauto\ unfold. 
\end{proof}

Again, once established, we can add the map
$\diofun\,f\cfun\diofun\,g\cfun \diorel\,(\abst\gval{f\,\gval\ndivides g\,\gval})$
in the Diophantine hint database so that later encountered proof goals $\diorel\,(\abst\gval{f\,\gval\ndivides g\,\gval})$ 
can be immediately solved by \dioauto.

\smallskip

In this above described approach, the recovery of the 
object-level witness $A$ of Definition~\ref{dio_rel_def} from meta-level syntax 
is automatic and hidden by the use of the \dioauto\ tactic associated 
with the ever growing hint database. This allows us to proceed as in e.g.\
Matiyasevich papers where he usually
transforms a relation into an equivalent Diophantine shape,
accumulating more and more Diophantine shapes on the way.
Instead of having to manipulate object-level witnesses by hand,
obfuscating sometimes simple to understand proofs, we use Diophantine
shapes as the cornerstone of the
faithful implementation of existing pen and paper scripts.

\subsection{Exponentiation and Bounded Universal Quantification}

\label{sec:expo_buq}

For now, we introduce the \emph{elimination of the exponential relation} 
and then of \emph{bounded universal quantification} as 
black boxes expressed in the framework of Diophantine shapes, i.e.\
new closure properties of the classes $\diofun/\diorel$. 

While we do contribute implementations for
both of these hard results, on purpose, 
we choose to avoid the detailed presentation 
of these mechanised proofs for several reasons: 
\begin{itemize}
\item first of all, there are already
fully detailed pen and paper accounts of these results and
we implemented two of these somewhat faithfully; 
\item then, in our modular approach, the proof of these admissibility results 
can be ignored without hindering the understanding of the overall
structure of the main results, e.g.\ \Hten;
\item finally, already the above cited pen and paper proofs assume some
not so standard results in arithmetic like e.g.\ Lucas's theorem,
and we favoured giving accounts of those assumed theorems instead
of simply reproducing the rest of the existing arguments. 
\end{itemize}
Hence, for the moment, we postpone remarks and
discussions about the Diophantineness of the exponential 
function and the Diophantine admissibility of  
bounded universal quantification to Section~\ref{discuss_matiya_sect}.

\setCoqFilename{H10.Dio.dio_expo}
\begin{theorem}[Exponential][dio_fun_expo]
\label{expo_dio_thm}
$\forall f\,g:(\Var\to\nat)\to\nat,\, 
\diofun\,f\cfun\diofun\,g\cfun\diofun\bigl(\abst\gval{(f\,\gval)^{g\,\gval}}\bigr)$.
\end{theorem}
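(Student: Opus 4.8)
The plan is to follow the classical Pell-equation route of Davis, Robinson and Matiyasevich, but recast entirely in the language of Diophantine shapes so that every ``is Diophantine'' obligation is discharged by \dioauto and the ever-growing hint database. First I would reduce the functional statement to a ternary \emph{relational} one, exactly as is done for $+$ and $\times$ in the proof of Proposition~\ref{prop:diofun_basic}: unfold $\diofun$ and, using map~4 of Proposition~\ref{diorel_prop1} together with the hypotheses $\diofun\,f$, $\diofun\,g$ and maps~1, 3, 5, peel off fresh existentially quantified copies of $f\,\gval$ and $g\,\gval$. It then suffices to prove the single shape $\diorel\bigl(\abst\gval{\gval\,\dvar 2 = (\gval\,\dvar 1)^{\gval\,\dvar 0}}\bigr)$, i.e.\ that the graph $\{(c,a,n)\mid c = a^n\}$ is Diophantine; the functional conclusion follows by feeding this back through map~4 and closing with \dioauto. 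Using closure under disjunction (map~2 of Proposition~\ref{diorel_prop1}) I would immediately split off the degenerate strata $a = 0$, $a = 1$ and $n = 0$, on each of which $c$ is plainly either $0$ or $1$, leaving only the genuine case $a \geq 2 \land n \geq 1$.

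The heart of the proof is a standalone lemma, proved once and registered in the hint database: for $b\geq 2$ the relation ``$y = \psi_b(k)$'', where $\psi_b(0) = 0$, $\psi_b(1) = 1$ and $\psi_b(k+2) = b\,\psi_b(k+1) - \psi_b(k)$, has a Diophantine shape. Concretely one exhibits a finite polynomial system --- the Matiyasevich--Robinson characterisation --- equivalent to $y = \psi_b(k)$, and then closes with map~4 and \dioauto. Proving that characterisation correct requires the usual toolbox of a dozen elementary but delicate arithmetic lemmas on $\psi_b$: strict monotonicity in $k$, the Pell identity $x^2 - (b^2 - 4)y^2 = 4$ tying $\psi_b$ to a Pell equation, the divisibility criterion $\psi_b(i) \divides \psi_b(j) \iff i \divides j$, the congruences $\modeq{\psi_b(k)}{\psi_{b'}(k)}{b - b'}$ and $\modeq{\psi_b(k)}{k}{b-2}$, the ``$\psi_b(j)\equiv\pm\psi_b(i)\pmod{\psi_b(k)}$ with sign governed by $j \bmod 2k$'' relations, and the ``step up the modulus'' packaging that introduces a second Pell solution for a much larger parameter $B$ with $\modeq{B}{b}{2\psi_b(k)}$ in order to force the otherwise unconstrained index to be exactly $k$.

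With $\psi$ available as a shape, I would bridge to $a^n$ by a congruence-and-bounds argument. For $a\geq 2$, $n\geq 1$, pick $b$ larger than an explicit polynomial bound and a coprime modulus $p$ with $\modeq{a b}{a^2 + 1}{p}$; then, modulo $p$, the roots of the characteristic polynomial $X^2 - bX + 1$ of the $\psi$-recurrence are $a$ and $a^{-1}$, so $\modeq{\psi_b(n+1) - a\,\psi_b(n)}{a^{-n}}{p}$, and $c = a^n$ is pinned down as the unique residue with $\modeq{c\,(\psi_b(n+1) - a\,\psi_b(n))}{1}{p}$ and $c < p$, once $p > a^n$ and the size bounds on $b$ are also imposed --- all polynomial constraints in the existentially quantified data. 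Re-assembling the strata with disjunction, discharging every shape obligation with \dioauto, and finally rewriting $\diorel$ back into $\diofun$ via map~4 completes the argument.

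I expect the dominant difficulty to be the Pell number theory itself: mechanising the interlocking congruence and divisibility lemmas, and --- in Coq's $\nat$ with truncated subtraction --- keeping the web of size inequalities consistent so that each congruence can legitimately be upgraded to an equality. By contrast, once the characterisation lemma and the bridge identity sit in the hint database, the shape-recognition steps become essentially automatic; this is precisely why we present the theorem as a black box and defer its discussion to Section~\ref{discuss_matiya_sect}.
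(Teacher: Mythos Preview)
Your plan is correct and tracks the paper's mechanisation closely: both reduce to the ternary relation $c=a^n$, establish Diophantineness of the Pell/Lucas sequence $\alpha_b$ (your $\psi_b$, same recurrence $b\psi_b(k{+}1)-\psi_b(k)$) via the standard divisibility/congruence toolkit, and bridge to the exponential by an eigenvalue argument modulo a well-chosen $m$. The only cosmetic deviations are that the paper, following~\cite{Matiyasevich2000}, uses the quadratic form $x^2-bxy+y^2=1$ rather than your $x^2-(b^2{-}4)y^2=4$, and takes $m\defeq bq-q^2-1$ to obtain $\modeq{q\alpha_b(n)-\alpha_b(n-1)}{q^n}{m}$ directly (avoiding your inversion of $a^{-n}$, which in $\nat$ is slightly more delicate); otherwise the argument and its anticipated difficulties coincide.
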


To prove it, one needs a meta-level Diophantine shape for
the exponential relation, \emph{the proof of which is nothing short 
of extraordinary.} This landmark result is due to Matiyasevich~\cite{matijasevic1970enumerable},
but we have implemented the shorter and more up-to-date proof 
of~\cite{Matiyasevich2000}.

\setCoqFilename{H10.Dio.dio_bounded}
\begin{theorem}[Bounded Univ.\ Quantification][dio_rel_fall_lt]
For\/ $f:(\Var\cfun\nat)\cfun\nat$ and\/ $T:\nat\cfun(\Var\cfun\nat)\cfun\Prop$,
we have\/
$     \diofun\,f
 \cfun \diorel\bigl(\abst \gval{T\,(\gval\,\dvar 0)~(\abst{\dvar i}{\gval\,\dvar{\dsucc i}})}\bigr)
 \cfun \diorel\bigl(\abst \gval{\forall u, u < f\,\gval\cfun T\,u~\gval}\bigr)$.
\end{theorem}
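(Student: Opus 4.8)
The plan is to recognise this as the classical \emph{elimination of bounded universal quantification} from the DPRM theorem, and to phrase it inside the Diophantine-shape framework so that all the De Bruijn bookkeeping is discharged by \dioauto\ and only the number theory remains. From the hypothesis $\diorel\bigl(\abst\gval{T\,(\gval\,\dvar 0)~(\dcut\gval)}\bigr)$ I would first extract, through the transformations between Diophantine logic, elementary constraints and single equations developed in Sections~\ref{dio_elem_sect} and~\ref{dio_single_sect}, a representation $T\,u\,\gval \toot \exists \vec w,\, p(u,\vec w,\gval)=q(u,\vec w,\gval)$ with $p,q$ honest polynomials over $\nat$ (the parameters travelling inside $\gval$); similarly $f$ supplies a Diophantine $z := f\,\gval$. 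Via map~4 of Proposition~\ref{diorel_prop1} the goal then becomes: the relation $\abst\gval{\forall u,\, u<z \cfun \exists \vec w,\, p(u,\vec w,\gval)=q(u,\vec w,\gval)}$ has a Diophantine shape.

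The heart is to turn this $\forall\exists$ formula over a bounded range into a purely existential one. First, the finitely many witness tuples $\vec w$ --- one for every $u<z$ --- are packed into a fixed number of auxiliary numbers by a Gödel-style sequence coding: either the $\beta$-function via the Chinese remainder theorem, or, following~\cite{Matiyasevich1997}, a coding through binomial coefficients and Lucas's theorem (this is why the motivating example at the start of the section features $\binomial uz$). Because the decoding predicate ``the $u$-th slot of the code equals $w$'' is itself Diophantine only up to a further existential depending on $u$, one iterates this coding until nothing but a genuine polynomial condition survives, i.e.\ until the statement reads $\exists \vec c,\, \forall u<z,\, G(u,\vec c,\gval)=0$ with $G$ a $\nat$-polynomial (hence automatically $G\ge 0$). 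Second, this last bounded quantifier over a polynomial identity is removed by a base-$M$ digit argument: choosing $M$ larger than all the finitely many values $G(u,\vec c,\gval)$ for $u<z$ --- such an $M$ is bounded by a polynomial in $z,\vec c,\gval$ because $u<z$ --- the number $\sum_{u<z}G(u,\vec c,\gval)\,M^{u}$ has exactly those values as its base-$M$ digits, so $\forall u<z,\,G(u,\ldots)=0$ holds iff this number is $0$. The key point is that $\abst{(z,\vec c,\gval,M)}{\sum_{u<z}G(u,\vec c,\gval)\,M^{u}}$ is Diophantine: expanding $G$ in powers of $u$, it is a $\nat$-combination of the finite sums $\sum_{u<z}u^{j}M^{u}$, and each of these is a rational function of $M$ and $M^{z}$ (the discrete analogue of differentiating a geometric progression), hence Diophantine thanks to the Diophantineness of exponentiation (Theorem~\ref{expo_dio_thm}). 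Feeding this equivalent existential expression back through map~4 of Proposition~\ref{diorel_prop1} and invoking \dioauto\ closes the goal.

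I expect the main obstacle to lie not in the reduction or the automation but squarely in the arithmetic: formalising the sequence-coding lemma (Chinese remaindering / Lucas's theorem) so that it plugs cleanly into the shape framework, proving the base-$M$ digit decomposition together with the Diophantineness of the finite sums $\sum_{u<z}u^{j}M^{u}$, and --- a persistent nuisance --- staying within $\nat$, since collapsing a system of equations into one under a bounded quantifier is painless over $\Z$ but needs extra care over $\nat$ (compare the treatment of $\Htenz$ in Section~\ref{sec:lagrange}). These are exactly the ingredients we would import essentially verbatim from~\cite{Matiyasevich1997}; accordingly, as announced, we state the result here and postpone the discussion of its mechanisation to Section~\ref{discuss_matiya_sect}.
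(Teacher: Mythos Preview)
Your high-level identification of the ingredients --- sequence coding, Lucas's theorem, exponentiation --- is right, and so is the decision to push the number theory to Section~\ref{discuss_matiya_sect}. But the specific strategy you sketch has a gap at the step you call ``iterate this coding until nothing but a genuine polynomial condition survives.'' Each round of coding replaces an inner witness $w$ by a digit-extraction predicate $\exists a\,b,\,W=(aB+w)B^{u}+b\lconj b<B^{u}$, which introduces \emph{new} $u$-dependent existentials $a,b$; coding those in turn just reproduces the situation with more variables. The process does not converge to a formula $\forall u<z,\,G(u,\vec c,\gval)=0$ with $G$ polynomial in $u$, and without that form your $\sum_{u<z}G(u,\dots)M^{u}$ trick is inapplicable: as soon as a decoded digit $[W]_u$ occurs inside $G$, the sum is no longer a $\nat$-combination of the closed forms $S_j(z,M)=\sum_{u<z}u^{j}M^{u}$ (already $\sum_{u<z}[W]_u^{2}M^{u}$ falls outside). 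The classical routes that \emph{do} work bypass this iteration entirely: Davis/DPR apply the polynomial $P$ directly to the \emph{code numbers} and test a single divisibility by the product of CRT moduli, rather than first reaching a polynomial in $u$.

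The paper follows the direct method of~\cite{Matiyasevich1997}, whose shape is quite different from yours. First $T$ is normalised via \Cref{coq:dio_formula_elem} into a list of \emph{elementary} constraints $u\dequal v\dplus w$ and $u\dequal v\dmult w$, so the only arithmetic to parallelise is a single addition or a single multiplication at a time. Second, a vector $(a_1,\dots,a_n)\in[0,2^{q}{-}1]^{n}$ is encoded as the \emph{sparse cipher} $a_1r^{2}+a_2r^{4}+\cdots+a_nr^{2^{n}}$ with $r=2^{4q}$: at these sparse positions ordinary $+$ already realises componentwise addition, and componentwise multiplication is recovered from the ordinary product of ciphers by \emph{masking} the cross terms with the bitwise A\!N\!D operator. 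That operator is shown Diophantine through the digit-order relation $a\bwleq b\toot\binomial ba\text{ is odd}$ --- this is where Lucas's theorem enters, not as a sequence-coding device but to make bitwise masking Diophantine. So the r\^ole you assign to Lucas's theorem, and the whole reduction step, are different from what the mechanisation actually does.
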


This map can be compared with map~3 of Proposition~\ref{diorel_prop1} and
allows to recognise bounded universal quantification as a legitimate Diophantine
shape. We have implemented the direct proof 
of Matiyasevich~\cite{Matiyasevich1997} which does not involve a detour through a 
model of computation. 
Notice that the bound $f\,\gval$ in $\forall u,\, u<f\,\gval \cfun\ldots$ 
is not assumed constant otherwise the elimination of the quantifier would 
proceed as a simple reduction to a finitary conjunction.

\subsection{Reflexive-Transitive Closure is Diophantine}

With these tools~--~elimination of the exponential relation and 
of bounded universal quantification~--~we can show that the reflexive and transitive closure
of a Diophantine binary relation is itself Diophantine.
We assume a binary relation $R:\nat\cfun\nat\cfun\Prop$ over natural numbers.
The Diophantineness of $R$ can be formalised by assuming that e.g.\
$\abst\gval{R~(\gval\,\dvar 1)~(\gval\,\dvar 0)}$ is a Diophantine relation.
We show that the $i$-th iterate of $R$ is Diophantine (where $i$ is non-constant).

\setCoqFilename{H10.Dio.dio_rt_closure}
\begin{lemma}[][dio_rel_rel_iter]
For any binary relation\/ $R:\nat\to\nat\to\Prop$ and\/ $f,g,h:(\Var\cfun\nat)\cfun\nat$, we have
$$\diofun\,f\cfun\diofun\,g\cfun\diofun\,h\cfun
\diorel\bigl(\abst\gval{R~(\gval\,\dvar 1)~(\gval\,\dvar 0)}\bigr)\cfun
\diorel\bigl(\abst\gval{R^{h\,\gval}\,(f\,\gval)~(g\,\gval)}\bigr).$$
\end{lemma}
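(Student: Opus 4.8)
The plan is to reduce $R^{n}(a)(b)$ — for a \emph{non-constant} exponent $n$ — to a statement about a single natural number that codes the entire witnessing chain $a=c_{0},c_{1},\dots,c_{n}=b$, and then to recognise that statement as a Diophantine shape, exactly in the spirit of the Gödel-coding arguments of Matiyasevich. The coding I would use is positional: pick a base $q$ strictly larger than every $c_{i}$ and set $c:=\sum_{i=0}^{n}c_{i}\,q^{i}$; then the $i$-th element of the chain is recovered by $c_{i}=\bigl(\lfloor c/q^{i}\rfloor\bigr)\bmod q$. This recovery operation is Diophantine because it is built from exponentiation (\Cref{expo_dio_thm}), Euclidean division and remainder, all of which are already available as Diophantine shapes (division and remainder handled as in the proof of \Cref{ndiv_dio_prop}). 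The decisive point is that, once the chain is coded this way, the clause ``$R(c_{i})(c_{i+1})$ for every $i<n$'' becomes a genuine bounded universal quantification over $i$ whose matrix is Diophantine, so the Bounded Universal Quantification closure applies.

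Concretely, I would first prove, as a self-contained arithmetical equivalence (with no Diophantineness involved),
\[
R^{n}(a)(b)\;\toot\;\bigl(n=0\lconj a=b\bigr)\;\ldisj\;\exists q\,\exists c,\;\psi(q,c,a,b,n),
\]
where $\psi(q,c,a,b,n)$ asserts $0<q$, $a=c\bmod q$, $b=\lfloor c/q^{n}\rfloor\bmod q$, and $R\bigl(\lfloor c/q^{i}\rfloor\bmod q\bigr)\bigl(\lfloor c/q^{1+i}\rfloor\bmod q\bigr)$ for all $i<n$. For the forward direction one extracts, by recursion on $n$ from the definition of $R^{n}$, a chain $c_{0},\dots,c_{n}$, takes $q:=1+\max_{i\le n}c_{i}$ and $c:=\sum_{i=0}^{n}c_{i}q^{i}$; since each $c_{i}<q$ one gets $c<q^{n+1}$, and elementary digit arithmetic yields $\lfloor c/q^{i}\rfloor\bmod q=c_{i}$ for every $i\le n$, whence all conjuncts of $\psi$ follow. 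For the backward direction, given such $q$ and $c$, define $c_{i}:=\lfloor c/q^{i}\rfloor\bmod q$; then $c_{0}=a$, $c_{n}=b$ and $R(c_{i})(c_{i+1})$ for $i<n$, so $R^{n}(a)(b)$ follows by induction on $n$. (The explicit split on $n=0$ and the positivity clause $0<q$ are there only to avoid degenerate behaviour of $q^{i}$ and of $\bmod$ at $0$.)

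With this equivalence in hand the lemma is routine. Unfold $\diofun f$, $\diofun g$, $\diofun h$, then use map~4 of \Cref{diorel_prop1} to replace the goal $\diorel\bigl(\abst\gval{R^{h\,\gval}(f\,\gval)(g\,\gval)}\bigr)$ by the $\diorel$ of the right-hand side above with $a,b,n$ instantiated to $f\,\gval,g\,\gval,h\,\gval$. The first disjunct is discharged by \Cref{prop:diofun_basic} and \Cref{diorel_prop}; the existentials over $q$ and $c$ by map~3 of \Cref{diorel_prop1}; the functions $\gval\mapsto\lfloor c/q^{i}\rfloor\bmod q$ by composing \Cref{expo_dio_thm} with division and remainder; and the bounded quantifier by the Bounded Universal Quantification closure, whose matrix is Diophantine because $R$ applied to two Diophantine functions is a Diophantine relation — one introduces auxiliary variables $u,v$ standing for $\lfloor c/q^{i}\rfloor\bmod q$ and $\lfloor c/q^{1+i}\rfloor\bmod q$ and invokes the hypothesis $\diorel\bigl(\abst\gval{R\,(\gval\,\dvar 1)\,(\gval\,\dvar 0)}\bigr)$ up to renaming via map~5 of \Cref{diorel_prop1}. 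Everything except the arithmetical equivalence can be left to \dioauto; that equivalence — in particular proving that positional coding in a sufficiently large base is inverted by truncated division, together with the boundary conditions at indices $0$ and $n$ — is the only real obstacle and the only step requiring genuine number-theoretic work.
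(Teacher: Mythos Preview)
Your proposal is correct and follows essentially the same approach as the paper: encode the witnessing chain $c_0,\dots,c_n$ as the digits of a single number $c$ in a sufficiently large base $q$, then express ``every consecutive pair is $R$-related'' via the Bounded Universal Quantification closure, relying on \Cref{expo_dio_thm} for the powers $q^{i}$. The only cosmetic differences are that the paper packages digit extraction directly as a predicate $\cstfont{is\_digit}~c~q~n~d \cdef d<q\lconj\exists a\,b,\,c=(aq+d)q^{n}+b\lconj b<q^{n}$ rather than going through $\lfloor\cdot/\cdot\rfloor$ and $\bmod$ as separate Diophantine functions, and that the paper's formulation handles $n=0$ uniformly (your explicit disjunct $n=0\lconj a=b$ is harmless but redundant, since your $\psi$ already forces $a=b$ when $n=0$).
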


\begin{proof}
Using Euclidean division, we define 
the $\cstfont{is\_digit}~c~q~n~d$ predicate stating that $d$ is the $n$-th
digit of the base $q$ development of number $c$, as a Diophantine sentence:
\[\cstfont{is\_digit}~c~q~n~d\cdef d < q\lconj
\exists a\,b,\, c = (aq+d)q^n+b\lconj b < q^n.\]
The Diophantineness of this follows from 
previous Diophantine shapes, including the exponential (\Cref{coq:dio_fun_expo}).
Then we define the $\cstfont{is\_seq}~R~c~q~i$ predicate stating that the first $i+1$ digits of
$c$ in base $q$ form an $R$-chain, again with a Diophantine shape, established using hypothesis 
$\diorel\bigl(\abst\gval{R~(\gval\,\dvar 1)~(\gval\,\dvar 0)}\bigr)$ and the 
Diophantine admissibility of bounded universal quantification (\Cref{coq:dio_rel_fall_lt}):
\[\cstfont{is\_seq}~R~c~q~i\cdef \forall n,\,n<i\cfun\exists u\,v,\,
\cstfont{is\_digit}~c~q~n~u\lconj
\cstfont{is\_digit}~c~q~(1+n)~v\lconj
R~u~v\]
Then we encode $R^i~u~v$ by stating that there exists a (large enough)  
$q$ and a number $c$ such that the first $i+1$ digits of $c$ in base $q$ form
an $R$-chain starting at $u$ and ending at $v$:
\[ R^i~u~v~\toot~ \exists q\,c,\,
\cstfont{is\_seq}~R~c~q~i
\lconj \cstfont{is\_digit}~c~q~0~u
\lconj \cstfont{is\_digit}~c~q~i~v\]
and this 
expression is accepted as a Diophantine shape by \dioauto.
Then assuming Diophantineness of $f$, $g$ and $h$, we
easily derive that $\abst\gval{R^{h\,\gval}\,(f\,\gval)~(g\,\gval)}$
is Diophantine.
\end{proof}

We fill in \Cref{coq:dio_rel_rel_iter} in the Diophantine hint database
and we derive the Diophantineness of the reflexive-transitive closure as a
direct consequence of the equivalence ${R^\kleenestar}~u~v\toot\exists i, {R^i}~u~v$.

\begin{theorem}[RT-closure][dio_rel_rt]
For any binary relation\/ $R:\nat\cfun\nat\cfun\Prop$, we have the map\/
$$\forall f\,g:(\Var\cfun\nat)\cfun\nat,\,\diofun\,f\cfun\diofun\,g\cfun
\diorel\bigl(\abst\gval{R~(\gval\,\dvar 1)~(\gval\,\dvar 0)}\bigr)
\cfun\diorel\bigl(\abst\gval{R^\kleenestar\,(f\,\gval)~(g\,\gval)}\bigr).$$
\end{theorem}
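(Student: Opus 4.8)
The plan is to reduce the statement to the already-established \Cref{coq:dio_rel_rel_iter} via the classical characterisation of the reflexive-transitive closure as the union of all finite iterates, i.e.\ the equivalence $R^\kleenestar\,u\,v\toot\exists i,\,R^i\,u\,v$. First I would prove this equivalence as a standalone lemma: the right-to-left direction is an easy induction on the iterate index $i$ (the base case $R^0$ is the diagonal, which is absorbed by reflexivity of $R^\kleenestar$, and the inductive step uses transitivity), while the left-to-right direction is an induction on the derivation of $R^\kleenestar\,u\,v$, exhibiting a witness for $i$ and incrementing it at each use of transitivity. This is routine, but it is essentially the only genuinely mathematical content of the proof; everything else is bookkeeping with Diophantine shapes.

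Next, assume $\diofun\,f$, $\diofun\,g$ and the hypothesis $\diorel\bigl(\abst\gval{R~(\gval\,\dvar 1)~(\gval\,\dvar 0)}\bigr)$. Using map~4 of Proposition~\ref{diorel_prop1} together with the equivalence above, I would replace the goal $\diorel\bigl(\abst\gval{R^\kleenestar\,(f\,\gval)~(g\,\gval)}\bigr)$ by the pointwise-equivalent goal $\diorel\bigl(\abst\gval{\exists i,\,R^i\,(f\,\gval)~(g\,\gval)}\bigr)$. Reading this last relation as $\abst\gval{\exists u,\,T\,u~\gval}$ with $T\,u~\gval\defeq R^u\,(f\,\gval)~(g\,\gval)$ and applying map~3 of Proposition~\ref{diorel_prop1}, the goal becomes $\diorel\bigl(\abst\gval{R^{\gval\,\dvar 0}\,(f\,(\dcut\gval))~(g\,(\dcut\gval))}\bigr)$, where the iterate count is now read off the freshly introduced De~Bruijn variable~$\dvar 0$.

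Finally, this last goal is exactly an instance of \Cref{coq:dio_rel_rel_iter}: take the iterate-count function $h\defeq\abst\gval{\gval\,\dvar 0}$, which is Diophantine by map~1 of Proposition~\ref{prop:diofun_basic}; take $\abst\gval{f\,(\dcut\gval)}$ and $\abst\gval{g\,(\dcut\gval)}$ as the two endpoint functions, which are Diophantine by map~3 of Proposition~\ref{prop:diofun_basic} applied to $\diofun\,f$ and $\diofun\,g$ (observing that $\dcut\gval = \gval\compose\gren$ for the renaming $\gren_i\defeq\dsucc i$); and carry the hypothesis $\diorel\bigl(\abst\gval{R~(\gval\,\dvar 1)~(\gval\,\dvar 0)}\bigr)$ over unchanged. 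In the mechanisation, once \Cref{coq:dio_rel_rel_iter} is registered in the hint database, this entire last stage is discharged automatically by \dioauto, so the only step requiring human input is the equivalence $R^\kleenestar\,u\,v\toot\exists i,\,R^i\,u\,v$ from the first paragraph. I therefore expect no real obstacle here; the only mild subtlety is keeping the De~Bruijn indices aligned and recognising $\dcut\gval$ as a renaming when feeding the hypotheses on $f$ and $g$ into map~3 of Proposition~\ref{prop:diofun_basic}.
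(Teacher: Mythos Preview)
Your proposal is correct and follows exactly the same approach as the paper: the paper states the result as a direct consequence of the equivalence $R^\kleenestar\,u\,v\toot\exists i,\,R^i\,u\,v$ together with \Cref{coq:dio_rel_rel_iter} registered in the hint database, and your write-up simply unpacks in detail what \dioauto\ does under the hood (map~4 then map~3 of Proposition~\ref{diorel_prop1}, followed by the instance of \Cref{coq:dio_rel_rel_iter} with the shifted $f$, $g$ and $h\defeq\abst\gval{\gval\,\dvar 0}$).
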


\section{Elementary Diophantine Constraints}

\label{dio_elem_sect}

We now shift to another, seemingly less expressive, object-level representation of Diophantine relations.
Elementary Diophantine constraints are very simple equations where only one instance of
either $\dplus$ or $\dmult$ is allowed. Schematically, starting from Diophantine logic, 
we remove disjunction and existential quantification and encode conjunctions into the
structure of a list. We give a direct proof that any Diophantine logic formula
is semantically equivalent to the simultaneous satisfiability of a list of elementary Diophantine constraints.

\setCoqFilename{H10.Dio.dio_elem}

Starting from two copies of $\nat$, one called $\BVar$ with $u,v,w$ ranging over $\BVar$ for existentially quantified variables,
and another one $\Var=\{\dvar 0,\dvar 1,\ldots\}$ for parameters, we define the type of elementary 
Diophantine constraints by:\footnote{The equation $u\dequal v$ is redundant because
it could be replaced with $z\dequal 0\dconj u\dequal z\dplus v$, for some fresh $z$. However we keep 
$u\dequal v$ in the syntax because this
simplifies arguments when parameters $\dvar i$ are mapped to existential variables $v$
in the proof of \Cref{coq:dio_repr_at_form}, the
type $\delem$ being thus closed under this transformation.} 
\[
c:\coqlink[dio_constraint]{\delem}
\bnfdef
     u\dequal n 
\mid u\dequal v 
\mid u\dequal \dvar i
\mid u\dequal v\dplus w 
\mid u\dequal v\dmult w
\qquad \text{where $n : \nat$}
\]
Notice that these constraints do not have a ``real'' inductive structure,
they are flat and of size (number of symbols) either 3 or 5.
Given two interpretations, $\genv:\BVar\cfun\nat$ for variables
and $\gval:\Var\cfun\nat$ for parameters, it is trivial to define the
semantics $\coqlink[dc_eval]{\semb c\gval\genv}:\Prop$ of a single constraint $c$ of type $\delem$:
\[\begin{array}{c@{\qquad}c@{\qquad}c}
  \semb{u\dequal n}\gval\genv\cdef \genv\,u=n
& \semb{u\dequal v}\gval\genv\cdef \genv\,u=\genv\,v
& \semb{u\dequal v\dplus w}\gval\genv\cdef \genv\,u=\genv\,v\plus\genv\,w \\
& \semb{u\dequal \dvar i}\gval\genv\cdef \genv\,u=\gval\,{\dvar i}
& \semb{u\dequal v\dmult w}\gval\genv\cdef \genv\,u=\genv\,v\mult \genv\,w
\end{array}\]

Given a list $l:\List \delem$ of constraints, we write $\semb l\gval\genv$
when all the constraints in $l$ are simultaneously satisfied, i.e.\
$\semb l\gval\genv\cdef \forall c,\, c\in l\cfun\semb c\gval\genv$.
We show the following result:

\setCoqFilename{H10.Dio.dio_elem}
\begin{theorem}[][dio_formula_elem]
\label{form_to_elem_thm}
For any Diophantine formula\/ $A:\dform$ one can compute a list of
elementary Diophantine constraints\/ $l:\List\delem$ such that\/
$\forall \gval:\Var\cfun\nat,\,\sem A\gval\toot\exists\genv:\BVar\cfun\nat,\, \semb l\gval\genv$.
\end{theorem}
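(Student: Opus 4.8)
The plan is to proceed by structural induction on $A$, but strengthened enough to survive passing under $\dexists$. Concretely, I would define a recursive \emph{compilation} $\mathrm{tr}:\dform\cfun\nat\cfun\nat\times\List\delem$ that threads a ``next fresh existential variable'' counter, so that $\mathrm{tr}(A,n)=(n',l)$ with $n'\geq n$, and prove two \emph{frame invariants} in parallel: every $\BVar$-variable occurring in $l$ lies in $[n,n')$, and $\semb l\gval\genv$ depends only on the restriction of $\genv$ to the variables occurring in $l$. The correctness property attached to $\mathrm{tr}$ is $\forall\gval,\ \sem A\gval\toot\exists\genv,\ \semb l\gval\genv$; the theorem is then the special case $l\defeq\pi_2(\mathrm{tr}(A,0))$.

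For the four atomic formul{\ae} one simply allocates a few fresh variables and emits the corresponding flat constraints: e.g.\ $\dvar i\dequal\dvar j\dplus\dvar k$ compiles to $[\,c\dequal\dvar i,\ a\dequal\dvar j,\ b\dequal\dvar k,\ c\dequal a\dplus b\,]$ with $a,b,c$ fresh (and analogously for $\dvar i\dequal n$, $\dvar i\dequal\dvar j$, $\dvar i\dequal\dvar j\dmult\dvar k$), the equivalence being immediate. For $A\dconj B$ one compiles $A$, then $B$ from the updated counter, and concatenates the two lists; since the two fragments occupy disjoint $\BVar$-ranges, the frame invariant lets one restrict a joint solution to each half and, conversely, glue two independent solutions, so $\exists\genv,\ \semb{l_A\capp l_B}\gval\genv$ is equivalent to the conjunction of the two hypotheses --- exactly $\sem{A\dconj B}\gval$.

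For $\dexists A$ one allocates a single fresh variable $w$, compiles $A$ from the incremented counter to get $l_A$, and then applies a purely syntactic \emph{reindexing}: replace every constraint $u\dequal\dvar 0$ in $l_A$ by $u\dequal w$, and every $u\dequal\dvar{\dsucc i}$ by $u\dequal\dvar i$ (this uses precisely the otherwise-redundant $u\dequal v$ shape of $\delem$, and introduces no new variables). Unfolding $\sem{\dexists A}\gval=\exists n,\ \sem A{\dlift\gval n}$ and invoking the induction hypothesis at the valuation $\dlift\gval n$, one checks that a solution $\genv$ of the reindexed list with $\genv\,w=n$ is the same thing as a solution of $l_A$ under the parameter valuation $\dlift\gval n$: the $u\dequal w$ constraints pin the old $\dvar 0$ to $n$, and the decremented indices read off $\gval$ on the rest. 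Freshness of $w$ lets one move the witness $n$ in and out of the existential.

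The real obstacle is $A\ddisj B$, since $\semb l\gval\genv$ is by definition a \emph{conjunction} of constraints and offers no direct way to record a choice. I would handle it with a Boolean-selector gadget: introduce a fresh $t$ forced to be Boolean by $[\,s\dequal t\dmult t,\ s\dequal t\,]$ (so $t^2=t$, i.e.\ $t\in\{0,1\}$) together with its complement $t'$ fixed by $[\,o\dequal 1,\ o\dequal t\dplus t'\,]$, and then replace $l_A$ by ``$l_A$ scaled by $t$'' and $l_B$ by ``$l_B$ scaled by $t'$''. Scaling a constraint $c$ by a Boolean variable $x$ emits fresh constraints asserting $x\cdot\mathrm{lhs}(c)=x\cdot\mathrm{rhs}(c)$ --- for instance $u\dequal v\dmult w$ scaled by $x$ becomes $[\,q\dequal v\dmult w,\ p_1\dequal x\dmult u,\ p_2\dequal x\dmult q,\ p_1\dequal p_2\,]$ with $p_1,p_2,q$ fresh --- which is vacuously satisfiable when $x=0$ (all scaled products vanish and the spare fresh variables are unconstrained) and equivalent to $c$ when $x=1$. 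The compiled list for $A\ddisj B$ is then the $t$/$t'$-gadget concatenated with the two scaled lists; a solution with $t=1$ forces exactly $l_A$ while trivialising $l_B$, and symmetrically for $t=0$, so its solvability is equivalent to $(\exists\genv,\semb{l_A}\gval\genv)\ldisj(\exists\genv,\semb{l_B}\gval\genv)=\sem{A\ddisj B}\gval$. I expect the two main difficulties to be (i) getting this scaling gadget to work uniformly over $\nat$ --- in particular producing the complement $t'=1-t$ without any truncated subtraction --- and (ii) the disjointness/freshness bookkeeping together with the frame lemma that underpins the gluing in the $\dconj$, $\dexists$ and $\ddisj$ cases.
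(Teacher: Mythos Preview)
Your proposal is correct, and in particular your Boolean-selector-plus-scaling gadget for $\ddisj$ does work over $\nat$: once $t\in\{0,1\}$ and $t+t'=1$ are enforced, each scaled constraint is vacuous when its selector is $0$ and reproduces the original when its selector is $1$, and the fresh auxiliaries $q,p_1,p_2,\ldots$ are always determinable. The fresh-counter threading and frame lemma are exactly the right bookkeeping to make the $\dconj$ and $\dexists$ cases go through.

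The paper, however, uses a different and noticeably slicker strengthening. Instead of proving ``$\sem A\gval\toot\exists\genv,\semb l\gval\genv$'' directly by induction, it attaches to each subformula a pair $(\dseqns,\dsref)$ where $\dseqns$ is \emph{always} satisfiable (for every $\gval$ there is some $\genv$) and where the single extra constraint $\dsref\dequal 0$ pins down the truth of the formula: $\sem A\gval\toot\exists\genv,\,\genv\,\dsref=0\lconj\semb\dseqns\gval\genv$. With this invariant, conjunction becomes the one-line $\dsref_A\dequal\dsref_B\dplus\dsref_C$ (over $\nat$, a sum is $0$ iff both summands are), and disjunction becomes the equally short $\dsref_A\dequal\dsref_B\dmult\dsref_C$ (a product is $0$ iff one factor is). No Boolean selector, no scaling, no per-constraint blowup. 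The existential case is essentially your reindexing trick. The price is paid at the atomic leaves, which must be engineered so that $\dseqns$ is unconditionally satisfiable while $\dsref=0$ captures the atomic identity; this costs a handful of extra variables per atom but is still bounded by a small constant.

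So both routes succeed; what the paper's reference-variable idea buys is that the hard connective ($\ddisj$) collapses to a single multiplicative constraint, whereas your approach pays linearly at every $\ddisj$ in gadget size and in the verification of the scaling lemma. Your approach, on the other hand, keeps the induction invariant identical to the theorem's statement, which makes the atomic cases trivial.
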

Put in other terms, for any given interpretation $\gval$ of parameters, $\sem A\gval$ holds if and only if  
the constraints in $l$ are simultaneously satisfiable. Any Diophantine logic formula is 
equivalent to the satisfiability of the conjunction of finitely many elementary Diophantine 
constraints.
The proof of~\Cref{coq:dio_formula_elem} spans the rest of this section. We will strengthen
the result a bit to be able to get an easy argument by induction on $A$. 

\newcommand{\dseqns}{\ensuremath{\mathcal{E}}}
\newcommand{\dsref}{\ensuremath{\mathfrak{r}}}

\begin{definition}[][dio_repr_at]
Given a relation $R:(\Var\cfun\nat)\cfun\Prop$ and an interval ${[u_a,u_{a+n}[}\subseteq\BVar$,
an\/ \emph{elementary representation of\/ $R$ in\/ $[u_a,u_{a+n}[$} is given by:
\begin{enumerate}
\item a list\/ $\dseqns: \List\delem$ of constraints and a reference variable\/ $\dsref : \BVar$;
\item proofs that\/ $\dsref$ and the (existentially quantified) variables of\/ $\dseqns$ belong to\/ $[u_a,u_{a+n}[$;
\item a proof that the constraints in\/ $\dseqns$ are always (simultaneously) satisfiable,
      i.e.\/ $\forall\gval\exists\genv\,\semb\dseqns\gval\genv$;
\item a proof that the list\/ $(\dsref\dequal 0)\ccons\dseqns$ is equivalent to\/ $R$, i.e.\/
      $\forall\gval,\, R~\gval \toot \bigl(\exists\genv,\,\genv\,{\dsref}=0\lconj
\semb \dseqns\gval\genv\bigr)$.
\end{enumerate}
\end{definition}

It is obvious that an elementary representation of $\abst\gval{\sem A\gval}$ in any interval $[u_a,u_{a+n}[$ is enough
to prove \Cref{coq:dio_formula_elem} because of item~4 of \Cref{coq:dio_repr_at}.
But actually, computing such a representation is simpler 
than proving \Cref{coq:dio_formula_elem} directly\rlap.%
\footnote{Proving \Cref{coq:dio_formula_elem} directly involves renamings of
existential variables and might produce exponential blow-up in the number of constraints
when handled naively.}\enspace
Below, the size of a Diophantine formula $A$, denoted
\setCoqFilename{H10.Dio.dio_logic}%
\coqlink[df_size]{$\dsize A$}, is defined as the number of nodes of its syntactic tree.

\setCoqFilename{H10.Dio.dio_elem}
\begin{lemma}[][dio_repr_at_form]
For any\/ $a:\nat$ and any\/ $A:\dform$, one can compute\/ $n\leq 8\dsize A$
and an elementary representation of the relation\/ $\abst\gval{\sem A\gval}$ 
in\/ $[u_a,u_{a+n}[$.
\end{lemma}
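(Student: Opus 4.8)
The plan is to prove the statement by structural induction on the Diophantine formula $A$, at each step building the list $\dseqns$ of constraints, picking a fresh reference variable $\dsref$, and checking the four conditions of \Cref{coq:dio_repr_at}. Since the bound $n \le 8\dsize A$ must be maintained, I would keep careful track, at each case, of how many fresh variables are consumed: the interval $[u_a, u_{a+n}[$ is laid out so that recursive sub-representations occupy disjoint sub-intervals, and the reference variable together with a constant number of auxiliary variables for the current node are allocated on top. The factor $8$ is the slack that absorbs the worst case (a binary node needs its own reference variable, plus possibly a couple of variables to witness equations like $\dsref \dequal 0$ interacting with the sub-references), so the induction hypothesis applied to each immediate subformula at shifted offsets, plus a bounded overhead, stays within budget.

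For the \textbf{atomic cases} ($\dvar i \dequal n$, $\dvar i \dequal \dvar j$, $\dvar i \dequal \dvar j \dplus \dvar k$, $\dvar i \dequal \dvar j \dmult \dvar k$): the parameter $\dvar i$ is \emph{not} an existential variable, so the natural move is to introduce one fresh existential variable for each parameter position occurring in the atom (using constraints of the form $u \dequal \dvar i$ to mirror parameters into the $\BVar$ side), plus the reference variable $\dsref$. One then emits the appropriate flat constraint ($u_1 \dequal u_2 \dplus u_3$, say) together with a constraint forcing $\dsref$ to encode ``the atom holds'' — e.g.\ $\dsref \dequal 0$ should hold exactly when the mirrored equation is satisfied, which one arranges by an auxiliary constraint such as $\dsref \dplus u' \dequal u''$ combined with the atom; the always-satisfiable condition (item 3) holds because given any $\gval$ one just reads off the values from $\gval$. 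Each atomic formula has $\dsize A = 1$, so any constant number of variables below $8$ is fine.

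For the \textbf{compound cases}, apply the induction hypothesis to the subformula(e) at appropriately shifted offsets inside $[u_a, u_{a+n}[$, obtaining sub-lists $\dseqns_1$ (and $\dseqns_2$) with references $\dsref_1$ (and $\dsref_2$). For $A \dconj B$, concatenate the lists and add constraints enforcing $\dsref \dequal 0 \leftrightarrow (\dsref_1 \dequal 0 \wedge \dsref_2 \dequal 0)$ — concretely $\dsref \dequal \dsref_1 \dplus \dsref_2$ works since both sub-references are $0$ iff their sum is. For $A \ddisj B$, one needs $\dsref = 0$ iff $\dsref_1 = 0$ or $\dsref_2 = 0$, which is captured by $\dsref \dequal \dsref_1 \dmult \dsref_2$ together with the fact that each sub-representation's constraints are always satisfiable (item 3), so the ``other'' disjunct can always be witnessed even when false. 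For $\dexists A$, the bound De Bruijn variable $\dvar 0$ of the body must be matched to a fresh existential variable in $\BVar$: allocate one fresh $u^\star$, apply the IH to $A$ with the parameter valuation shifted so that $\dvar 0 \mapsto u^\star$ (this is where the $\delem$ syntax's closure under mapping parameters $\dvar i$ to existential variables $v$, footnoted in the definition of $\delem$, is essential), and set $\dsref \defeq \dsref_1$. In each case item 3 is immediate by combining the sub-level always-satisfiability proofs with a choice of value for the newly added variables, and item 4 is a routine equivalence chase using the semantics $\semb{\cdot}{\cdot}{\cdot}$ and the IH's item 4.

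The \textbf{main obstacle} I expect is not any single case in isolation but the bookkeeping of variable intervals and the $8\dsize A$ bound: one must fix a concrete allocation scheme (e.g.\ reference at $u_a$, node-local auxiliaries next, then $\dseqns_1$ in $[u_{a+c}, u_{a+c+n_1}[$ and $\dseqns_2$ immediately after), verify the disjointness and containment side-conditions of item 2 mechanically, and confirm $c + n_1 + n_2 \le 8(1 + \dsize A + \dsize B)$ — the constant $c$ of local overhead must be chosen small enough (at most $8$, realistically $\le 3$ or $4$) that the recursion closes. In a mechanised setting this arithmetic is exactly the kind of thing that an \texttt{lia}-style tactic discharges, but it has to be set up correctly; hence the remark in the excerpt that computing the representation in an interval is \emph{simpler} than proving \Cref{coq:dio_formula_elem} directly, since the interval discipline is precisely what prevents the naive exponential blow-up in the number of constraints.
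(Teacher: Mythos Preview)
Your proposal is correct and follows essentially the same approach as the paper: structural induction on $A$, with $\dsref \dequal \dsref_1 \dplus \dsref_2$ for conjunction, $\dsref \dequal \dsref_1 \dmult \dsref_2$ for disjunction, and a substitution of a fresh existential variable for the bound parameter $\dvar 0$ in the existential case. The interval discipline (disjoint sub-intervals for subformulae, one extra slot for the node's reference) is exactly what the paper does, and the arithmetic for the bound closes with overhead $1$ per compound node, not $3$ or $4$ as you feared.

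The one place where the paper is more concrete than your sketch is the atomic case, which is also where the constant $8$ is actually attained. Your phrase ``an auxiliary constraint such as $\dsref \dplus u' \dequal u''$'' hides the real issue: after mirroring the parameters into existential variables and emitting the arithmetic constraint (say $u_5 \dequal u_3 \dmult u_4$ with $u_2$ mirroring $\dvar i$), you need an \emph{always-satisfiable} list whose reference is $0$ iff $u_2 = u_5$. Since subtraction is unavailable, the paper uses three fresh variables with the pattern $u_6 \dequal u_0 \dplus u_2$, $u_6 \dequal u_1 \dplus u_5$, $\dsref \dequal u_0 \dplus u_1$: always satisfiable (take $u_0 \defeq u_5$, $u_1 \defeq u_2$), and $\dsref = 0$ forces $u_0 = u_1 = 0$ hence $u_2 = u_6 = u_5$. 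This is the specific trick that consumes the full budget of $8$ variables for the multiplicative atom; your outline would lead you to discover it, but it is not as immediate as the compound cases.
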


\begin{proof}
We show the result by structural induction on $A$. 
\begin{itemize}
\item If $A$ is e.g.\  $\dvar i\dequal\dvar j\dmult \dvar k$, we get the representation 
      with $n\defeq 8$ and the pair $(\dseqns,\dsref)$ with
      $$ 
         \dseqns\defeq \left\lbrack\begin{array}{@{\,}l@{\,}}
            u_{a+7}\dequal u_a\dplus u_{a+1}      ~;~
            u_{a+6}\dequal u_a\dplus u_{a+2}      ~;~
            u_{a+6}\dequal u_{a+1}\dplus u_{a+5}  ~; \\
            u_{a+5}\dequal u_{a+3}\dmult u_{a+4}  ~;~
            u_{a+4}\dequal \dvar k                ~;~
            u_{a+3}\dequal \dvar j                ~;~
            u_{a+2}\dequal \dvar i              
         \end{array}\right\rbrack  
         \qquad
         \dsref\defeq u_{a+7}
      $$
      Property (2) is obviously satisfied.
      Property (3), i.e.\ the satisfiability of $\dseqns$ whatever the values of $\dvar i$, $\dvar j$
      and $\dvar k$, is simple to establish: indeed, the values of $u_{a+2}=\dvar i$, $u_{a+3}=\dvar j$,
      $u_{a+4}=\dvar k$ and $u_{a+5}=\dvar j\dvar k$ are uniquely determined. 
      Pick $u_a\defeq \dvar j\dvar k$ and $u_{a+1}\defeq \dvar i$ and then 
      again $u_{a+6} = u_{a+7} = \dvar i+\dvar j\dvar k$ 
      are both uniquely determined. This assignment of variables satisfies all the constraints in $\dseqns$.

      For property~(4), let us now add the extra constraint $u_{a+7}\dequal 0$ and consider a 
      valuation satisfying the constraints in $(u_{a+7}\dequal 0)\ccons \dseqns$. We must have
      $u_{a+7} = u_a+u_{a+1} =0$, hence $u_a=u_{a+1}=0$ which entails
      $u_{a+6} = u_{a+2} = u_{a+5}$. But $u_{a+2} = \dvar i$
      and $u_{a+5} = u_{a+3}u_{a+4} =\dvar j\dvar k$. Hence 
      $\dvar i\dequal\dvar j\dmult \dvar k$ is satisfied.

      Conversely, parameter values satisfying $\dvar i\dequal\dvar j\dmult \dvar k$
      can be extended to a valuation of variables satisfying $(u_{a+7}\dequal 0)\ccons \dseqns$
      in a unique way with $u_{a+7},u_a,u_{a+1} \defeq 0$,
      $u_{a+6}, u_{a+5}, u_{a+2} \defeq \dvar i$, $u_{a+3}\defeq\dvar j$ and
      $u_{a+4}\defeq\dvar k$;

\item We proceed similarly for the other atomic cases $\dvar i\dequal n$, $\dvar i\dequal\dvar j$ and $\dvar i\dequal\dvar j\dplus \dvar k$;

\item When $A$ is $B\dconj C$, we get a representation in $[u_a,u_{a+n_A}[$ by induction. 
      Hence, let $(\dseqns_B,\dsref_B)$ be the representation
      of $B$ in $[u_a,u_{a+n_B}[$.
      Then, inductively again, let $(\dseqns_C,\dsref_C)$ be a representation 
      of $C$ in $[u_{a+n_B},u_{a+n_B+n_C}[$. We define 
      $\dsref_A \cdef u_{a+n_A+n_B}$ and 
         $\dseqns_A \cdef (\dsref_A\dequal \dsref_B\dplus \dsref_C)\ccons \dseqns_B\capp \dseqns_C$
      and then $(\dseqns_A,\dsref_A)$ represents $A=B\dconj C$ in $[u_a,u_{a+1+n_B+n_C}[$;%
      \footnote{Since the intervals $[u_a,u_{a+n_B}[$ and $[u_{a+n_B},u_{a+n_B+n_C}[$ are built
      disjoint, there is no difficulty in merging valuations whereas this usually
      involves renamings when existential variables are not carefully chosen.}
\item The case of $B\ddisj C$ is similar: simply replace $\dsref_A\dequal \dsref_B\dplus\dsref_C$
      with $\dsref_A\dequal \dsref_B\dmult\dsref_C$;
\item We finish with the case when $A$ is $\dexists B$. Let
      $(\dseqns_B,\dsref_B)$ be a representation of $B$ in $[u_a,u_{a+n_B}[$.
      Let $\gsubst$ be the substitution mapping parameters in $\Var$ and
      defined by $\sigma(\dvar 0)\cdef u_{a+n_B}$
      and $\sigma(\dvar{1+i})\cdef\dvar i$; existential variables in $\BVar$ are left unmodified 
      by this substitution. 
      Then $\bigl(\gsubst(\dseqns_B),\dsref_B\bigr)$ is a representation of $A=\dexists B$ in $[u_a,u_{a+1+n_B}[$.
\end{itemize}
This concludes the recursive construction of a representation of  $\abst\gval{\sem A\gval}$.
\end{proof}

\section{Single Diophantine Equations}

\label{dio_single_sect}

We now give our last and most naive object-level representation of Diophantine relations as
a single polynomial equation.  
In this section, we show how a list of elementary Diophantine constraints can be simulated by
a single identity between two Diophantine polynomials. We use the following well known
convexity identity to achieve the reduction.

\setCoqFilename{H10.Dio.dio_single}
\begin{proposition}[][convex_n_eq]
\label{convexity2}
Let\/ $(p_1,q_1),\ldots,(p_n,q_n)$ be a sequence of pairs in $\nat\times\nat$. Then
\[\hfil \sum_{i=1}^n 2p_iq_i =\! \sum_{i=1}^n p_i^2+q_i^2~\toot~p_1=q_1\lconj\cdots\lconj p_n=q_n.\]
\end{proposition}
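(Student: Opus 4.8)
The plan is to reduce the stated identity to its obvious termwise version. The crucial elementary fact is that for all $a,b:\nat$ one has $2ab\le a^2+b^2$, with equality precisely when $a=b$. Proving this over $\nat$ requires a little care, since truncated subtraction would spoil the textbook argument ``$(a-b)^2\ge 0$''. I would instead case split: the claim is symmetric in $a$ and $b$ (as is the conclusion $a=b$), so assume $a\le b$ and write $b=a+d$; expanding yields the identity $a^2+b^2 = 2ab+d^2$. This simultaneously gives the bound $2ab\le a^2+b^2$ and shows that $2ab=a^2+b^2$ holds iff $d^2=0$ iff $d=0$ iff $a=b$.

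The right-to-left direction of the equivalence is immediate: if $p_i=q_i$ for every $i$, then both sides of the identity equal $\sum_{i=1}^n 2p_i^2$. For the left-to-right direction I would argue by induction on $n$, the base case $n=0$ being vacuous. In the inductive step, the termwise bound gives $2p_iq_i\le p_i^2+q_i^2$ for each $i$, hence both $\sum_{i=1}^{n} 2p_iq_i \le \sum_{i=1}^{n} p_i^2+q_i^2$ and the analogous inequality for the partial sums over $i=1,\ldots,n-1$. Since the full sums are assumed equal while each summand on the left is dominated by the corresponding summand on the right, equality must already hold at the last index, i.e.\ $2p_nq_n = p_n^2+q_n^2$, and therefore also $\sum_{i=1}^{n-1}2p_iq_i = \sum_{i=1}^{n-1}p_i^2+q_i^2$. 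The first equality yields $p_n=q_n$ by the auxiliary fact; the second feeds the induction hypothesis to give $p_i=q_i$ for all $i<n$.

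The only mildly delicate point is the step ``a sum of termwise inequalities that is in fact an equality forces each inequality to be an equality'', which over $\nat$ is a routine induction once the bound $2ab\le a^2+b^2$ is available; in the mechanisation this is presumably isolated as a small generic lemma about finite sums and then instantiated at the pairs $(p_i,q_i)$. Everything else is elementary arithmetic, so I do not expect any real obstacle here.
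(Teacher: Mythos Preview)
Your argument is correct and matches the paper's own proof essentially line for line: the same auxiliary lemma $2ab\le a^2+b^2$ with equality iff $a=b$, proved by writing $b=a+d$ after a symmetry reduction, followed by the same induction on $n$ that splits off the last summand and uses the termwise bound to force equality in both parts. There is nothing to add.
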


There are many possible justifications for this equivalence which, if ranging
over $\Z$, would more conventionally be written as $\sum_i(p_i-q_i)^2 = 0\toot\forall i,\, p_i=q_i$.
This form does not hold directly for $\nat$ however, hence the reformulation without
using the minus/$-$ operator\rlap.\footnote{Remember that $0-1=0$ holds in $\nat$.}\enspace 
See Appendix~\ref{app_convexity2} for an elementary justification of the result.

\smallskip

Similarly to elementary Diophantine constraints, 
we define Diophantine polynomials distinguishing the types of
$\BVar$ of bound variables and $\Var$ of parameters (or free variables) 
but the types $\BVar$ and $\Var$ are not fixed copies of $\nat$ anymore,
but type parameters of arbitrary value. 

\begin{definition}[][dio_polynomial]
\label{dio_polynomial_nat}
The type of\/ \emph{Diophantine polynomials} $\dpoly(\BVar,\Var)$ and
the type of\/ \emph{single Diophantine equations} $\dsingle(\BVar,\Var)$ are defined by:
\[p,q : \dpoly(\BVar,\Var) \bnfdef u : \BVar \mid \genvar :\Var \mid n :\nat \mid p\dplus q\mid p\dmult q
\qquad 
E : \dsingle(\BVar,\Var) \bnfdef p\dequal q.\]
\end{definition}

For $\genv:\BVar\cfun\nat$ and $\gval:\Var\cfun\nat$ we define the semantic
interpretations of polynomials $\semb p\gval\genv:\nat$ and single Diophantine 
equations $\semb{p\dequal q}\gval\genv:\Prop$ in the obvious way.

\begin{theorem}[][dio_elem_single]
\label{elem_single_thm}
For any list\/ $l:\List\delem$ of elementary Diophantine constraints, one can compute a single Diophantine equation\/
$E:\dsingle(\nat,\nat)$ such that\/ $\forall\gval\forall\genv,\,\semb E\gval\genv\toot\semb l\gval\genv$.
\end{theorem}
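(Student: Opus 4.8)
The plan is to recast each elementary constraint as an equation between two Diophantine polynomials, and then to collapse the finite conjunction of these equations into a single equation by means of the convexity identity of Proposition~\ref{convexity2}.

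First I would define, by case analysis on a constraint $c:\delem$, a pair of polynomials $(p_c,q_c)$ with $p_c,q_c:\dpoly(\nat,\nat)$: the constraint $u\dequal n$ gives $(u,n)$, $u\dequal v$ gives $(u,v)$, $u\dequal\dvar i$ gives $(u,\dvar i)$, $u\dequal v\dplus w$ gives $(u,v\dplus w)$, and $u\dequal v\dmult w$ gives $(u,v\dmult w)$, using on the right the polynomial constructors for bound variables, parameters, constants, $\dplus$, and $\dmult$ (this type-checks since $\delem$ and $\dsingle(\nat,\nat)$ both use $\BVar=\Var=\nat$). Unfolding the semantics of constraints and of polynomials, one sees at once that $\semb c\gval\genv\toot\semb{p_c}\gval\genv=\semb{q_c}\gval\genv$ for all $\gval,\genv$, and hence $\semb l\gval\genv\toot\bigl(\forall c\in l,\ \semb{p_c}\gval\genv=\semb{q_c}\gval\genv\bigr)$.

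Next, writing $l=[c_1,\dots,c_n]$ and $(p_j,q_j)\cdef(p_{c_j},q_{c_j})$, I would use that $\dpoly(\nat,\nat)$ is closed under $\dplus$ and $\dmult$ and contains the constant $2$ to form, by a fold over $l$, the polynomials $P\cdef\sum_{j=1}^n 2\dmult p_j\dmult q_j$ and $Q\cdef\sum_{j=1}^n\bigl(p_j\dmult p_j\dplus q_j\dmult q_j\bigr)$, where $\sum$ denotes iterated $\dplus$ (and $P=Q=0$ when $l$ is empty), and set $E\cdef(P\dequal Q):\dsingle(\nat,\nat)$. Since the interpretation $\semb{\cdot}\gval\genv$ is a semiring homomorphism from polynomials to $\nat$, a one-line list induction gives $\semb P\gval\genv=\sum_{j=1}^n 2\,(\semb{p_j}\gval\genv)\,(\semb{q_j}\gval\genv)$ and $\semb Q\gval\genv=\sum_{j=1}^n\bigl((\semb{p_j}\gval\genv)^2+(\semb{q_j}\gval\genv)^2\bigr)$. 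Applying Proposition~\ref{convexity2} to the sequence $\bigl(\semb{p_j}\gval\genv,\semb{q_j}\gval\genv\bigr)_{1\le j\le n}$ of pairs of natural numbers then yields $\semb E\gval\genv\toot\bigl(\semb{p_1}\gval\genv=\semb{q_1}\gval\genv\lconj\cdots\lconj\semb{p_n}\gval\genv=\semb{q_n}\gval\genv\bigr)$, and combining with the equivalence from the previous paragraph gives $\semb E\gval\genv\toot\semb l\gval\genv$ for all $\gval,\genv$, as required.

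There is no genuine difficulty here beyond bookkeeping: the content is (i) choosing the polynomial encoding of the five constraint shapes and verifying the trivial per-constraint semantic equivalence, and (ii) checking that the fold defining $P$ and $Q$ commutes with interpretation, which is a routine induction on $l$. The only points needing a little attention are the empty-list base case (where $E$ is $0\dequal 0$ and $\semb l\gval\genv$ holds vacuously) and, on the mechanisation side, arranging the fold so that $P$ and $Q$ stay reasonably small — though correctness does not depend on this last point.
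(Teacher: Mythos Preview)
Your proposal is correct and follows essentially the same approach as the paper: the paper's proof simply says ``write $l = [p_1\dequal q_1;\ldots;p_n\dequal q_n]$ and then use Proposition~\ref{convexity2}'', which is exactly your construction with the constraint-to-polynomial-pair translation left implicit. You have just spelled out the bookkeeping (the five-case translation and the fold) that the paper elides.
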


\begin{proof}
We write $l = [p_1\dequal q_1;\ldots;p_n\dequal q_n]$ and then
use Proposition~\ref{convexity2}.
In the code, we moreover show that the size of $E$ is linear in the length of $l$.
If needed, one could also show that the degree of the polynomial is less than 4.
\end{proof}

\begin{corollary}[][dio_rel_single]
Let\/ $R:(\nat\cfun\nat)\cfun\Prop$. Assuming\/ $\diorel\,R$, 
one can compute a single Diophantine equation\/ $p\dequal q:\dsingle(\nat,\nat)$ such that\/ 
$\forall \gval,\,R~\gval\toot\exists\genv,\,\semb p\gval\genv = \semb q\gval\genv$.
\end{corollary}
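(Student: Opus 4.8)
The plan is to chain the three object-level representations developed in Sections~\ref{dio_logic_sect}--\ref{dio_single_sect}, turning a Diophantine relation into a Diophantine logic formula, then into a list of elementary constraints, and finally into one polynomial equation. First I would unfold the hypothesis $\diorel\,R$ according to Definition~\ref{dio_rel_def}: this gives a Diophantine formula $A:\dform$ together with a proof that $\forall\gval,\,\sem A\gval\toot R\,\gval$.

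Next I would feed $A$ to \Cref{form_to_elem_thm}, obtaining a list $l:\List\delem$ of elementary Diophantine constraints with $\forall\gval,\,\sem A\gval\toot\exists\genv:\BVar\cfun\nat,\,\semb l\gval\genv$, where at this stage $\BVar$ is just $\nat$. Composing with the previous equivalence yields $\forall\gval,\,R\,\gval\toot\exists\genv,\,\semb l\gval\genv$. Then I would apply \Cref{elem_single_thm} to $l$, which computes a single Diophantine equation $E=(p\dequal q):\dsingle(\nat,\nat)$ such that $\forall\gval\forall\genv,\,\semb E\gval\genv\toot\semb l\gval\genv$, i.e.\ $\forall\gval\forall\genv,\,\bigl(\semb p\gval\genv=\semb q\gval\genv\bigr)\toot\semb l\gval\genv$. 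Since this equivalence is pointwise in $\genv$ and $\toot$ is a congruence for $\exists$, existentially quantifying $\genv$ on both sides preserves it, so $\forall\gval,\,\bigl(\exists\genv,\,\semb l\gval\genv\bigr)\toot\bigl(\exists\genv,\,\semb p\gval\genv=\semb q\gval\genv\bigr)$. Transitivity of $\toot$ then gives $\forall\gval,\,R\,\gval\toot\exists\genv,\,\semb p\gval\genv=\semb q\gval\genv$, which is exactly the claim; and $p\dequal q$ is genuinely computed from a witness of $\diorel\,R$ because both transformations above are Coq functions.

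There is essentially no hard part: the only things to check are that the bound-variable type $\BVar$ produced by \Cref{form_to_elem_thm} agrees with the first argument $\nat$ of $\dsingle(\nat,\nat)$ expected by \Cref{elem_single_thm} (it does), and that the existential over $\genv$ can be pushed through the pointwise equivalence coming from \Cref{elem_single_thm} (it can). If desired, one can additionally propagate the size and degree bounds noted in the proofs of those theorems, so that the resulting polynomials $p,q$ have size linear in $\dsize A$ and degree at most $4$.
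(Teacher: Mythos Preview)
Your proof is correct and follows exactly the paper's own approach: the paper's proof reads ``Direct combination of Definition~\ref{dio_rel_def} and Theorems~\ref{form_to_elem_thm} and~\ref{elem_single_thm},'' which is precisely the chain you spell out in detail. Your additional remarks about the types matching and the existential commuting with the pointwise equivalence are accurate and harmless.
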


\begin{proof}
Direct combination of Definition~\ref{dio_rel_def} and Theorems~\ref{form_to_elem_thm} and~\ref{elem_single_thm}.
In the formalisation, we also show that the size of the obtained single Diophantine equation is linearly
bounded by the size of the witness formula contained in the proof of $\diorel\,R$.
\end{proof}

We have shown that the automation we designed to recognise relations
of Diophantine shape entail that these relations are also definable by satisfiability of 
a single equation between Diophantine polynomials, so these tools are sound w.r.t.\ 
a formally restrictive characterisation of Diophantineness. 
One could argue that the above existential quantifier $\exists\genv$
encodes infinitely many existential quantifiers but it can easily be replaced by finitely many
existential quantifiers over the bound variables that 
actually occur in $p$ or $q$.

\begin{proposition}[][dio_poly_eq_pos]
For any single Diophantine equation\/ $p\dequal q:\dsingle(\nat,\Var)$, one can compute\/
$n:\nat$ and a new single Diophantine equation\/ $p'\dequal q':\dsingle(\fin n,\Var)$ such that for any\/ $\gval:\Var\cfun\nat$,
$(\exists\genv:\nat\cfun\nat,\,\semb p\gval\genv = \semb q\gval\genv)
\toot(\exists\genv:\fin n\cfun\nat,\,\semb{p'}\gval\genv = \semb{q'}\gval\genv)$.
\end{proposition}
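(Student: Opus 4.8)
The plan is to exploit that a polynomial of type $\dpoly(\nat,\Var)$ has only finitely many leaves, hence mentions only finitely many bound variables. First I would define, by structural recursion on $\dpoly(\nat,\Var)$, a function collecting the list of bound variables occurring in a polynomial, and from the two lists obtained for $p$ and $q$ let $m$ be the largest occurring index (with $m\defeq 0$ if no bound variable occurs) and set $n\defeq 1+m$; then $\fin n$ is inhabited and every bound variable occurring in $p$ or $q$ lies in $\fin n$ under the standard identification of $\fin n$ with $\{0,\dots,m\}$. Next I would introduce the clamping map $r:\nat\cfun\fin n$ sending $u$ to the $\min(u,m)$-th element of $\fin n$, extend it homomorphically to polynomials (replacing each bound-variable leaf $u$ by $r\,u$, leaving parameters, constants, $\dplus$ and $\dmult$ untouched), and set $p'\defeq r\,p$ and $q'\defeq r\,q$, so that $p'\dequal q':\dsingle(\fin n,\Var)$. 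Writing $\iota:\fin n\cfun\nat$ for the canonical injection, one checks that $r\compose\iota$ is the identity on $\fin n$ and that $\iota\compose r:\nat\cfun\nat$ agrees with the identity on $\{0,\dots,m\}$, in particular on every bound variable occurring in $p$ or $q$.

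The correctness proof then rests on two routine lemmas, both by induction on polynomials: a \emph{renaming lemma}, $\semb{s\,p}\gval\genv=\semb p\gval{(\genv\compose s)}$ for any renaming $s$ of bound variables; and a \emph{locality lemma}, $\semb p\gval{\genv_1}=\semb p\gval{\genv_2}$ whenever $\genv_1$ and $\genv_2$ agree on the bound variables occurring in $p$. For the $\Leftarrow$ direction, given $\genv':\fin n\cfun\nat$ with $\semb{p'}\gval{\genv'}=\semb{q'}\gval{\genv'}$, put $\genv\defeq\genv'\compose r$; the renaming lemma gives $\semb p\gval\genv=\semb{r\,p}\gval{\genv'}=\semb{p'}\gval{\genv'}$ and likewise for $q$, so the equation holds. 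For the $\Rightarrow$ direction, given $\genv:\nat\cfun\nat$ with $\semb p\gval\genv=\semb q\gval\genv$, put $\genv'\defeq\genv\compose\iota$; the renaming lemma rewrites $\semb{p'}\gval{\genv'}$ as $\semb p\gval{(\genv\compose(\iota\compose r))}$, and since $\iota\compose r$ is the identity on the bound variables of $p$, the locality lemma identifies this with $\semb p\gval\genv$, and similarly for $q$.

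There is no real conceptual obstacle here; the points needing a little care are the index bookkeeping — choosing $n$ positive so that $\fin n$ is inhabited, and verifying that the clamping maps compose to the relevant identities on the occurring variables — and, in the mechanisation, threading the ``occurs'' predicate through the renaming and locality inductions. One also notes in passing that the whole construction is computable, as the statement requires.
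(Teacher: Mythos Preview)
Your proof is correct and follows essentially the same approach as the paper: compute a bound $n$ from the finitely many bound variables actually occurring in $p$ and $q$, rename those variables into $\fin n$, and use a renaming/locality argument to transport solutions in both directions. The paper's sketch is much terser than yours, so you have in fact spelled out the two lemmas the paper leaves implicit.

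One minor technical difference: the paper chooses $n$ greater than the \emph{count} of distinct occurring bound variables and then embeds that finite subset of $\nat$ into $\fin n$ via some enumeration, whereas you take $n$ to be one plus the \emph{maximum} occurring index and use the identity-on-occurring-variables clamping map. Your choice may give a larger $n$ but a simpler renaming (no enumeration needed, and $\iota\compose r$ is literally the identity on $\{0,\dots,m\}$); the paper's choice gives a tighter $n$ at the cost of a non-trivial bijection. Since the statement places no constraint on the size of $n$, both are perfectly fine, and the paper even notes that the size of $(p',q')$ is unchanged, which also holds for your construction.
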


\begin{proof}
We pick $n$ greater that the number of bound variables which occur in either $p$ or $q$. 
This subset of $\nat$ can be faithfully embedded into the finite type $\fin n$ and we use such a renaming
to compute $(p',q')$. Remark that the size of $(p',q')$ is the same as that of $(p,q)$.
\end{proof}

By~\Cref{coq:dio_rel_single} and~\Cref{coq:dio_poly_eq_pos}, we see that a Diophantine logic formula
$A:\dform$ potentially containing inner existential quantifiers and representing 
the Diophantine relation $\abst\gval{\sem A\gval}$ can effectively be reduced to a single
Diophantine equation $p'\dequal q':\dsingle(\fin n,\Var)$ such that
$\sem A\gval\toot \exists\genv:\fin n\cfun\nat,\,\semb{p'}\gval\genv = \semb{q'}\gval\genv$.
Because $\fin n$ is the finite type of $n$ elements, the (higher order) existential quantifier 
$\exists\genv$ simply encodes $n$ successive (first order) existential quantifiers.
The existential quantifiers that occur deep inside $A$ are \emph{not erased} by the
reduction, they are  \emph{moved at the outer level} and to be
ultimately understood as \emph{solvability} for some polynomial equation of which
the parameters match the free variables of $A$.

\section{Remarks on the Implementation of Matiyasevich's Theorems}

\label{discuss_matiya_sect}

\emph{Matiyasevich's theorem} stating that there is a Diophantine
description of the exponential relation $x=y^z$
is a masterpiece which concluded the line of work by Davis, Putnam and Robinson, starting at Davis's conjecture in 1953.
Already in 1952, Julia Robinson discovered that in order to show the exponential relation Diophantine, it suffices to find a single binary Diophantine relation exhibiting exponential growth~\cite{robinson1952}, a so-called Robinson predicate,
i.e.\ a predicate $J(u,v)$ in two variables s.t.\ $J(u,v)$ implies $v < u^u$ and for every $k$ there are $u,v$ with $J(u,v)$ and $v > u^k$.
Robinson's insight meant the only thing missing to prove what is nowadays called the DPRM theorem, 
was a \emph{single} polynomial equation capturing \emph{any} freely chosen Robinson predicate.
Similar to other famous hard problems of mathematics, the question is easy to state, but from the start of the study of Diophantine equations to the late 60s, no such relation was known, rendering the problem one of the most baffling questions for mathematicians and computer scientists alike.

\newcommand{\fib}[1]{\mathrm{fib}_{#1}}

In 1970, Yuri Matiyasevich~\cite{matijasevic1970enumerable} discovered that $v = \fib{2 u}$ is both a
Robinson predicate and Diophantine. Here $(\fib n)_{n\in\nat}$ is the well known Fibonacci sequence defined by the second order 
recurrence relation $\fib 0 = 0$, $\fib 1 = 1$ and $\fib{n+2}=\fib{n+1}+\fib n$. Combined with previous results, 
this concluded the multi-decades effort to establish the Diophantineness of all recursively enumerable predicates,
implying a negative solution to Hilbert's tenth problem. That proof which included the original proof 
of Matiyasevich~\cite{matijasevic1970enumerable} was later simplified. For instance, exploiting similar 
ideas but in the easier context of the solutions of another second order 
equation~--~namely Pell's equations $x^2-(a^2-1)y^2=1$ with parameter $a>1$,~--~Martin
Davis~\cite{davis73} gave a standalone proof of the DPRM-theorem where recursively enumerable predicates
are characterised by a variant of  $\mu$-recursive functions. In that
paper, Davis also provided a proof of the admissibility of bounded universal quantification using
the Chinese remainder theorem to encode finite sequences of numbers. There exists more recent and 
simpler proofs of this admissibility result as well, see e.g.~\cite{Matiyasevich1997}.

\smallskip

Before we discuss the mechanisation of the Diophantineness of both the exponential relation and of bounded universal quantification, 
we want to remark on the difficulty of mechanising the former proof.
Both on its own and as a stepping stone towards the negative solution to Hilbert's tenth problem, it is clear that Matiyasevich's theorem was an extremely difficult question which required superior intellectual resources to be solved.
The mechanisation of a modernised form of the proof, although not trivial, cannot be compared to the difficulty of finding a solution.
In particular, the modern proof relies on very mature background theories, lowering the number of possible design choices for the mechanisation.
Moreover, very detailed pen and paper accounts of the proof are available, which can be followed closely.

An aspect that is more challenging in mechanisation than on paper are proofs regarding the computability of certain functions.
Since paper proofs oftentimes rely on a vague notion of algorithm, most of the reasoning about these algorithms is hand-waved away by computer scientists,
relying on the implicit understanding of what is an algorithm.
By using a synthetic approach to computability~\cite{FHS18,0002KS19,FLW19}, we make the notion of an algorithm precise and thus enable mechanisation, at the same time circumventing the verification of low-level programs.

\subsection{Exponential is Diophantine\texorpdfstring{ (\Cref{coq:dio_fun_expo})}{}}

\newcommand{\modulus}{\mathrel{\mathrm{mod}}}

\setCoqFilename{H10.Dio.dio_expo}

For our mechanised proof, we rely on a more recent account of Matiyasevich's theorem from~\cite{Matiyasevich2000},
which, among the many options we considered, seemed the shortest.
The proof employs the equation $x^2-bxy+y^2=1$ for $b\geq 2$, also
called Pell's equation in~\cite{davis73}. 
We use the second order recurrence relation
$\alpha_b(-1) = -1$, $\alpha_b(0) = 0$ and
$\alpha_b(n+2) = b\alpha_b(n+1)-\alpha_b(n)$
to describe the set of solutions of Pell's equation
by $\bigl\{(\alpha_b(n),\alpha_b(n+1))\mid n\in\nat\bigr\}$.
The recurrence can be characterised by the following square $2\times2$ matrix equation:
\[\hfil
A_b(n) = (B_b)^n
\quad
\text{with~~}
A_b(n)\cdef \left(\begin{array}{@{}c@{}c@{}}\alpha_b(n+1) & -\alpha_b(n) \\  \alpha_b(n)   & -\alpha_b(n-1)\end{array}\right)
\text{~~and~~}
B_b\cdef \left(\begin{array}{@{}c@{~\,}c@{}}b & -1 \\ 1 & 0\end{array}\right)
\] 
Then, studying the properties of the sequence
$n\mapsto\alpha_b(n)$ in $\nat$ or $\mathbb Z$, one can show that
$\alpha_2(n) = n$ for all $n$ and $n\mapsto\alpha_b(n)$ grows
exponentially for $b\geq 3$. Studying the properties
of the same sequence  in $\mathbb Z/p\mathbb Z$
(for varying values of the modulus $p$),
one can for instance show that 
$\modeq{n=\alpha_2(n)}{\alpha_b(n)}{b-2}$,
which relates $n$ and $\alpha_b(n)$ modulo $(b-2)$.
With various intricate but elementary 
results\rlap,\footnote{by elementary we certainly do not mean 
either simple or obvious, but we mean that they only involve
standard tools from modular and linear algebra.}
such as e.g.\ $\alpha_b(k)\divides\alpha_b(m)\toot k\divides m$
and $\alpha_b^2(k)\divides\alpha_b(m)\toot k\alpha_b(k)\divides m$
(both for $b\geq 2$ and any $k,m\in\nat$),
one can show that $a,b,c\mapsto 3 < b\lconj a = \alpha_b(c)$
has a Diophantine representation. In our formalisation, we
get a Diophantine logic formula of size 1445 as
a witness (see~\coqlink[dio_rel_alpha_size]{\cstfont{dio\_rel\_alpha\_size}}). 

Once $\alpha_b(n)$ is proven Diophantine, one can
recover the exponential relation 
$x,y,z\mapsto x=y^z$ using the 
eigenvalue $\lambda$ of the matrix $B_b$ 
which satisfies $\lambda^2-b\lambda-1=0$. By
wisely choosing $m\defeq bq-q^2-1$, one gets
$\modeq\lambda q m$ and thus, using the corresponding 
eigenvector, one derives $\modeq{q\alpha_b(n)-\alpha_b(n-1)}{q^n} m$.
For a large enough value of $m$, 
hence a large enough value%
\footnote{the largeness of which 
is secured using $\alpha$ itself again, but with other 
input values. But this works only in the case where $n>0$ and $q>0$. 
The cases where $n=0$ (and hence $q^n=1$) or 
$q=0$ and $n>0$ (and hence $q^n=0$) are trivial 
and treated separately.}\enspace
of $b$, 
this gives a Diophantine representation of $q^n$.
In our code, we
get a Diophantine logic formula of size 4903 as
a witness (see~\coqlink[dio_fun_expo_example_size]{\cstfont{dio\_fun\_expo\_example\_size}}). 

The main libraries 
which are needed to solve Pell's equation and characterise its
solutions are linear algebra (or at least  square $2\times 2$ matrices) 
over commutative rings such as $\mathbb Z$ and $\mathbb Z/p\mathbb Z$, 
a good library for modular algebra ($\mathbb Z/p\mathbb Z$),
and the binomial theorem over rings. Without the help
of the Coq \cstfont{ring} tactic, such a development 
would be extremely painful. These libraries are then used again
to derive the Diophantine encoding of the exponential.

\subsection{Admissibility of Bounded Universal Quantification\texorpdfstring{ (\Cref{coq:dio_rel_fall_lt})}{}}

As hinted earlier, we provide an implementation of the algorithm 
for the elimination of bounded universal quantification described in~\cite{Matiyasevich1997}.
It does not involve the use of a model of computation, hence does
not create a chicken-and-egg problem when used for the proof
of the DPRM theorem. The technique of~\cite{Matiyasevich1997} uses 
the exponential function and thus \Cref{coq:dio_fun_expo} (a lot), and a combination
of arithmetic and bitwise operations over $\nat$ through base 
$2$ and base $2^q$ representations of natural numbers. 

The Diophantine admissibility of
bitwise operations over $\nat$ is based on the 
relation stating that every bit of $a$ is lower or
equal than the corresponding bit in $b$ and denoted $a\bwleq b$. 
The equation $a\bwleq b\toot\binomial ba\text{ is odd}$\footnote{where
$\binomial ba$ denotes the binomial coefficient with the usual
convention that $\binomial ba=0$ when $a > b$.} gives a
Diophantine representation for $a\bwleq b$ and then
bitwise operators are derived from $\bwleq$ in
combination with regular addition $\plus$,
in particular, the \emph{digit by digit A\!N\!D} operation
called ``projection\rlap.''\enspace 
To obtain  that $a\bwleq b$ holds if and only if 
$\modeq{\binomial ba}{1}2$,
we prove Lucas's theorem~\cite{lucas1878}
which allows for the computation of the binomial
coefficient in base $p$. It states that
$\modeq{\binomial ba}{\binomial{b_n}{a_n}\times\cdots\times\binomial{b_0}{a_0}} p$
holds when $p$ is prime and $a=a_np^n+\cdots+a_0$ and 
$b=b_np^n+\cdots+b_0$ are the respective 
base $p$ representations of $a$ and $b$;
see Appendix~\ref{append:lucas} for an elementary combinatorial proof of
Lucas's theorem.

A Diophantine representation of the binomial
coefficient can be obtained via e.g.\ the binomial
theorem: $\binomial nk$ is the $k$-th
digit of the development of $(1+q)^n=\sum_{i=0}^n \binomial niq^i$ in
base $q=2^{n+1}$. This gives a Diophantine representation
using the \cstfont{is\_digit} relation of \Cref{coq:dio_rel_rel_iter}.

The rest of the admissibility proof for bounded
universal quantification $\forall i,\, i < n \cfun A$ is a very nice encoding of 
vectors of natural numbers of type $\nat^n$ into natural numbers $\nat$
such that regular addition $\plus$ (resp.\ multiplication 
$\mult$) somehow performs parallel/simultaneous additions (resp.\ multiplications)
on the encoded vectors. More precisely, a vector
$(a_1,\ldots,a_n)\in[0,2^q-1]^n$ of natural numbers is encoded 
as the ``cipher'' $a_1r^2+a_2r^4+a_3r^8+\cdots+a_nr^{2^n}$ with $r=2^{4q}$.
In these \emph{sparse ciphers,} only the digits occurring at $r^{2^i}$ are non-zero.
We remark that none of the parameters, including $n$ or $q$, are constant in the encoding.

Besides the low-level inductive proof of Lucas's theorem presented in Appendix~\ref{append:lucas},
the essential library for the removal of bounded universal quantification
consists of tools to manipulate the type $\nat$ simultaneously and smoothly both 
as (a)~usual natural numbers and (b)~sparse base $r=2^{4q}$ 
encodings of vectors of natural numbers in $[0,2^q-1]$. 
 Notice that $r$ is defined as $r=2^{2q}$ in~\cite{Matiyasevich1997} 
but we favour the alternative choice $r=2^{4q}$ which allows for an easier soundness
proof for vector multiplication because there is no need to manage for 
digit overflows (see Appendix~\ref{overflow_appendix}).

A significant step in the Diophantine
encoding of $\plus$ and $\mult$ on $\nat^n$ 
is the Diophantine encoding of $u=\sum_{i=1}^n r^{2^i}$ and $u'=\sum_{i=2}^{n+1} r^{2^i}$ as the 
ciphers of the constant vectors $[1;\ldots;1]\in\nat^n$ and
 $[0;1;\ldots;1]\in\nat^{n+1}$ respectively, obtained by
masking $u^2$ with $w=\sum_{i=0}^{2^{n+1}}\!\!r^i$ and $2w$. 

Finally, it should be noted that
prior to the elimination of the quantifier in $\forall i,\, i < n \cfun A$,
the Diophantine formula $A$ is first normalised into a
conjunction of elementary constraints using \Cref{coq:dio_formula_elem},
and then the elimination is performed on that list of elementary constraints,
encoding e.g.\ $v_0\dequal v_1\dplus v_2$ and $v_0\dequal v_1\dmult v_2$ with their respective
sparse cipher counterparts.

\section{Minsky Machines Reduce to \texorpdfstring{\FRACTRAN}{FRACTRAN}}

\newcommand{\mmstep}[5]{\sssstepX{\!M}{#1}{(#2,#3)}{(#4,#5)}}
\newcommand{\mmsteps}[6]{\sssstepsX{\!M}{#1}{(#2,#3)}{#4}{(#5,#6)}}
\newcommand{\mmeval}[6]{\ssscomputeX{\!M}{(#1,#2)}{(#3,#4)}{(#5,#6)}}
\newcommand{\mmoutput}{\sssoutputX{\!M}}
\newcommand{\mmterm}{\sssterminatesX{\!M}}
\newcommand{\outcode}{\cstfont{out}}
\newcommand\PC{PC}

In previous work, we have reduced the halting problem for Turing machines to \M{PCP}~\cite{FHS18} and on to a specialised halting problem for Minsky machines~\cite{FLW19} in Coq.
The specialised halting problem asked whether a machine on a given input halts in a configuration with all registers containing zeros.
In order to define Minsky machine recognisability, we consider a general halting problem which allows \textit{any} final configuration,
final meaning that computation cannot further proceed.
The adaptation of the formal proofs reducing \M{PCP} via binary stack machines to Minsky machines is quite straightforward and reuses the certified compiler for low-level languages defined in~\cite{FLW19}.

We first show that one can remove self loops from Minsky machines, i.e.\ instructions which jump to their own location, using the compositional reasoning techniques developed 
in~\cite{FLW19}.
We then formalise the \FRACTRAN language~\cite{Conway1987} and show how the halting problem for Minsky machines can be encoded into the halting problem for \FRACTRAN programs.
While the verification of Minsky machines can be complex and needs preliminary thoughts on compositional reasoning, the translation from Minsky machines to \FRACTRAN is elementary and needs no heavy machinery.

\subsection{Minsky Machines}

\label{mm_sect}

\setCoqFilename{MinskyMachines.MM}

We employ Minsky machines~\cite{Minsky} with instructions
$\iota : \instr_n \bnfdef \INC {(\alpha : \fin n)} \mid \DEC {(\alpha : \fin n)} {(p : \nat)}$.
A Minsky machine with $n$ registers is a sequence of consecutively indexed instructions
$s : \iota_0;~\ldots~s+k : \iota_k;$
represented as a pair $(s : \nat, [\iota_0; \ldots; \iota_k] : \List \instr_n)$.
Its state $(i,\vec v)$ is a program counter (PC) value $i : \nat$ and a vector of values for
registers $\vec v : \nat^n$.
$\INC \alpha$ increases the value of register $\alpha$ and the \PC\ by
one. $\DEC \alpha p$ decreases the value of register $\alpha$ by one
if that is possible and increases the \PC, or, if the
register is already $0$, jumps to \PC\ value $p$.
Given a Minsky machine $(s,P)$, we write $\mmsteps{(s,P)} {i_1} {\vec v_1} k {i_2} {\vec v_2}$ when $(s,P)$ transforms state $(i_1,{\vec v_1})$ into $(i_2,{\vec v_2})$ in $k$ steps 
of computation. 
For $(s,P)$ to do a step in state $(i, \vec v)$ the instruction at label $i$ in $(s, P)$ is considered.
We define 
{\setCoqFilename{Shared.Libs.DLW.Code.subcode}%
$\coqlink[out_code]{\outcode~i~(s,P)}\cdef i < s \ldisj \clength P+s \leq i$}
 to characterize when
label $i$ is outside of the code of $(s,P)$. In that case (and only that case), no computation step can occur. 
We define the halting problem for Minsky Machines as 
$$\begin{array}{c}
\coqlink[MM_HALTING]{\M{MM}\bigl(n : \nat, P : \List{\instr_n}, \vec v : \nat^n\bigr) \cdef \mmterm{(1,P)}{(1,\vec v)}}
\\[1ex]
\null\text{where}~
  \left\{\begin{array}{@{}l@{}} 
      \mmoutput{(s,P)}{(i,\vec v)}{(j,\vec w)}\cdef \exists k,\,\mmsteps{(s,P)} {i} {\vec v} k {j} {\vec w}\lconj \outcode~j~(s,P)\\
      \mmterm{(s,P)}{(i,\vec v)}\cdef \exists j\,\vec w,\,\mmoutput{(s,P)}{(i,\vec v)}{(j,\vec w)}
  \end{array}\right.
\end{array}$$
meaning that the machine $(1,P)$ has a terminating computation starting at state $(1,\vec v)$,
the value of the final state being irrelevant.
Notice that since $\M{MM}(n,P,\vec v)$ considers only Minsky machines starting at \PC\ value $1$, the 
\PC\ value $0$ is always outside of their code (i.e.\ $\outcode~0~(1,P)$ holds), hence
computations can be halted by jumping there. 
We refer to~\cite{FLW19} for a more in-depth formal description of those counter machines.
Note that the halting problem defined there as
\coqlink[MM_HALTS_ON_ZERO]{$\M{MM}_0\bigl(n : \nat, P : \List{\instr_n}, \vec v : \nat^n\bigr) \cdef \mmoutput{(1,P)}{(1,\vec v)}{(0,\vec 0)}$}
is more specific than the problem $\M{MM}$ above defined but
\setCoqFilename{MinskyMachines.Reductions.PCPb_to_MM}%
both are \coqlink[PCPb_MM_HALTS_ON_ZERO]{proved undecidable in our library}.

\setCoqFilename{MinskyMachines.MM.mm_no_self}

We say that a machine \emph{has a self loop} if it contains an instruction of the form $i : \DEC \alpha i$, i.e.\ 
jumps to itself in case the register $\alpha$ has value $0$, leading necessarily to non-termination (in case the \PC\
reaches value $i$).
For every machine $P$ with self loops, we can construct an equivalent machine $Q$ using one additional register 
$\alpha_0$ with constant value $0$, which has the same behaviour but no self loops.
Since the effect of a self loop 
$i:\DEC \alpha i$ is either
decrement and move to the next instruction at $i+1$ if $\alpha>0$
or else enter in a forever loop at $i$, it is easily simulated by a jump
to a length-2 cycle, i.e.\ replacing $i:\DEC \alpha i$ with
$i:\DEC \alpha j$ and adding $j:\DEC {\alpha_0} {(j+1)}; j+1:\DEC {\alpha_0} j$
somewhere near the end of the program. 

\begin{theorem}[][mm_remove_self_loops]
  Given a Minsky machine\/ $P$ with\/ $n$ registers one can compute a machine\/ $Q$ with\/ 
  $1+n$ registers and no self loops s.t.\ for any\/ $\vec v$, 
  $$\mmterm{(1,P)}{(1,\vec v)}~\toot~\mmterm{(1,Q)}{(1,0\ccons\vec v)}.$$
\end{theorem}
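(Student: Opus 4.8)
The plan is to implement the recipe sketched above and then establish the equivalence by two step-by-step simulations, reusing the compositional reasoning techniques for Minsky machines of~\cite{FLW19}. Concretely, writing $L\defeq\clength P$, I would build $Q$ by: adjoining a fresh $0$-th register $\alpha_0$ that no instruction ever increments --- so it stays $0$ throughout any run --- and shifting $P$'s register indices up by one, so that the originals become registers $1,\dots,n$ and the new one is register $0$; choosing two consecutive \PC\ values $J,J+1$ that exceed both $L+1$ and every jump target of $P$, and placing there the $2$-cycle $J:\DEC{\alpha_0}{(J+1)}$ and $J+1:\DEC{\alpha_0}{J}$; replacing every self loop $i:\DEC{\alpha}{i}$ of $P$ by $i:\DEC{\alpha}{J}$ (on the shifted register), leaving every other instruction of $P$ unchanged; and filling the positions $L+1,\dots,J-1$ with copies of the instruction $\DEC{\alpha_0}{0}$, which, since $\alpha_0=0$, always jumps to \PC\ $0$, a position outside $Q$'s code. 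Two observations make this work: since $\alpha_0$ is constantly $0$, the block $\{J,J+1\}$ is a genuine never-halting $2$-cycle; and since position $J-1$ holds a $\DEC{\alpha_0}{0}$ instruction, nothing falls through into $J$, so $\{J,J+1\}$ is reachable in $Q$ \emph{only} through the rerouted self loops. Furthermore, every \PC\ at which a run of $P$ could halt --- namely $0$ or a value in $[L+1,J-1]$, by the choice of $J$ --- still halts $Q$, directly or after one further $\DEC{\alpha_0}{0}$ step. That $Q$ has no self loop is then immediate: $\INC\alpha$ never jumps; the unchanged $\DEC$'s of $P$ keep a target $p\neq i$; the rerouted ones jump to $J>L\geq i$; and the adjoined ones jump to $0$, $J+1$, $J$ respectively, none of which equals its own position.

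Correctness follows from two simulations run in parallel, the only mathematical content being that \emph{a self loop at $i$ diverges exactly when its register is $0$, whereas the $2$-cycle diverges unconditionally}. For the forward direction, in a terminating run of $(1,P)$ from $(1,\vec v)$ the register of every self loop $i:\DEC{\alpha}{i}$ the run reaches must be positive --- otherwise the run would spin at $i$ forever --- so that step only decrements $\alpha$ and advances to $i+1$, which is exactly what $i:\DEC{\alpha}{J}$ does when $\alpha>0$; prepending the constant-$0$ register to every configuration therefore turns the run into a terminating run of $(1,Q)$ from $(1,0\ccons\vec v)$, the final exit of $P$ from its code being matched --- possibly after one extra $\DEC{\alpha_0}{0}$ step --- by an exit of $Q$ from its code. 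For the backward direction, a terminating run of $(1,Q)$ from $(1,0\ccons\vec v)$ can never enter $\{J,J+1\}$, hence never takes the jump of a rerouted instruction, so the register is positive each time the run reaches $i:\DEC{\alpha}{J}$; that step then behaves like the original self loop on its non-stuck branch, and deleting register $0$ recovers a terminating run of $(1,P)$ from $(1,\vec v)$.

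I expect the only real difficulty to be bureaucratic rather than conceptual: carrying out the two simulations within the \texttt{subcode}-style discipline of~\cite{FLW19} --- discharging the $\outcode$ side conditions at each step, maintaining the invariant ``$\{J,J+1\}$ is entered only via the redirects'', and tracking the register-index shift together with the configuration map $\vec w\mapsto 0\ccons\vec w$. There is essentially no number-theoretic or combinatorial content here; the work lies entirely in the low-level plumbing, which the Minsky-machine library is designed to absorb.
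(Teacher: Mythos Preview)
Your proposal is correct and follows essentially the same approach as the paper: add a fresh $0$-valued register, reroute every self loop to a length-$2$ cycle built from that register, and ensure that the outside jumps of $P$ cannot accidentally land in the cycle. The only difference is a minor implementation choice: the paper normalises \emph{all} outside jumps of $P$ to target $0$ and places the cycle at fixed positions $k{+}2,k{+}3$ (so $Q$ always has length $k{+}3$), whereas you keep outside jumps intact and instead pad with $\DEC{\alpha_0}{0}$ instructions out to a $J$ exceeding every jump target---this works just as well, at the cost of making the length of $Q$ depend on the jump targets of $P$.
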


\begin{proof}
We explain how any Minsky machine $(1,P)$ with $n$ registers can be transformed into an equivalent one
that uses an extra 0 valued spare register $\alpha_0=0\in\fin{1+n}$ and avoids self loops.
Let $k$ be the length of $P$ and let $P'$ be the Minsky machine with $1+n$ registers
defined by performing a 1-1 replacement of instructions of $(1,P)$: 
\begin{itemize}
\item instructions of the form $i : \INC \alpha$ are replaced by $i : \INC {(1+\alpha)}$;
\item self loops $i : \DEC \alpha i$ are replaced by $i : \DEC {(1+\alpha)} (2+k)$;
\item proper inside jumps $i : \DEC \alpha j$ for $i \neq j$ and $1\leq j\leq k$ are replaced by $i : \DEC {(1+\alpha)} j$;
\item and outside jumps $i : \DEC \alpha j$ for $j=0\ldisj k < j$ are replaced by $i : \DEC {(1+\alpha)} 0$.
\end{itemize} 
Then we define $Q\cdef P'\capp[\DEC {\alpha_0} 0;\DEC {\alpha_0} {(3+k)}; \DEC {\alpha_0} {(2+k)} ]$.
Notice that $P'$ is immediately followed $\DEC {\alpha_0} 0$, 
i.e.\ by an unconditional jump to $0$ (because $\alpha_0$ has value $0$),
and that $(1,Q)$ ends with the length-2 cycle composed of $2+k:\DEC {\alpha_0} {(3+k)}; 3+k:\DEC {\alpha_0} {(2+k)}$.
We show that $(1,Q)$ is a program without self loops (obvious) that satisfies the required simulation equivalence.
Indeed, self loops are replaced by jumps to the length-2 cycle that uses the unmodified
register $\alpha_0$ to loop forever. One should just be careful that the outside jumps of $(1,P)$ 
do not accidentally fall into that cycle and this is why we redirect them all to PC value $0$,
which halts the computation because $\outcode~0~(1,Q)$ holds.
\end{proof}

A predicate \emph{$R:\nat^n\cfun\Prop$ is $\M{MM}$-recognisable} if there exist $m :\nat$ and a
Minsky machine $P : \List{\instr_{n + m}}$ of $(n+m)$ registers such that for any $\vec v : \nat^n$ we have $R~\vec v\toot\mmterm{(1,P)}{(1,\vec v\capp\vec 0)}$. The last $m$ registers serve as spare registers during the computation. Notice that
not allowing for spare registers would make e.g.\ the empty predicate un-recognisable\rlap.%
\footnote{For any Minsky machine $(1,P)$, if it starts on large enough register values, 
for instance if they are all greater than the length of $P$, 
then no jump can occur and the machine terminates after its last instruction executes.
Such unfortunate behavior can be circumvented with a $0$-valued spare register.}\enspace
It is possible to limit the number of (spare) registers but that question is not essential in our development.

\subsection{The FRACTRAN language}

\newcommand{\fraction}[2]{{#1}/{#2}}
\newcommand{\ftstep}{\sssstepX{\!F}}
\newcommand{\ftterm}{\sssterminatesX{\!F}}
\newcommand{\ftcompute}{\ssscomputeX{\!F}}

\newcommand{\myprimes}[2]{\mathfrak{#1}_{#2}}
\newcommand{\pprime}[1]{\myprimes{p}{#1}}
\newcommand{\qprime}[1]{\myprimes{q}{#1}}

\setCoqFilename{FRACTRAN.FRACTRAN}

We formalise the language \FRACTRAN, introduced as a universal programming language for arithmetic by
Conway~\cite{Conway1987}.
A \FRACTRAN program $Q$ consists of a list of positive fractions 
$[\fraction {p_1} {q_1}; \ldots; \fraction {p_n} {q_n}]$.
The current state of a \FRACTRAN program is just a natural number~$s$.
The first fraction $\fraction{p_i}{q_i}$ in $Q$ such that $s\cdot(\fraction{p_i}{q_i})$ is still integral determines the successor state, 
which then is $s\cdot(\fraction{p_i}{q_i})$.
If there is no such fraction in $Q$, the program terminates.
 
We make this precise inductively for $Q$ 
being a list of fractions $\fraction pq:\nat \times \nat$:
\[
  \inferrule{q \cdot y = p \cdot x}{\ftstep{(\fraction p q\ccons Q)}{x}{y} }
  \qquad\qquad
  \inferrule{q \ndivides p \cdot x \and \ftstep Q x y}{\ftstep{(\fraction p q\ccons Q)}{x}{y}}
\]
i.e.\ at state $x$ the first fraction $\fraction p q$ in $Q$ where $q$
divides $p\cdot x$ is used, and $x$ is multiplied by $p$ and divided
by $q$. For instance, the \FRACTRAN program $[\fraction 5 7; \fraction 2 1]$ runs forever when starting 
from state $7$, producing the sequence $5 = 7\cdot(\fraction57)$, $10 = 5\cdot(\fraction21)$, 
$20 = 10\cdot(\fraction21)$ ...\footnote{No \FRACTRAN program can ever stop when it contains a 
fraction having an integer value like $\fraction 21$.}

We say that a \FRACTRAN program $Q = [\fraction {p_1}{q_1}; \ldots; \fraction{p_n}{q_n}]$ is \emph{regular} if none of its denominators is $0$, i.e.\ if $q_1 \neq 0,\ldots,q_n\neq 0$.
For a \FRACTRAN program $Q : \List{(\nat \times \nat)}$ and $s:\nat$, 
we define the decision problem as 
the question ``does $Q$ halt when starting from $s$'':
\[
  \coqlink[FRACTRAN_HALTING]{\FRACTRAN(Q,s)\cdef \ftterm{Q}{s}}\quad\text{with}~
  \ftterm{Q}{s} \cdef \exists x,\,\ftcompute Q s x\lconj\forall y,\,\lneg\ftstep Q x y
\]
Following~\cite{Conway1987}, we now show how (regular) \FRACTRAN halting can be used to simulate Minsky machines halting.
The idea is to use a simple Gödel encoding of the states of a Minsky machine.
We first fix two infinite sequences of prime numbers $\pprime 0, \pprime 1, \dots$ and $\qprime 0,\qprime 1, \dots$ all distinct from each other.
We define the encoding of $n$-register Minsky machine states as 
$\ol {(i,\vec v)} \cdef \pprime i\qprime 0^{x_0} \cdots \qprime {n-1}^{x_{n-1}}$ where 
$\vec v = [x_0, \dots, x_{n-1}]$:
\begin{itemize}
\item
To simulate the step semantics of Minsky machines for $i:\INC \alpha$, we divide the encoded state 
by $\pprime i$ and multiply by $\pprime{i+1}$ for the change in PC value, and increment the register $\alpha$ 
by multiplying with $\qprime\alpha$, hence we add the fraction 
$\fraction{\pprime{i+1}\qprime\alpha}{\pprime i}$;
\item
To simulate $i:\DEC \alpha j$ when $\vec v_\alpha = 1+n$ we divide by $\pprime i$, 
multiply by $\pprime{i+1}$ and decrease register $\alpha$ by dividing by $\qprime\alpha$, hence 
we add the fraction $\fraction{\pprime{i+1}}{\pprime i\qprime\alpha}$;
\item
To simulate $i:\DEC \alpha j$ when $\vec v_\alpha = 0$ we divide by $\pprime i$ and multiply by 
$\pprime j$. To make sure that this is only executed when the previous rule does not apply, 
we add the fraction $\fraction{\pprime j}{\pprime i}$ \emph{after} the fraction
$\fraction{\pprime{i+1}}{\pprime i\qprime\alpha}$.
\end{itemize}
In short, we define the encoding of labelled instructions and then programs as
\[\begin{array}{@{}r@{~\cdef~}l@{}}
      \ol{(i,\INC \alpha)} & [\fraction{\pprime{i+1}\qprime\alpha}{\pprime i}] \\
\ol{(i,\DEC \alpha j)} & [ \fraction{\pprime{i+1}}{\pprime i\qprime\alpha}; \fraction{\pprime j}{\pprime i} ]
\end{array}\qquad
\ol {(i, [\iota_0; \ldots; \iota_k])} \cdef 
\ol{(i,\iota_0)}  \capp \cdots \capp \ol{(i+k,\iota_k)}.\]
Notice that we only produce regular programs and
that a self loop like $i:\DEC \alpha i$, jumping on itself when $\vec v_\alpha = 0$,
will generate the fraction $\fraction{\pprime i}{\pprime i}$ potentially capturing any
state $\ol {(j,\vec v)}$ even when $j\neq i$.
So this encoding does not work on Minsky machines containing self loops because the
corresponding \FRACTRAN\ program would never terminate, even when the \PC\ never 
reaches self loops. 

\setCoqFilename{FRACTRAN.FRACTRAN.mm_fractran}

\begin{lemma}[][mm_fractran_simulation]
  If\/ $(1,P)$ has no self loops then\/ $\mmterm{(1,P)}{(1,\vec v)}\toot\ftterm{\ol{(1,P)}}{\ol{(1,\vec v)}}$.
\end{lemma}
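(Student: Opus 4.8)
The plan is to show that the Gödel encoding $(i,\vec v)\mapsto\ol{(i,\vec v)}$ is a lock-step isomorphism between the computations of the Minsky machine $(1,P)$ and those of the \FRACTRAN program $\ol{(1,P)}$, and then to transfer termination along it. The heart of the argument is a \emph{locality lemma} identifying, for an encoded state $\ol{(i,\vec v)}=\pprime i\,\qprime 0^{x_0}\cdots\qprime{n-1}^{x_{n-1}}$, exactly which fraction of $\ol{(1,P)}$ fires. Since the $\pprime k$ and $\qprime k$ are pairwise distinct primes, $\ol{(i,\vec v)}$ is a positive integer divisible by $\pprime i$, divisible by $\qprime\alpha$ iff $x_\alpha>0$, and divisible by no other $\pprime k$. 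Every fraction generated by the instruction at label $l$ has $\pprime l$ in its denominator, and every generated fraction $\fraction pq$ is in lowest terms — the only generated fraction with a common factor would be the degenerate $\fraction{\pprime i}{\pprime i}$ coming from a self loop, which is excluded by hypothesis — so $q\divides p\cdot\ol{(i,\vec v)}$ iff $q\divides\ol{(i,\vec v)}$. Hence no fraction from an instruction at a label $l\neq i$ can fire at $\ol{(i,\vec v)}$, and, using the auxiliary fact that a \FRACTRAN step on a concatenation $l_1\capp l_2$ where no fraction of $l_1$ fires equals a step on $l_2$, I can restrict attention to the (at most one) instruction at label $i$.

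First I would carry out the resulting case split, matching the \FRACTRAN semantics to the Minsky step semantics. If $\outcode~i~(1,P)$, there is no instruction at $i$, no fraction fires, and $\ol{(i,\vec v)}$ is \FRACTRAN-halting. If the instruction is $\INC\alpha$, the single fraction $\fraction{\pprime{i+1}\qprime\alpha}{\pprime i}$ fires and sends $\ol{(i,\vec v)}$ to $\ol{(i+1,\vec v')}$ with $\alpha$ incremented in $\vec v'$. If it is $\DEC\alpha j$ with $\vec v_\alpha>0$, then $\qprime\alpha\divides\ol{(i,\vec v)}$, so $\fraction{\pprime{i+1}}{\pprime i\qprime\alpha}$ fires and yields $\ol{(i+1,\vec v')}$ with $\alpha$ decremented. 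If it is $\DEC\alpha j$ with $\vec v_\alpha=0$, the first fraction cannot fire because $\qprime\alpha\ndivides\ol{(i,\vec v)}$, so the next fraction $\fraction{\pprime j}{\pprime i}$ fires and yields $\ol{(j,\vec v)}$. The no-self-loop hypothesis is exactly what makes this last case correct, since it guarantees $j\neq i$ and hence that $\fraction{\pprime j}{\pprime i}$ is not the always-firing fraction $\fraction{\pprime i}{\pprime i}$. Summarising: $\ol{(1,P)}$ steps from $\ol{(i,\vec v)}$ to $y$ iff $(1,P)$ steps from $(i,\vec v)$ to some $(i',\vec v')$ with $y=\ol{(i',\vec v')}$, and $\ol{(i,\vec v)}$ is \FRACTRAN-halting iff $(i,\vec v)$ is Minsky-halting iff $\outcode~i~(1,P)$.

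Next, since $\ol{\cdot}$ is injective by unique factorisation, I would lift this single-step correspondence by induction on the number of steps to the iterated relation, noting in particular that every state reachable by $\ol{(1,P)}$ from $\ol{(1,\vec v)}$ is again of the form $\ol{(i',\vec v')}$, so the \FRACTRAN run never escapes the image of the encoding. Finally I unfold $\mmterm{(1,P)}{(1,\vec v)}$ and $\ftterm{\ol{(1,P)}}{\ol{(1,\vec v)}}$: a Minsky halting computation reaches a state $(j,\vec w)$ with $\outcode~j~(1,P)$, which maps to a \FRACTRAN computation $\ol{(1,\vec v)}\semstep^*\ol{(j,\vec w)}$ ending in a state where no fraction fires; conversely a \FRACTRAN halting computation ends in some $x$ with no fraction firing, but $x$ must equal $\ol{(j,\vec w)}$ for a reachable Minsky state, and no fraction firing forces $\outcode~j~(1,P)$, giving a Minsky halting computation.

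The main obstacle I expect is the locality lemma, and inside it the bookkeeping over the list structure $\ol{(1,P)}=\ol{(1,\iota_0)}\capp\cdots\capp\ol{(1+k,\iota_k)}$: one must argue that scanning this concatenation of fraction-blocks really passes through all blocks for labels before $i$ without firing, and then picks the correct fraction inside the block for label $i$ (first versus second fraction of a $\DEC$, according to the tested register value). Once this single-step correspondence is nailed down, injectivity of the encoding, the induction on step counts, and the final unfolding of the termination predicates are all routine.
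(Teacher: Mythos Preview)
Your proposal is correct and follows the same overall strategy as the paper: establish a one-to-one step correspondence between the Minsky machine and the \FRACTRAN program on encoded states, then lift it by induction on computation length to the termination predicates. The only organisational difference is that the paper splits the single-step argument into a forward simulation (by the same case analysis you describe) and then derives the backward simulation indirectly from forward simulation plus totality of the Minsky step relation on in-code states and determinism of the regular \FRACTRAN step, whereas you prove both directions at once via your locality lemma. Your route makes the ``no other fraction fires'' bookkeeping explicit up front; the paper's route avoids repeating the case analysis for the backward direction by appealing to determinism, which is slightly shorter to state but relies on the same arithmetic facts under the hood. One minor remark: you invoke injectivity of the encoding when lifting to iterated steps, but in fact you do not need it---your backward single-step correspondence already guarantees that every \FRACTRAN successor of an encoded state is itself an encoded state, which is all the induction requires.
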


\begin{proof}
  Let $(i,P)$ be a Minsky machine with no self loops. 
  We show that the simulation of $(i,P)$ by $\ol{(i,P)}$ is 1-1, 
  i.e.\ each step is simulated by one step. We first show the forward
  simulation, i.e.\ that
  $\mmstep{(i,P)}{i_1}{\vec v_1}{i_2}{\vec v_2}$ entails
  $\ftstep{\ol{(i,P)}}{\ol{(i_1,\vec v_1)}}{\ol{(i_2,\vec v_2)}}$,
  by case analysis.
  Conversely we show that if $\ftstep{\ol{(i,P)}}{\ol{(i_1,\vec v_1)}}{\mathit{st}}$
  holds then $\mathit{st} = \ol{(i_2,\vec v_2)}$ for some $(i_2, \vec v_2)$ 
  such that $\mmstep{(i,P)}{i_1}{\vec v_1}{i_2}{\vec v_2}$. Backward simulation
  involves the totality of \M{MM} one step semantics and the determinism of regular 
  \FRACTRAN one step semantics combined
  with the forward simulation.

  Using these two simulation results, the desired equivalence follows by induction 
  on the length of terminating computations.
\end{proof}

\begin{theorem}[][mm_fractran_n]
For any\/ $n$-register Minsky machine\/ $P$ one can compute a regular \FRACTRAN program\/
$Q$ s.t.\/ $\mmterm{(1,P)}{(1,[x_1;\ldots;x_n])}\toot\ftterm Q{\pprime 1\qprime 1^{x_1}\cdots\qprime n^{x_n}}$
holds for any\/ $x_1,\ldots,x_n$.
\end{theorem}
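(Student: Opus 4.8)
The plan is to read off the theorem as a two-step composition: first eliminate self loops with \Cref{coq:mm_remove_self_loops}, then apply the one-step simulation \Cref{coq:mm_fractran_simulation}. Recall that the Gödel encoding $\ol{\cdot}$ cannot be applied directly to a machine with self loops: a self loop $i:\DEC\alpha i$ produces the fraction $\pprime i/\pprime i$, which matches every state $\ol{(j,\vec v)}$ and so turns every terminating computation into a non-terminating one. So the first move is unavoidable, and after it the argument is just bookkeeping about prime indices.

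Concretely, given the $n$-register machine $P$, I first apply \Cref{coq:mm_remove_self_loops} to compute a machine $P'$ with $1+n$ registers, no self loops, and with the spare register $\alpha_0$ in slot $0$ holding the constant value $0$, such that for all $\vec v:\nat^n$,
\[\mmterm{(1,P)}{(1,\vec v)}~\toot~\mmterm{(1,P')}{(1,0\ccons\vec v)}.\]
Since $P'$ is self-loop-free, \Cref{coq:mm_fractran_simulation} applies to it and gives, for all $\vec w:\nat^{1+n}$,
\[\mmterm{(1,P')}{(1,\vec w)}~\toot~\ftterm{\ol{(1,P')}}{\ol{(1,\vec w)}}.\]
I then set $Q\cdef\ol{(1,P')}$. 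This $Q$ is computable, being obtained from $P$ by the computable function of \Cref{coq:mm_remove_self_loops} followed by the simple recursive definition of $\ol{\cdot}$ on labelled instructions and programs (using the computable prime sequences $(\pprime i)_{i\in\nat}$ and $(\qprime i)_{i\in\nat}$); and it is regular because, as observed when defining $\ol{\cdot}$, that encoding only ever produces fractions with nonzero denominators.

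It remains to specialise and chain the two equivalences at the input state. Writing $\vec v=[x_1;\ldots;x_n]$, the vector $0\ccons\vec v$ has length $1+n$ with components indexed $0,\dots,n$, so by definition of the encoding
\[\ol{(1,0\ccons\vec v)} ~=~ \pprime 1\cdot\qprime 0^{\,0}\cdot\qprime 1^{\,x_1}\cdots\qprime n^{\,x_n} ~=~ \pprime 1\qprime 1^{x_1}\cdots\qprime n^{x_n},\]
the trivial factor $\qprime 0^{0}=1$ coming from the spare register being $0$. Composing the first equivalence (at $\vec v$) with the second (at $\vec w\cdef 0\ccons\vec v$) yields, for all $x_1,\ldots,x_n$,
\[\mmterm{(1,P)}{(1,[x_1;\ldots;x_n])}~\toot~\mmterm{(1,P')}{(1,0\ccons\vec v)}~\toot~\ftterm{Q}{\pprime 1\qprime 1^{x_1}\cdots\qprime n^{x_n}},\]
which is the claim. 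The one point that has to be got right — and the only place the argument is more than routine — is this index alignment: the spare register introduced to break self loops must sit at index $0$, so that the genuine registers of $P$ are carried by the primes $\qprime 1,\ldots,\qprime n$ exactly as in the statement, rather than by $\qprime 0,\ldots,\qprime{n-1}$.
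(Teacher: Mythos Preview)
Your proof is correct and follows essentially the same approach as the paper: remove self loops via \Cref{coq:mm_remove_self_loops}, apply \Cref{coq:mm_fractran_simulation} to the resulting machine, and observe that the spare register at index~$0$ contributes the trivial factor $\qprime 0^{0}=1$ to the encoded start state. Your treatment is in fact more explicit about regularity, computability, and the index alignment than the paper's own argument.
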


\begin{proof}
Using \Cref{coq:mm_remove_self_loops}, 
we first compute a Minsky machine $(1,P_1)$ equivalent to $(1,P)$ but with one extra 0-valued spare register
and no self loops. Then we apply~\Cref{coq:mm_fractran_simulation} to $(1,P_1)$
and let $Q\cdef\ol{(1,P_1)}$. The program $Q$ is obviously regular and 
given $\vec v=[x_1;\ldots;x_n]$, the encoding of
the starting state $(1,0\ccons\vec v)$ for $(1,P_1)$  
is $\pprime 1\qprime 0^0\qprime 1^{x_1}\cdots\qprime n^{x_n}$ hence the result.
\end{proof}

This gives us a formal constructive proof that (regular) \FRACTRAN is Turing complete as a model of computation
and is consequently undecidable.

\setCoqFilename{FRACTRAN.FRACTRAN_undec}
\begin{corollary}[][Fractran_UNDEC]
  $\M{Halt}$ reduces to $\FRACTRAN$.
\end{corollary}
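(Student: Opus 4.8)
The plan is to chain the previously established reductions. The excerpt has already proved (Corollary immediately above, $\M{Halt}\reducesto\M{PCP}$ from~\cite{FHS18}), plus $\M{PCP}\reducesto\M{MM}$ from~\cite{FLW19} (in the slightly generalised form stated in Section~\ref{mm_sect}), and Theorem~\coqlink[mm_fractran_n]{} which, combined with its companion \Cref{coq:mm_remove_self_loops}, shows $\M{MM}\reducesto\FRACTRAN$. Since $\reducesto$ is transitive — given reduction functions $f:X\to Y$ witnessing $P\reducesto Q$ and $g:Y\to Z$ witnessing $Q\reducesto R$, the composite $g\circ f$ witnesses $P\reducesto R$, and $g\circ f$ is again typeable in Coq without axioms, so its computability is automatic in the synthetic setting — composing the three reductions yields $\M{Halt}\reducesto\FRACTRAN$.

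Concretely, I would first invoke the already-stated corollary $\M{Halt}\reducesto\M{PCP}$ and the prior-work fact $\M{PCP}\reducesto\M{MM}$ to obtain $\M{Halt}\reducesto\M{MM}$ by transitivity. Then I would apply Theorem~\coqlink[mm_fractran_n]{}: given an $\M{MM}$-instance $(n,P,\vec v)$ with $\vec v=[x_1;\ldots;x_n]$, one computes a regular \FRACTRAN program $Q$ with $\mmterm{(1,P)}{(1,\vec v)}\toot\ftterm{Q}{\pprime 1\qprime 1^{x_1}\cdots\qprime n^{x_n}}$; the map $(n,P,\vec v)\mapsto\bigl(Q,\ \pprime 1\qprime 1^{x_1}\cdots\qprime n^{x_n}\bigr)$ is then a many-one reduction $\M{MM}\reducesto\FRACTRAN$, since $\M{MM}(n,P,\vec v)\defeq\mmterm{(1,P)}{(1,\vec v)}$ and $\FRACTRAN(Q,s)\defeq\ftterm{Q}{s}$ by definition. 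One more application of transitivity closes the argument.

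There is essentially no mathematical obstacle here: the corollary is a pure bookkeeping consequence of work already done. The only point deserving a line of care is that the preservation of \emph{both} validity and invalidity required by the notion of many-one reduction is exactly the biconditional $\toot$ provided by Theorem~\coqlink[mm_fractran_n]{}, so no separate argument for the negative instances is needed; and that the Gödel-coded starting state $\pprime 1\qprime 1^{x_1}\cdots\qprime n^{x_n}$ is computable from $\vec v$, which is clear. Hence the statement follows.
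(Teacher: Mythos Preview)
Your proposal is correct and follows exactly the paper's approach: compose the previously established reductions $\M{Halt}\reducesto\M{PCP}$ from~\cite{FHS18}, $\M{PCP}\reducesto\M{MM}$ from~\cite{FLW19}, and $\M{MM}\reducesto\FRACTRAN$ from \Cref{coq:mm_fractran_n}, using transitivity of $\reducesto$. The additional remarks you make about the explicit reduction map and the role of the biconditional are accurate elaborations of what the paper leaves implicit.
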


\begin{proof}
\Cref{coq:mm_fractran_n} gives us a reduction from $\M{MM}$ to $\FRACTRAN$
which can be combined with the reduction of $\M{Halt}$ to $\M{PCP}$ from~\cite{FHS18}
and a slight modification of $\M{PCP}$ to $\M{MM}$ from~\cite{FLW19}.
\end{proof}

\section{Diophantine Encoding of \FRACTRAN}

\setCoqFilename{H10.Fractran.fractran_dio}

We show that a single step of \FRACTRAN computation is a Diophantine relation.

\begin{lemma}[][dio_rel_fractran_step]
  For any\/ \FRACTRAN program\/ $Q:\List(\nat\times\nat)$, one can compute a map
  $$\forall f\,g:(\Var\cfun\nat)\cfun\nat,\,\diofun\,f\cfun\diofun\,g\cfun \diorel\,(\abst \gval {\,\ftstep Q {f\,\gval} {g\,\gval}}).$$
\end{lemma}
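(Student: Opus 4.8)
The plan is to unfold the inductive definition of a single \FRACTRAN step along the list $Q$, turning $\abst\gval{\ftstep{Q}{f\,\gval}{g\,\gval}}$ into a finite Boolean combination of atoms of the form ``$q \ndivides p\cdot x$'' and ``$q\cdot y = p\cdot x$'', each of which is already a recognised Diophantine shape, and then let $\dioauto$ assemble a witness formula. The reduction is the evident structural recursion on $Q$, so its computability is automatic in the synthetic setting; I fix $f$, $g$ with $\diofun\,f$ and $\diofun\,g$ and argue by induction on $Q$.

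If $Q = [\,]$, no step rule applies, so $\abst\gval{\ftstep{[\,]}{f\,\gval}{g\,\gval}}$ is extensionally $\abst\gval{\False}$: I switch to this goal by map~4 of Proposition~\ref{diorel_prop1} and finish by map~2 of Proposition~\ref{diorel_prop}. If $Q = \fraction p q \ccons Q'$, inversion on the two inference rules defining \FRACTRAN steps gives, for all $x,y$,
\[
  \ftstep{(\fraction p q \ccons Q')}{x}{y} \;\toot\; \bigl(q\cdot y = p\cdot x\bigr) \ldisj \bigl(q \ndivides p\cdot x \lconj \ftstep{Q'}{x}{y}\bigr).
\]
Applying map~4 of Proposition~\ref{diorel_prop1} with $x := f\,\gval$ and $y := g\,\gval$, the goal becomes $\diorel$ of the corresponding disjunction. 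Here $\abst\gval{p\cdot f\,\gval}$ and $\abst\gval{q\cdot g\,\gval}$ are Diophantine functions (constants by map~2 of Proposition~\ref{prop:diofun_basic}, then products by map~5), so $\abst\gval{q\cdot g\,\gval = p\cdot f\,\gval}$ is Diophantine by map~6 of Proposition~\ref{prop:diofun_basic}; the relation $\abst\gval{q \ndivides p\cdot f\,\gval}$ is Diophantine by Proposition~\ref{ndiv_dio_prop} (instantiated with the constant $q$ and the function $\abst\gval{p\cdot f\,\gval}$); and $\abst\gval{\ftstep{Q'}{f\,\gval}{g\,\gval}}$ is Diophantine by the induction hypothesis, used for the same $f$, $g$. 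Closure under conjunction and disjunction (maps~1 and~2 of Proposition~\ref{diorel_prop1}) then closes the case; in the mechanisation this last assembly is exactly what $\dioauto$ discharges, once the induction hypothesis is in context and $\ndivides$ has been registered in the hint database through Proposition~\ref{ndiv_dio_prop}.

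I expect no real obstacle. The only points deserving a remark are that the two disjuncts above need not be mutually exclusive --- which is irrelevant, since we use $\ldisj$ and not an exclusive disjunction --- and that no regularity hypothesis on $Q$ is assumed; but the displayed equivalence follows verbatim from the two clauses of the inductive step relation regardless of whether some denominator vanishes, and Proposition~\ref{ndiv_dio_prop} already handles ``$q \ndivides p\cdot x$'' for a possibly zero $q$, so nothing special is needed.
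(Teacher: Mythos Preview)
Your proposal is correct and follows essentially the same approach as the paper: structural induction on $Q$, the base case via the equivalence with $\False$, and the inductive case via the same displayed disjunction, closed by the already-registered Diophantine shapes (the paper simply delegates the final assembly to \dioauto\ with the induction hypothesis as a local hint, which amounts to the closure maps you spell out explicitly). Your closing remarks on non-exclusivity of the disjuncts and the absence of a regularity assumption are accurate and not discussed in the paper's proof.
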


\begin{proof}
The map is built by induction on $Q$.
If $Q=\cnil$, then we show ${\ftstep \cnil {f\,\gval} {g\,\gval}}\toot\False$, and 
thus $\diorel\,(\abst \gval { \ftstep Q {f\,\gval} {g\,\gval}})$ by map~4 of
Proposition~\ref{diorel_prop1} followed by \dioauto.
If $Q$ is a composed list $Q = \fraction p q \ccons Q'$, then we show the equivalence
\[{\Bigl(\ftstep {\bigl(\fraction p q \ccons Q'\bigr)} {f\,\gval} {g\,\gval}\Bigr)}~~\toot~~{\Bigl(q\cdot(g\,\gval) = p\cdot(f\,\gval)\Bigr) \ldisj \Bigl(q \ndivides \bigl(p\cdot(f\,\gval)\bigr)\Bigr) \lconj \Bigl(\ftstep {Q'} {f\,\gval} {g\,\gval}\Bigr)}\]
and we derive $\diorel\,(\abst \gval { \ftstep Q {f\,\gval} {g\,\gval}})$
by map~4 of Proposition~\ref{diorel_prop1} followed by \dioauto, 
the induction hypothesis being used locally as a hint for the tactic.
\end{proof}

In addition, the ``$Q$ has terminated at $x$'' predicate 
is Diophantine for any \FRACTRAN program $Q$. The proof 
is similar to the previous one:

\begin{lemma}[][dio_rel_fractran_stop]
  For any\/ \FRACTRAN program\/ $Q:\List(\nat\times\nat)$, one can compute a map
 $$\forall f:(\Var\cfun\nat)\cfun\nat,\,\diofun\,f\cfun \diorel\,(\abst \gval {\forall y,\lneg\ftstep Q {f\,\gval} y}).$$
\end{lemma}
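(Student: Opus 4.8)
The plan is to proceed by structural induction on the list $Q$, closely mirroring the proof of the preceding lemma for $\ftstep{}{}{}$ itself. The guiding observation is that no \FRACTRAN\ step is possible from a state $x$ precisely when, for \emph{every} fraction $\fraction p q$ occurring in $Q$, the denominator $q$ fails to divide $p\cdot x$. Indeed, unfolding the two inference rules defining $\ftstep{}{}{}$, the first rule is blocked for all $y$ exactly when $q\ndivides p\cdot x$ (since $\exists y,\,q\cdot y = p\cdot x$ amounts to $q\divides p\cdot x$), and once that holds the second rule reduces to ``no step from the tail $Q'$''. Hence the ``$Q$ has terminated'' predicate unfolds into a finite conjunction of ``does not divide'' constraints, each of which is already a recognised Diophantine shape by Proposition~\ref{ndiv_dio_prop}.

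Concretely, in the base case $Q=\cnil$ the statement $\forall y,\lneg\ftstep\cnil{f\,\gval}{y}$ holds for every $\gval$ — the empty program admits no step — so it is equivalent to $\abst\gval{\True}$ and thus Diophantine by map~1 of Proposition~\ref{diorel_prop}. In the step case $Q=\fraction p q\ccons Q'$, I would first establish, for all $\gval$, the equivalence
\[\Bigl(\forall y,\lneg\ftstep{(\fraction p q\ccons Q')}{f\,\gval}{y}\Bigr)~\toot~\Bigl(q\ndivides p\cdot(f\,\gval)\Bigr)\lconj\Bigl(\forall y,\lneg\ftstep{Q'}{f\,\gval}{y}\Bigr),\]
and then conclude with map~4 of Proposition~\ref{diorel_prop1} followed by \dioauto: the first conjunct is Diophantine because $\abst\gval{q}$ and $\abst\gval{p\cdot(f\,\gval)}$ are Diophantine functions (basic shapes from Proposition~\ref{prop:diofun_basic} using the hypothesis $\diofun\,f$) combined with Proposition~\ref{ndiv_dio_prop}, while the second conjunct is exactly the induction hypothesis, supplied to the tactic as a local hint just as in the previous lemma.

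The only genuinely non-routine step is verifying the displayed equivalence, i.e.\ how a universal negation interacts with the two-rule inductive step relation. The forward direction uses that if no step fires then in particular the first rule cannot fire — forcing $q\ndivides p\cdot(f\,\gval)$ — and that this, in turn, blocks the second rule precisely when $Q'$ admits no step; the backward direction is a simple case split on the two rules. Since all of this is reasoning about divisibility on $\nat$, which is decidable, there is no constructivity subtlety, and the degenerate case $q=0$ needs no special treatment. Everything else — the basic Diophantine shapes, transferring along a proved equivalence via map~4, and the automation — is already available and reproduces the shape of the proof of the preceding lemma.
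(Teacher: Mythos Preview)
Your proposal is correct and follows essentially the same approach as the paper: induction on $Q$, base case via the equivalence with $\True$, and in the cons case the same displayed equivalence $(\forall y,\lneg\ftstep{(\fraction p q\ccons Q')}{f\,\gval}{y})\toot q\ndivides p\cdot(f\,\gval)\lconj(\forall y,\lneg\ftstep{Q'}{f\,\gval}{y})$, discharged by map~4 of Proposition~\ref{diorel_prop1} plus \dioauto\ with the induction hypothesis as a local hint. Your extra justification of the equivalence and the remark on decidability of divisibility are sound elaborations of steps the paper leaves implicit.
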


\begin{proof}
The map $\forall f,\,\diofun\,f\cfun \diorel\,(\abst \gval {\forall y,\lneg{\ftstep Q {f\,\gval} y}})$ 
is built by induction on $Q$. 
If $Q=\cnil$, then we show $(\forall y,\lneg{\ftstep \cnil {f\,\gval} y})\toot\True$, and 
thus $\diorel\,(\abst \gval {\forall y,\lneg{\ftstep Q {f\,\gval} y}})$ by map~4 of
Proposition~\ref{diorel_prop1} followed by \dioauto.
If $Q = \fraction p q \ccons Q'$, then we show the equivalence
\[\Bigl({\forall y,\lneg{\ftstep Q {f\,\gval} y}}\Bigr)~~\toot~~
\Bigl(q\ndivides \bigl(p\cdot(f\,\gval)\bigr)\Bigr) \lconj \Bigl(\forall y,\lneg{\ftstep {Q'} {f\,\gval} y}\Bigr)\]
and we get $\diorel\,(\abst \gval {\forall y,\lneg{\ftstep Q {f\,\gval} y}})$
by map~4 of Proposition~\ref{diorel_prop1} followed by \dioauto,
the induction hypothesis being used as a hint again.
\end{proof}

We can now deduce a core result of the paper which states that
\FRACTRAN programs have Diophantine termination predicates.

\setCoqFilename{H10.Fractran.fractran_dio}
\begin{theorem}[][FRACTRAN_HALTING_on_diophantine]
  If\/ $Q:\List(\nat\times\nat)$ is a \FRACTRAN program then
  one can compute a map
  $$\forall f:(\Var\cfun\nat)\cfun\nat,\, \diofun\,f\to 
  \diorel\,(\abst \gval {\,\ftterm Q{f\,\gval}}).$$
\end{theorem}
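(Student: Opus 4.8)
The plan is to unfold the definition of $\ftterm Q{f\,\gval}$ and assemble the result from the three facts already proved: a single \FRACTRAN step is Diophantine (\Cref{coq:dio_rel_fractran_step}), the ``$Q$ has stopped'' predicate is Diophantine (\Cref{coq:dio_rel_fractran_stop}), and the reflexive–transitive closure of a Diophantine binary relation is again Diophantine (\Cref{coq:dio_rel_rt}). Write $R$ for the one-step relation of $Q$, i.e.\ $R\,x\,y\cdef\ftstep Q x y$, which is a binary relation over $\nat$; recall that $\ftterm Q s\toot\exists x,\,\ftcompute Q s x\lconj\forall y,\lneg\ftstep Q x y$ and that, by definition, $\ftcompute Q s x$ coincides with $R^\kleenestar\,s\,x$.

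First I would apply map~4 of Proposition~\ref{diorel_prop1} to replace the goal $\diorel(\abst\gval{\ftterm Q{f\,\gval}})$ by the equivalent
$$\diorel\bigl(\abst\gval{\exists x,\,R^\kleenestar\,(f\,\gval)\,x\lconj\forall y,\lneg\ftstep Q x y}\bigr).$$
Then map~3 of Proposition~\ref{diorel_prop1} pulls the witness $x$ out to the front, turning the goal into
$$\diorel\bigl(\abst\gval{R^\kleenestar\,(f\,(\dcut\gval))\,(\gval\,\dvar 0)\lconj\forall y,\lneg\ftstep Q{(\gval\,\dvar 0)}y}\bigr),$$
and map~1 of Proposition~\ref{diorel_prop1} splits this conjunction into two independent subgoals.

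For the ``stopped'' subgoal I would invoke \Cref{coq:dio_rel_fractran_stop} with the projection $\abst\gval{\gval\,\dvar 0}$, which is a Diophantine function by map~1 of Proposition~\ref{prop:diofun_basic}. For the closure subgoal I would apply \Cref{coq:dio_rel_rt} instantiated at the relation $R=\ftstep Q$, supplying the Diophantine functions $\abst\gval{f\,(\dcut\gval)}$ — obtained from the hypothesis $\diofun\,f$ by the renaming map~3 of Proposition~\ref{prop:diofun_basic} for the shift $\dvar i\mapsto\dvar{1+i}$ — and $\abst\gval{\gval\,\dvar 0}$; its remaining premise $\diorel(\abst\gval{R\,(\gval\,\dvar 1)\,(\gval\,\dvar 0)})$ is precisely \Cref{coq:dio_rel_fractran_step} applied to the two projection functions, Diophantine by map~1 of Proposition~\ref{prop:diofun_basic}. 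In the mechanisation all of this bookkeeping is handled by \dioauto\ once \Cref{coq:dio_rel_fractran_step}, \Cref{coq:dio_rel_fractran_stop} and \Cref{coq:dio_rel_rt} are registered as hints. I do not expect a real obstacle here: the genuinely hard content — that reflexive–transitive closure preserves Diophantineness, which itself rests on the eliminations of the exponential relation and of bounded universal quantification — is already packaged inside \Cref{coq:dio_rel_rt}, so the only points needing care are the De Bruijn bookkeeping when moving the witness $x$ to the outer level and the observation that $\ftcompute Q{\,\cdot\,}{\,\cdot\,}$ is exactly $R^\kleenestar$ so that \Cref{coq:dio_rel_rt} applies verbatim. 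Note finally that regularity of $Q$ plays no role, since \Cref{coq:dio_rel_fractran_step} and \Cref{coq:dio_rel_fractran_stop} hold for arbitrary lists of fractions.
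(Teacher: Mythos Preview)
Your proposal is correct and follows exactly the same approach as the paper: unfold the definition of termination, then combine \Cref{coq:dio_rel_rt} (for the reflexive--transitive closure), \Cref{coq:dio_rel_fractran_step} (for the step relation) and \Cref{coq:dio_rel_fractran_stop} (for the stopping predicate). You have simply spelled out in detail the De~Bruijn bookkeeping and the precise maps of Propositions~\ref{diorel_prop1} and~\ref{prop:diofun_basic} that the paper leaves to \dioauto.
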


\begin{proof}
  By definition we have $\ftterm Q{f\,\gval}\toot\exists x\,(\ftcompute Q {f\,\gval} x\lconj\forall y,\,\lneg\ftstep Q x y)$
  and hence we obtain the claim using \Cref{coq:dio_rel_rt} together with \Cref{coq:dio_rel_fractran_step}
  and \Cref{coq:dio_rel_fractran_stop}.
\end{proof}

We conclude with the undecidability of Hilbert's tenth problem
by a reduction chain starting from the Halting problem for single tape Turing 
machines:

\setCoqFilename{H10.H10_undec}
\begin{theorem}[Hilbert's tenth problem][Hilberts_Tenth]
We have the following reduction chain
\[         \M{Halt}
\reducesto \M{PCP}
\reducesto \M{MM}
\reducesto \FRACTRAN
\reducesto \M{DIO\_FORM}
\reducesto \M{DIO\_ELEM}
\reducesto \M{DIO\_SINGLE}
\reducesto \M{H10}\]
and as a consequence, \coqlink[H10_undec]{\Hten is undecidable}.
\end{theorem}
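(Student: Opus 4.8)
The plan is to establish the displayed chain one link at a time, each link being either a reduction imported from earlier work or one of the theorems proved above, and then to read off the undecidability of $\Hten$ from the composite reduction $\M{Halt} \reducesto \M{H10}$ obtained by transitivity of $\reducesto$, using that in the synthetic setting ``$P$ is undecidable'' is by definition ``$\M{Halt}$ reduces to $P$''. Since $\reducesto$ is transitive and each individual reduction function is a plain Coq term, the computability of the composite reduction is automatic, so the whole proof is just the assembly of the pieces.

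The three leftmost links concern models of computation. $\M{Halt} \reducesto \M{PCP}$ is the reduction of the single-tape Turing machine halting problem to the Post correspondence problem from~\cite{FHS18}. $\M{PCP} \reducesto \M{MM}$ is the reduction through binary stack machines to the (here slightly generalised) Minsky machine halting problem from~\cite{FLW19}, reused essentially unchanged. And $\M{MM} \reducesto \FRACTRAN$ is \Cref{coq:mm_fractran_n}: one first removes self-loops using \Cref{coq:mm_remove_self_loops}, then applies the Gödel-style prime encoding together with the step-by-step simulation \Cref{coq:mm_fractran_simulation}, producing from an $n$-register machine $P$ a regular \FRACTRAN program $Q$ with $\mmterm{(1,P)}{(1,\vec v)} \toot \ftterm{Q}{\pprime 1 \qprime 1^{x_1}\cdots\qprime n^{x_n}}$.

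The remaining links pass from \FRACTRAN into the object-level representations of Diophantine relations. For $\FRACTRAN \reducesto \M{DIO\_FORM}$ I would instantiate the functional parameter of \Cref{coq:FRACTRAN_HALTING_on_diophantine} with the projection $\abst{\gval}{\gval\,\dvar 0}$, which is Diophantine by map~1 of Proposition~\ref{prop:diofun_basic}; this yields $\diorel(\abst{\gval}{\ftterm{Q}{\gval\,\dvar 0}})$, i.e.\ a Diophantine logic formula $A_Q:\dform$ with $\sem{A_Q}{\gval} \toot \ftterm{Q}{\gval\,\dvar 0}$ for all $\gval$, and the instance $(Q,s)$ then maps to $(A_Q,\gval_s)$ where $\gval_s\,\dvar 0 \defeq s$. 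Next, $\M{DIO\_FORM} \reducesto \M{DIO\_ELEM}$ is Theorem~\ref{form_to_elem_thm}, which computes from $A$ an equisatisfiable list $l$ of elementary constraints; $\M{DIO\_ELEM} \reducesto \M{DIO\_SINGLE}$ is Theorem~\ref{elem_single_thm}, collapsing such a list into one polynomial identity via the convexity identity of Proposition~\ref{convexity2}; and finally $\M{DIO\_SINGLE} \reducesto \M{H10}$ is obtained by substituting the given parameter valuation into the equation --- turning the finitely many parameters that actually occur into constants --- and then applying \Cref{coq:dio_poly_eq_pos} to re-index the remaining bound variables over a finite type $\fin n$, landing in a parameter-free single Diophantine equation, which is exactly an instance of $\M{H10}$.

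Composing the eight reductions gives $\M{Halt} \reducesto \M{H10}$; since $\M{Halt}$ is undecidable, so is $\Hten$. I do not expect a genuine mathematical obstacle here --- every link is a cited result or has been proved above. The only thing needing care is interface bookkeeping when threading the three $\M{DIO\_*}$ reductions: one must track the valuation/parameter conventions so that the instance emerging at the $\M{H10}$ end really has no parameters left (it works out because the \FRACTRAN step already fixes the input $s$, so after the substitution step only the $n$ existential variables survive), and one must ensure every step stays axiom-free, since that is what makes the composite reduction itself a legitimate Coq function and hence a bona fide many-one reduction.
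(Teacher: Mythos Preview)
Your proposal is correct and follows essentially the same approach as the paper, which simply states that the proof combines Theorems~\ref{coq:mm_fractran_n}, \ref{coq:FRACTRAN_HALTING_on_diophantine} and Corollary~\ref{coq:dio_rel_single}. Your version spells out each link in the chain and the bookkeeping needed at the $\M{DIO\_SINGLE}\reducesto\M{H10}$ step (substituting the valuation for the finitely many occurring parameters, then re-indexing bound variables via \Cref{coq:dio_poly_eq_pos}), which is exactly the intended assembly of the previously established reductions.
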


\begin{proof} The proof combines previous results as Theorems~\ref{coq:mm_fractran_n} and~\ref{coq:FRACTRAN_HALTING_on_diophantine} and \Cref{coq:dio_rel_single}. \end{proof}

\section{The Davis-Putnam-Robinson-Matiyasevich Theorem}

We give a proof of an instance of the DPRM theorem stating that
recursively enumerable predicates are Diophantine\rlap.\footnote{By instance,
we mean that the DPRM is an open theorem bound to be extended
for any newly proposed Turing complete model of computation.}\enspace 
Here we assume that the informal notion of ``recursive enumerability''
(justified by Church's thesis) can be characterised by Minsky machines recognisability
as defined in Section~\ref{mm_sect}.

\setCoqFilename{H10.Fractran.fractran_dio}

\begin{proposition}[][fractran_exp_diophantine]
The Gödel encoding is Diophantine, i.e.\ we have
$\diofun\,(\abst\gval{\qprime 1^{\gval\,x_0}\cdots\qprime n^{\gval\,x_{n-1}}})$.
\end{proposition}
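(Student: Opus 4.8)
The plan is to prove the claim by induction on $n$, peeling one factor off the \emph{end} of the product so that the induction hypothesis applies verbatim to the remaining prefix and no De Bruijn re\-indexing is needed. Note that a one\-line call to \dioauto\ does not suffice, precisely because $n$ is not a fixed constant: the product has a variable length, so an induction is required; but each individual step will be routine and, in the mechanisation, discharged automatically.

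For the base case $n = 0$ the product is empty, so $\abst\gval{\qprime 1^{\gval\,x_0}\cdots\qprime n^{\gval\,x_{n-1}}}$ is the constant function $\abst\gval{1}$, which is a Diophantine function by map~2 of Proposition~\ref{prop:diofun_basic}. For the step from $n$ to $n+1$, I would write the product as $\bigl(\qprime 1^{\gval\,x_0}\cdots\qprime n^{\gval\,x_{n-1}}\bigr)\cdot\qprime{n+1}^{\gval\,x_n}$. The left factor is Diophantine by the induction hypothesis. For the right factor, the constant map $\abst\gval{\qprime{n+1}}$ is Diophantine by map~2 and the projection $\abst\gval{\gval\,x_n}$ is Diophantine by map~1 of Proposition~\ref{prop:diofun_basic}, so $\abst\gval{\qprime{n+1}^{\gval\,x_n}}$ is Diophantine by the Exponential Theorem~\ref{expo_dio_thm}. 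Finally, the product of the two Diophantine factors is Diophantine by map~5 (closure under multiplication) of Proposition~\ref{prop:diofun_basic}. Concretely, in Coq this whole inductive step is closed by \dioauto\ once the induction hypothesis is supplied locally as a hint and the exponential shape is already in the hint database.

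I do not expect a real obstacle here: the statement is an immediate consequence of closure of Diophantine shapes under multiplication together with the (already black\-boxed) Diophantineness of exponentiation from Section~\ref{sec:expo_buq}. The only point needing a little attention is the bookkeeping of the De Bruijn variables $x_0,\dots,x_{n-1}$ in the inductive step; splitting off the \emph{last} factor rather than the first avoids any renaming, and in any case the unification performed inside \dioauto\ resolves such index manipulations by itself.
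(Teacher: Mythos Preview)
Your proposal is correct and follows essentially the same approach as the paper: induction on $n$, using the closure properties of Proposition~\ref{prop:diofun_basic} together with the Diophantineness of the exponential (\Cref{coq:dio_fun_expo}). The paper's proof sketch is terser but identical in structure; your remark about peeling the last factor to avoid De~Bruijn reindexing is a sensible mechanisation detail that the paper leaves implicit.
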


\begin{proof}
By induction on $n:\nat$ using Proposition~\ref{prop:diofun_basic} and \Cref{coq:dio_fun_expo}.
Notice that the $\qprime i$'s are hard-coded in the Diophantine representation,
which means we do not need to encode the algorithm that actually computes them,
which would otherwise be very painful.
\end{proof}

\begin{lemma}[][FRACTRAN_HALTING_on_exp_diophantine]
  For any \FRACTRAN program $Q$ we have
  $\diorel\,\bigl(\abst \gval{\,\ftterm Q{\pprime 1\qprime 1^{\gval\,x_0}\cdots\qprime n^{\gval\,x_{n-1}}}}\bigr).$
\end{lemma}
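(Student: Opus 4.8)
The plan is to read off the Gödel encoding as a Diophantine function and then hand it to \Cref{coq:FRACTRAN_HALTING_on_diophantine}. Concretely, set $f\cdef\abst\gval{\pprime 1\qprime 1^{\gval\,x_0}\cdots\qprime n^{\gval\,x_{n-1}}}$. First I would establish $\diofun\,f$: by \Cref{coq:fractran_exp_diophantine} the product $\abst\gval{\qprime 1^{\gval\,x_0}\cdots\qprime n^{\gval\,x_{n-1}}}$ is already known to be Diophantine, and since $\pprime 1$ is a fixed constant, item~2 of Proposition~\ref{prop:diofun_basic} gives $\diofun\,(\abst\gval{\pprime 1})$, so item~5 of the same proposition (closure of Diophantine functions under multiplication) yields $\diofun\,f$. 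In the mechanisation this single step is discharged by \dioauto\ once \Cref{coq:fractran_exp_diophantine} sits in the hint database.

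It then remains to apply \Cref{coq:FRACTRAN_HALTING_on_diophantine} to the program $Q$ and the function $f$, which produces $\diorel\,(\abst\gval{\ftterm Q{f\,\gval}})$; since $f\,\gval=\pprime 1\qprime 1^{\gval\,x_0}\cdots\qprime n^{\gval\,x_{n-1}}$ by definition of $f$, this is precisely the statement to be proved.

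There is no real obstacle here, all the difficulty having been absorbed into \Cref{coq:fractran_exp_diophantine} and \Cref{coq:FRACTRAN_HALTING_on_diophantine}. The only point worth flagging is that the primes $\pprime 1$ and $\qprime 1,\dots,\qprime n$ must enter the Diophantine witness as literal numerical constants rather than as the output of a primality computation; this is exactly what \Cref{coq:fractran_exp_diophantine} arranges by hard-coding the $\qprime i$, so that no prime-enumeration algorithm ever needs to be encoded Diophantinely.
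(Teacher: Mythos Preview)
Your proposal is correct and matches the paper's proof essentially verbatim: reduce to showing $\diofun\,f$ via \Cref{coq:FRACTRAN_HALTING_on_diophantine}, then obtain $\diofun\,f$ from \Cref{coq:fractran_exp_diophantine} together with the constant and multiplication clauses of Proposition~\ref{prop:diofun_basic}. Your added remark about hard-coding the primes is also exactly the point the paper makes (in the proof of \Cref{coq:fractran_exp_diophantine}).
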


\begin{proof}
By \Cref{coq:FRACTRAN_HALTING_on_diophantine}, we only have to show
$\diofun\,f$ for $f\,\gval\defeq \pprime 1\qprime 1^{\gval\,x_0}\cdots\qprime n^{\gval\,x_{n-1}}$.
This follows from  Propositions~\ref{prop:diofun_basic} and~\ref{coq:fractran_exp_diophantine}.
\end{proof}

To simplify the notation $\semb p{\vec v}{\vec w}$ below, we abusively identify 
the vector $\vec v:\nat^n$ (resp.\ $\vec w:\nat^m$) with the valuation 
$\abst{(i:\fin n)}{\vec v_i}$ (resp.\ $\abst{(j:\fin m)}{\vec w_j}$)
that accesses the components of the vector $\vec v$ (resp.\ $\vec w$).

\setCoqFilename{H10.DPRM}
\begin{theorem}[DPRM][DPRM_n]
  Any $\M{MM}$-recognisable relation\/ $R:\nat^n\cfun\Prop$ is Diophantine: 
  one can compute a single Diophantine equation\/ $p\dequal q:\dsingle(\fin m,\fin n)$ 
with\/ $n$ parameters and\/ $m$ variables s.t.\/ $\forall \vec v : \nat^n,\, R~\vec v \toot \exists \vec w : \nat^m,\,
\semb p{\vec v}{\vec w} = \semb q{\vec v}{\vec w}$.
\end{theorem}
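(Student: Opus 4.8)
The plan is to compose the \FRACTRAN bridge with the Diophantine-encoding results already in hand, and then to cut down both variable universes to finite types. First I would unfold $\M{MM}$-recognisability: there are $k:\nat$ and a Minsky machine $P:\List{\instr_{n+k}}$ such that $R~\vec v\toot\mmterm{(1,P)}{(1,\vec v\capp\vec 0)}$ for every $\vec v:\nat^n$, where $\vec 0:\nat^k$ are the spare registers. Applying \Cref{coq:mm_fractran_n} to the $(n+k)$-register machine $P$ produces a regular \FRACTRAN program $Q$ with $\mmterm{(1,P)}{(1,[x_1;\ldots;x_{n+k}])}\toot\ftterm Q{\pprime 1\qprime 1^{x_1}\cdots\qprime{n+k}^{x_{n+k}}}$ for all $x_1,\ldots,x_{n+k}$. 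Specialising the last $k$ arguments to $0$, so that $\qprime{n+1}^0\cdots\qprime{n+k}^0=1$, and reading the first $n$ as the components of $\vec v$, I get $R~\vec v\toot\ftterm Q{\pprime 1\qprime 1^{\vec v_0}\cdots\qprime n^{\vec v_{n-1}}}$.

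Next I would turn the right-hand side into a Diophantine relation. By \Cref{coq:FRACTRAN_HALTING_on_exp_diophantine} (which is \Cref{coq:FRACTRAN_HALTING_on_diophantine} fed with \Cref{coq:fractran_exp_diophantine} for the Gödel code), the relation $S\defeq\abst\gval{\ftterm Q{\pprime 1\qprime 1^{\gval\,x_0}\cdots\qprime n^{\gval\,x_{n-1}}}}$ satisfies $\diorel\,S$. Moreover $S~\gval$ depends only on $\gval\,x_0,\ldots,\gval\,x_{n-1}$, so for any $\gval:\nat\cfun\nat$ that sends $x_i$ to $\vec v_i$ for $i<n$ we have $S~\gval\toot R~\vec v$. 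Feeding $\diorel\,S$ into \Cref{coq:dio_rel_single} yields a single Diophantine equation $p_0\dequal q_0:\dsingle(\nat,\nat)$ with $S~\gval\toot\exists\genv:\nat\cfun\nat,\,\semb{p_0}\gval\genv=\semb{q_0}\gval\genv$ for all $\gval$.

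Finally I would restrict the two variable universes. \Cref{coq:dio_poly_eq_pos} computes $m:\nat$ and $p_1\dequal q_1:\dsingle(\fin m,\nat)$ with $(\exists\genv:\nat\cfun\nat,\,\semb{p_0}\gval\genv=\semb{q_0}\gval\genv)\toot(\exists\genv:\fin m\cfun\nat,\,\semb{p_1}\gval\genv=\semb{q_1}\gval\genv)$, bounding the bound variables. For the parameters, since $S$ ignores every parameter of index $\geq n$, I would substitute $0$ for each such parameter (finitely many) occurring in $p_1$ and $q_1$ and faithfully re-index the surviving parameters $x_0,\ldots,x_{n-1}$ through $\fin n$, obtaining $p\dequal q:\dsingle(\fin m,\fin n)$; for a valuation $\gval$ extending $\vec v:\nat^n$ by zeros this substitution does not change $\semb{p_1}\gval\genv$, so $\semb{p_1}\gval\genv=\semb p{\vec v}\genv$ under the abusive identification of vectors with valuations. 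Composing the chain of equivalences and renaming $\genv$ to the vector $\vec w:\nat^m$ gives $R~\vec v\toot\exists\vec w:\nat^m,\,\semb p{\vec v}{\vec w}=\semb q{\vec v}{\vec w}$, as required.

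The genuinely hard content of DPRM, namely that the exponential relation and bounded universal quantification are Diophantine shapes, is already discharged inside the black boxes used by \Cref{coq:FRACTRAN_HALTING_on_exp_diophantine} and \Cref{coq:dio_rel_single}, so what remains here is essentially bookkeeping: matching the De Bruijn indices across \Cref{coq:mm_fractran_n}, \Cref{coq:FRACTRAN_HALTING_on_exp_diophantine} and \Cref{coq:dio_rel_single}, and making sure the spare registers are zeroed so that $\vec v\capp\vec 0$ is encoded exactly as $\pprime 1\qprime 1^{\vec v_0}\cdots\qprime n^{\vec v_{n-1}}$. The one step that is not purely mechanical is the restriction of the parameters to $\fin n$, since it rests on the semantic observation that $S$ depends only on its first $n$ arguments, which justifies substituting zeros for the extra parameters.
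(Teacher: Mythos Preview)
Your proof is correct and follows essentially the same route as the paper: unfold $\M{MM}$-recognisability, pass through \Cref{coq:mm_fractran_n} and \Cref{coq:FRACTRAN_HALTING_on_exp_diophantine} to get a Diophantine relation, apply \Cref{coq:dio_rel_single}, then restrict bound variables to $\fin m$ and replace the irrelevant parameters $x_i$ for $i\geq n$ by the constant~$0$ to land in $\dsingle(\fin m,\fin n)$. Your explicit invocation of \Cref{coq:dio_poly_eq_pos} for the bound-variable step and your remark that the parameter restriction is justified by the semantic dependence of $S$ on only $x_0,\ldots,x_{n-1}$ match the paper's argument exactly.
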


\begin{proof}
  By definition, $R:\nat^n\cfun\Prop$ is recognised by some Minsky 
  machine $P$ with $(n+m)$ registers, i.e.\ $R~\vec v \toot \mmterm{(1,P)}{(1,\vec v\capp\vec 0)}$.
  By \Cref{coq:mm_fractran_n}, we compute a \FRACTRAN program $Q$ s.t.\ 
  $\mmterm{(1,P)}{(1,[v_1;\ldots;v_n;w_1;\ldots;w_m])}\toot\ftterm 
  Q{\pprime 1\qprime 1^{v_1}\cdots\qprime n^{v_n}
    \qprime {n+1}^{w_1}\cdots\qprime{n+m}^{w_m}}$. 

  Hence
  we deduce $R~[v_1;\ldots;v_n] \toot \ftterm Q{\pprime 1\qprime 1^{v_1}\cdots\qprime n^{v_n}}$.
  As a consequence, the relation  $\abst\gval{R~[\gval\,x_0;\ldots;\gval\,x_{n-1}]}$ is Diophantine
  by \Cref{coq:FRACTRAN_HALTING_on_exp_diophantine}. 
  By \Cref{coq:dio_rel_single},
  there is a Diophantine equation $p\dequal q:\dsingle(\nat,\Var)$ such that
  $R~[\gval\,x_0;\ldots;\gval\,x_{n-1}]\toot \exists\genv,\, \semb p\gval\genv = \semb q\gval\genv$.
  Notice that the value $\gval\,x_i$ of any parameter of $p\dequal q$ greater than 
  $x_n$ does not influence its solvability.

  Now let $m$ be an upper bound of the number of (existentially quantified) variables 
  in $p\dequal q$. We injectively map those variables in $\fin m$ and we project
  the parameters of $p\dequal q$  onto $\fin n$ by replacing every parameter
  greater than $x_n$ with the $0$ constant. We get a Diophantine equation 
  $p'\dequal q':\dsingle(\fin m,\fin n)$ of which the solvability at $\vec v$ is equivalent 
  to $R~\vec v$.
\end{proof}

\section{Hilbert's Tenth Problem Over Integers}
\label{sec:lagrange}

In our formalisation, polynomials are defined over natural numbers, that is both constants and solutions come from $\nat$.
The standard way to extend the undecidability of $\M{H10}$ to a formalisation based on integers is via Lagrange's theorem, stating that an integer is positive if and only if it is the sum of four squares.

Similar to Definition~\ref{dio_polynomial_nat}, we define define polynomials over integers:

\setCoqFilename{H10.H10Z}
\begin{definition}[][dio_polynomial]
  The type of\/ \emph{Diophantine polynomials} $\dpolyz(\BVar,\Var)$ over\/ $\Z$ is defined by:
  \[p,q : \dpolyz(\BVar,\Var) \bnfdef u : \BVar \mid \genvar :\Var \mid z :\Z \mid p\dplus q\mid p\dmult q. \]
\end{definition}

The interpretation of a polynomial $p: \dpolyz(\BVar,\Var)$ in $\Z$ given $\genv:\BVar\to\Z$, $\gval:\Var\to\Z$ and denoted
$\semb p\gval\genv$, is defined in the obvious way. Again, if $p:\dpolyz(\fin m,\fin n)$ we abusively write 
$\semb p{\vec v}{\vec w}$ when $\vec v:\nat^n$ and $\vec w:\nat^m$.
We can then define 
$$\coqlink[H10Z]{\Htenz}\bigl(n, p : \dpolyz(\fin n, \fin 0)\bigr) := \exists \vec w : \nat^n,\, \semb{p}\cnil{\vec w} = 0$$
that is ``does the polynomial equation $p(x_1,\ldots,x_n)=0$ in (at most) $n$ variables have a solution in 
$\Z$\rlap.''\enspace
We first outline a proof of Lagrange's theorem and then reduce \Hten to \Htenz.

\subsection{Lagrange's theorem}

The proof we have implemented roughly follows 
the ``classical proof'' in {\color{ACMDarkBlue}\href{https://en.wikipedia.org/wiki/Lagrange's_four-square_theorem}{Wikipedia's account}} of the theorem. Their use of the ``classical'' qualifier should be understood 
as typical/standard, and certainly not as opposed to constructive/intuitionistic.
The below proof perfectly fits in our constructive setting.

\setCoqFilename{H10.ArithLibs.lagrange}

\begin{proposition}[Euler, 1748][Euler_squares]
\label{prop:euler}
Let us assume the two equations\/ 
      $n = a_1^2+b_1^2+c_1^2+d_1^2$ 
and\/ $m = a_2^2+b_2^2+c_2^2+d_2^2$ hold
in\/ $\Z$. Let us define the four relative
intergers:
$$\begin{array}{c@{\qquad}c}
  a\defeq a_1a_2+b_1b_2+c_1c_2+d_1d_2 
& b\defeq a_1b_2-b_1a_2+d_1c_2-c_1d_2 \\
  c\defeq a_1c_2-c_1a_2+b_1d_2-d_1b_2
& d\defeq a_1d_2-d_1a_2+c_1b_2-b_1c_2. \\
\end{array}$$ 
Then the identity\/ $n m = a^2+b^2+c^2+d^2$ holds.
\end{proposition}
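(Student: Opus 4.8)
The statement is a polynomial identity over a commutative ring. After using the two hypotheses to rewrite $n$ and $m$, both sides of $nm = a^2+b^2+c^2+d^2$ become polynomials in the eight integer indeterminates $a_1,b_1,c_1,d_1,a_2,b_2,c_2,d_2$, with the definitions of $a,b,c,d$ substituted in. Hence it suffices to check that these two polynomials are equal in $\Z[a_1,\dots,d_2]$. The plan is to expand
\[ nm = (a_1^2+b_1^2+c_1^2+d_1^2)(a_2^2+b_2^2+c_2^2+d_2^2) \]
into its sixteen ``pure'' monomials (each of the form: a square from the first factor times a square from the second), to expand each of $a^2,b^2,c^2,d^2$ from the bilinear definitions, and to sum the latter four. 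The squared contributions of the four summands of $a,b,c,d$ reproduce exactly the sixteen pure monomials of $nm$, while all the mixed contributions—the ones carrying a factor $2$—cancel in pairs. In the mechanisation this is discharged by a single call to the \cstfont{ring} tactic once the hypotheses have been used to rewrite $n$ and $m$, since $\Z$ is a commutative ring and the identity holds at the level of formal polynomials.

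\textbf{The substantive point.} The only thing a human checker needs to see is the pairwise cancellation of the mixed terms. Schematically, a monomial such as $a_1 a_2 b_1 b_2$ appears (with a factor $2$) in $a^2$ from the cross term of the summands $a_1a_2$ and $b_1b_2$ of $a = a_1a_2+b_1b_2+c_1c_2+d_1d_2$, and reappears with the opposite sign in $b^2$ from the cross term of the summands $a_1b_2$ and $-b_1a_2$ of $b = a_1b_2-b_1a_2+d_1c_2-c_1d_2$. One verifies that every mixed monomial arising among $a^2,b^2,c^2,d^2$ occurs exactly twice, once with each sign, so that the whole mixed part vanishes. This is precisely the multiplicativity of the quaternion norm: writing $q_1 = a_1+b_1\mathbf{i}+c_1\mathbf{j}+d_1\mathbf{k}$ and $q_2 = a_2+b_2\mathbf{i}+c_2\mathbf{j}+d_2\mathbf{k}$, the quadruple $(a,b,c,d)$ is, for the sign conventions chosen here, the coordinate vector of $q_1\overline{q_2}$, and $N(q_1)N(q_2)=N(q_1\overline{q_2})$ yields the claim. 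For the formal proof, however, quaternions need not be introduced at all—the computation is self-contained.

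\textbf{Main obstacle.} There is no mathematical depth here; the difficulty is purely bookkeeping. The sign pattern in the definitions of $b$, $c$, $d$ must be exactly the one that makes the expansion close, so the real work is confirming those signs (which amounts to fixing the multiplication table of $\mathbf{i},\mathbf{j},\mathbf{k}$) before invoking \cstfont{ring}. No case analysis is required, no positivity hypothesis is used, and the identity holds over any commutative ring, so the result transfers verbatim to the setting of $\Z$ that is needed for the reduction $\M{H10}\reducesto\Htenz$ in \Cref{sec:lagrange}.
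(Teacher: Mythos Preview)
Your proposal is correct and matches the paper's approach exactly: the paper notes that the identity holds in any commutative ring and that the proof is a single call to the \cstfont{ring} tactic. Your additional explanation of the monomial cancellations and the quaternion-norm interpretation is helpful exposition but not needed for the formal argument.
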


\begin{proof}
This holds in any commutative ring and the
proof just calls the \cstfont{ring} tactic. 
\end{proof}

Because of \Cref{coq:Euler_squares}, ``being the sum of four squares'' is
a multiplicative property. Hence to show that it holds for every
natural number, it is enough to establish it for primes.

\setCoqFilename{Shared.Libs.DLW.Utils.prime}

\begin{theorem}[Prime induction][prime_rect]
\label{thm:primeind}
Let\/ $P:\nat\cfun\Prop$ be a predicate.\/ To establish\/ $\forall n:\nat,\,P\,n$,
it is enough to prove these four induction steps: 
$$ P\,0\qquad P\,1
\qquad  \forall a\,b:\nat,\, P\,a\cfun P\,b\cfun P(a b)
\qquad  \forall p:\nat,\,\cstfont{prime}\,p\cfun P\,p.$$ 
\end{theorem}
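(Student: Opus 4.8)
The plan is to proceed by strong (course-of-values) induction on $n$, using the four supplied steps to discharge the base cases and the inductive step. Concretely, I would first prove the strengthened statement $\forall n,\, \bigl(\forall m,\, m < n \cfun P\,m\bigr) \cfun P\,n$ and then conclude $\forall n,\, P\,n$ by well-founded induction of $\nat$ with respect to $<$ (equivalently, by the usual strong-induction principle already available in the library).

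So fix $n$ together with the induction hypothesis $IH : \forall m,\, m < n \cfun P\,m$, and distinguish three cases. If $n = 0$, conclude by the first step $P\,0$. If $n = 1$, conclude by the second step $P\,1$. Otherwise $n \geq 2$, and here I invoke the decomposition lemma for numbers at least $2$: either $n$ is prime, or $n$ is properly composite, i.e.\ there exist $a,b$ with $2 \leq a$, $2 \leq b$ and $n = a\cdot b$. In the prime case, conclude immediately by the fourth step $\cstfont{prime}\,p \cfun P\,p$ instantiated at $n$. In the composite case, from $n = a\cdot b$ with $b \geq 2$ we get $2a \leq a b = n$, hence $a \leq n/2 < n$ since $n > 0$, and symmetrically $b < n$; therefore $IH$ applies to both $a$ and $b$, yielding $P\,a$ and $P\,b$, and the third step $P\,a \cfun P\,b \cfun P(ab)$ gives $P(a\cdot b)$, which is $P\,n$.

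The only non-routine ingredient is the decomposition lemma for $n \geq 2$ — namely that such an $n$ is prime or admits a factorisation into two factors that are each at least $2$ (equivalently: a non-prime $n \geq 2$ has a divisor $d$ with $1 < d < n$, so $n = d\cdot(n/d)$ with both factors $\geq 2$) — together with the small arithmetic bookkeeping that these factors are \emph{strictly} smaller than $n$, which is what makes the induction hypothesis applicable. In our setting this decomposition is already part of the \cstfont{prime} utilities, so the proof reduces to a short well-founded induction, the three case splits above, and these library lemmas; I expect the bound $a,b < n$ to be the only place demanding a line of explicit reasoning.
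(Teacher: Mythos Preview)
Your proof is correct and follows essentially the same route as the paper: strong induction on $n$, handling $n<2$ directly and otherwise factoring $n$ so the induction hypothesis applies to the smaller pieces. The only cosmetic difference is that the paper first isolates (again by strong induction) a \emph{prime} factor of $n$ rather than an arbitrary nontrivial factorisation $n=ab$ with $a,b\geq 2$, but either decomposition feeds the same multiplicative step and yields the same overall argument.
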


\begin{proof}
This is one possible form of the \emph{fundamental theorem of arithmetic}.
For the proof, first show by strong induction on $n:\nat$ that 
one can discriminate whether $n<2$ or compute a prime factor 
of $n$, including the possibility that $n$ itself is prime. 
Then, to prove the prime induction principle, 
proceed by strong induction again.
\end{proof}

\setCoqFilename{H10.ArithLibs.lagrange}
\begin{lemma}[][lagrange_prelim]
\label{lemma:langrage_prime}
If\/ $p:\nat$ is prime then\/ $\exists n\,a\,b:\nat,\,n p = 1+a^2+b^2\lconj 0<n<p$.
\end{lemma}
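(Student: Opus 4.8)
The plan is to follow the standard pigeonhole argument for this classical step. First I would dispose of the case $p = 2$ directly: take $n \defeq 1$, $a \defeq 1$, $b \defeq 0$, so that $n p = 2 = 1 + 1^2 + 0^2$ and $0 < 1 < 2$.

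Next, assuming $p$ is an odd prime, write $p = 2k+1$ with $k \geq 1$ and consider the two sets of residues modulo $p$,
\[ S \defeq \{\, a^2 \bmod p \mid 0 \leq a \leq k \,\}, \qquad T \defeq \{\, (p-1-b^2) \bmod p \mid 0 \leq b \leq k \,\}. \]
The first key step is to check that $a \mapsto a^2 \bmod p$ is injective on $\{0,\dots,k\}$: if $\modeq{a^2}{(a')^2}{p}$ then $p \divides (a-a')(a+a')$, and since $0 \leq a + a' \leq 2k < p$, either $a + a' = 0$ (which forces $a = a' = 0$) or $p \divides a - a'$, and $|a - a'| \leq k < p$ then gives $a = a'$. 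The same computation shows $b \mapsto (p-1-b^2) \bmod p$ is injective on $\{0,\dots,k\}$, so $|S| = |T| = k+1$. Since $|S| + |T| = 2k + 2 > p$ while $S, T \subseteq \Zp p$, a counting (pigeonhole) argument produces $a,b$ with $0 \leq a,b \leq k$ and $\modeq{a^2}{p-1-b^2}{p}$, that is, $p \divides 1 + a^2 + b^2$.

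It then remains to extract $n$. Setting $n p \defeq 1 + a^2 + b^2$, we have $n > 0$ because $1 + a^2 + b^2 \geq 1$ and $p > 0$; and for the upper bound $1 + a^2 + b^2 \leq 1 + 2k^2 < (2k+1)^2 = p^2$, the inequality $1 + 2k^2 < 4k^2 + 4k + 1$ reducing to $0 < 2k(k+2)$, which holds for $k \geq 1$. Hence $n < p$, which finishes the argument.

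The hard part will be the pigeonhole step in the mechanisation: one must formalise that two duplicate-free lists of residues in $\Zp p$ whose combined length exceeds $p$ necessarily share a common value. This needs a small counting lemma about $\Zp p$ (equivalently, about lists of bounded natural numbers) together with the two injectivity facts above, which themselves rely only on elementary divisibility and modular reasoning; the surrounding integer inequalities and the $p = 2$ case are entirely routine.
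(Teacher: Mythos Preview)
Your proposal is correct and follows essentially the same pigeonhole argument as the paper: handle $p=2$ separately, for odd $p=2k+1$ use injectivity of $a\mapsto a^2$ and $b\mapsto -(1+b^2)$ on $\{0,\dots,k\}$ in $\Zp p$, apply pigeonhole to force an intersection, and bound $n$ via $1+a^2+b^2<p^2$. The only cosmetic differences are that the paper phrases the second map as $b\mapsto -(1+b^2)$ in $\Zp p$ rather than $(p-1-b^2)\bmod p$, and invokes ``$\Zp p$ has no zero divisor'' where you spell out the factorisation $p\mid(a-a')(a+a')$ explicitly.
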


\begin{proof}
Let us first rule out the case $p=2$ which has an obvious solution.
So let us write $p=2m+1$ because all the other primes are odd. 
In the field $\Zp p$, let us study the modular equation $\modeq{a^2}{-(1+b^2)}p$.
Because $\Zp p$ has no zero divisor, 
the map $a\mapsto a^2$ from $[0,m]\cfun \Zp p$
is injective. As a consequence, so is the map
$b\mapsto -(1+b^2)$ from $[0,m]\cfun \Zp p$.
Hence none of the two lists $\bigl[a^2:\Zp p\mid a\in[0,m]\bigr]$ and
$\bigl[-(1+b^2):\Zp p\mid b\in[0,m]\bigr]$ contain a duplicate.
Since they have a combined length of $2(m+1)>2m+1$ and their
concatenation is contained in a list of length $p=2m+1$
enumerating $\Zp p$,
by the pigeon hole principle, they must intersect and this gives $a,b\in[0,m]$
such that $\modeq{a^2}{-(1+b^2)}p$. This in turn
gives $n:\nat$ such that $np = 1+a^2+b^2$.
Given that $2a<p$ and $2b<p$, we deduce $0<n<p$.
\end{proof}

Lagrange's theorem gives a Diophantine characterisation of those
relative integers which are positive as the sum of four squares.

\begin{theorem}[Lagrange, 1770][lagrange_theorem_Z]
For any relative integer\/ $z:\Z$, $z$ is positive if and only if\/
there exists\/ $a,b,c,d:\Z$ such that\/ $z=a^2+b^2+c^2+d^2$. 
\end{theorem}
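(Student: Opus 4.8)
The ``if'' direction is immediate: if $z = a^2+b^2+c^2+d^2$ with $a,b,c,d:\Z$, then $z \geq 0$. For the ``only if'' direction, assume $z \geq 0$; then $z$ is the embedding of some $n : \nat$, so it suffices to prove that every natural number is a sum of four integer squares. The plan is to prove this by the prime induction principle of Theorem~\ref{thm:primeind}. The cases $n = 0$ and $n = 1$ are witnessed by $0^2+0^2+0^2+0^2$ and $1^2+0^2+0^2+0^2$; the multiplicative case $n = ab$, with $a$ and $b$ each a sum of four squares, is exactly Euler's identity, Proposition~\ref{prop:euler}; so the only real work is the case $n = p$ prime.

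For the prime case, Lemma~\ref{lemma:langrage_prime} provides $n_0, a_0, b_0 : \nat$ with $n_0 p = 1 + a_0^2 + b_0^2 = 1^2 + a_0^2 + b_0^2 + 0^2$ and $0 < n_0 < p$. Hence there is a positive $m$ with $mp$ a sum of four integer squares, and I would take $m$ least such; then $m \leq n_0 < p$. The claim is that $m = 1$, which gives the result. Suppose instead $m > 1$ and fix a decomposition $mp = x_1^2 + x_2^2 + x_3^2 + x_4^2$. First, $m$ must be odd: if $m$ were even, then $mp$ is even, so an even number of the $x_i$ are odd, and after reordering we may assume $x_1 \equiv x_2$ and $x_3 \equiv x_4 \pmod 2$; then $\tfrac{x_1+x_2}{2}, \tfrac{x_1-x_2}{2}, \tfrac{x_3+x_4}{2}, \tfrac{x_3-x_4}{2}$ are integers whose squares sum to $\tfrac12(x_1^2+x_2^2+x_3^2+x_4^2) = \tfrac m2\,p$, contradicting the minimality of $m$.

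So $m$ is odd. Choose for each $i$ an integer $y_i$ with $y_i \equiv x_i \pmod m$ and $\lvert y_i \rvert < m/2$. Then $\sum_i y_i^2 \equiv \sum_i x_i^2 = mp \equiv 0 \pmod m$ and $\sum_i y_i^2 < 4\,(m/2)^2 = m^2$. If $\sum_i y_i^2 = 0$ then $m \divides x_i$ for every $i$, so $m^2 \divides mp$, i.e.\ $m \divides p$; since $0 < m < p$ and $p$ is prime, this forces $m = 1$, contradicting $m > 1$. Hence $\sum_i y_i^2 = mr$ for some $r$ with $0 < r < m$. Applying Euler's identity (Proposition~\ref{prop:euler}) to the decompositions $mp = \sum_i x_i^2$ and $mr = \sum_i y_i^2$ produces integers $A,B,C,D$ with $A^2+B^2+C^2+D^2 = (mp)(mr) = m^2 p r$; moreover each of $A,B,C,D$ is divisible by $m$, since each is an alternating bilinear combination of the $x_i$ and $y_i$ that, after replacing every $y_i$ by $x_i$ modulo $m$, becomes either $\sum_i x_i^2 = mp$ or a telescoping sum equal to $0$. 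Dividing by $m^2$ gives $pr = (A/m)^2+(B/m)^2+(C/m)^2+(D/m)^2$ with $0 < r < m$, contradicting the minimality of $m$. Therefore $m = 1$ and $p$ is a sum of four squares, closing the prime induction.

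The main obstacle is the descent step for primes: one must set up the least-$m$ argument (equivalently, well-founded induction on the multiplier) and then, in each of its two sub-cases, certify integrality — that the four halved quantities in the even case are integers, and that each of the four Euler outputs $A,B,C,D$ is a multiple of $m$ in the odd case. The latter is routine modular bookkeeping using $y_i \equiv x_i$ and $\sum_i x_i^2 \equiv 0 \pmod m$, but keeping the four Euler expressions straight is where the formalisation effort concentrates; Euler's identity itself, as noted after Proposition~\ref{prop:euler}, is discharged by the \cstfont{ring} tactic.
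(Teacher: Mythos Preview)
Your proof is correct and follows the same overall architecture as the paper: reduce to $\nat$, apply prime induction via Euler's identity, and for primes run a descent on the multiplier $m$ in $mp = \sum_i x_i^2$ starting from the value supplied by Lemma~\ref{lemma:langrage_prime}.

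The descent itself is organised differently. You first split on the parity of $m$: for even $m$ you use the halving trick $(x_1\pm x_2)/2,(x_3\pm x_4)/2$ to drop to $m/2$, and for odd $m$ you can take residues with strict bound $\lvert y_i\rvert < m/2$, so $r<m$ is automatic and only $r=0$ needs to be excluded. The paper does not split on parity: it takes residues with the non-strict bound $4y_i^2\le m^2$ uniformly, which only gives $r\le m$, and then rules out the boundary case $r=m$ by a separate argument (forcing $m=2q$ and $y_i=\pm q$, hence $m\mid p$). Your route trades that boundary analysis for the extra even-$m$ case; both are standard and of comparable length. A second, minor difference is that you phrase the descent as a least-$m$ argument, whereas the paper states the explicit step $1<m<p\land P(m)\to\exists r,\,1\le r<m\land P(r)$ and closes by strong induction; as you note, these are interchangeable, and the least-element version is unproblematic here since ``$mp$ is a sum of four squares'' is decidable by bounded search.
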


\begin{proof}
It is enough to show that any natural number $n:\nat$ is the sum of four squares,
i.e.\ there exists $a,b,c,d:\Z$ such that $n=a^2+b^2+c^2+d^2$. 
By Euler's four-square identity (see Proposition~\ref{prop:euler})
and the principle of prime induction (see Theorem~\ref{thm:primeind}), 
we only need to show the property for prime numbers.

We fix a prime number $p$ and define the predicate ``$mp$ is the sum of four squares'' as
$$P(m:\nat)\cdef \exists a\,b\,c\,d:\Z,\,m p =  a^2+b^2+c^2+d^2.$$
We want to show that $P\,1$ holds.
By \Cref{coq:lagrange_prelim}, we know that $P\,n$ holds
for some $0<n<p$. We are going to decrease this value of $n$ until
it reaches $1$, i.e.\ to establish $P\,1$, it is enough to give 
\coqlink[lagrange_prime_step]{a proof of}
$$\forall m:\nat,\,1<m<p\cfun P\,m\cfun\exists r:\nat,\, 1\leq r<m\lconj P\,r$$
and then finish the argument by strong induction (implementing ``infinite descent'' here). 

So let us assume $1<m<p$ and $m p=x_1^2+x_2^2+x_3^2+x_4^2$.
For each $i=1,\ldots,4$ we compute a ``small'' representative of $x_i$ in $\Zp m$,
i.e.\ $y_i:\Z$ such that $\modeq{x_i}{y_i}m$ and $4y_i^2\leq m^2$.
Then we have $\modeq{mp=x_1^2+x_2^2+x_3^2+x_4^2}{y_1^2+y_2^2+y_3^2+y_4^2}m$ hence 
$\modeq{y_1^2+y_2^2+y_3^2+y_4^2}0m$ and we get $r:\nat$ such that $y_1^2+y_2^2+y_3^2+y_4^2=r m$.
Because $y_1^2+y_2^2+y_3^2+y_4^2\leq m^2$, we deduce $r \leq m$.
Now we rule out the cases $r=0$ and $r=m$:
\begin{itemize}
\item if $r=0$ then $y_1=\cdots=y_4=0$ and thus $\modeq{x_i^2}0{m^2}$ for $i=1,\ldots,4$.
As a consequence, $\modeq{mp=x_1^2+x_2^2+x_3^2+x_4^2}0{m^2}$ and thus $m$ divides $p$,
contradicting the primality of $p$;
\item if $r=m$ then $y_1^2+y_2^2+y_3^2+y_4^2=m^2$ and since 
$4y_i^2\leq m^2$ holds for $i=1,\ldots,4$, we deduce $4y_1^2=\cdots=4y_4^2=m^2$
and thus $m=2q$ and $y_i=\pm q$ for $i=1,\ldots,4$.
As a consequence $\modeq{x_i^2}{q^2}{m^2}$ and
thus $\modeq{x_1^2+x_2^2+x_3^2+x_4^2}{4q^2}{m^2}$ 
hence $\modeq{mp}0{m^2}$, in contradiction with
the primality of $p$ again.
\end{itemize}

So we have $1\leq r<m$, $r m=y_1^2+y_2^2+y_3^2+y_4^2$
and $\modeq{x_i}{y_i}m$ for $i=1,\ldots,4$.
We also have $m p=x_1^2+x_2^2+x_3^2+x_4^2$ and using
Euler's four squares identity
we get $a,b,c,d:\Z$ such that $(m p)(r m) = a^2+b^2+c^2+d^2$.
Using $\modeq{x_i}{y_i}m$ and the values of
$a,b,c$ and $d$ as defined in Proposition~\ref{prop:euler}, 
we show that $\modeq a0m$, $\modeq b0m$, $\modeq c0m$
and $\modeq d0m$, i.e.\ $m$ divides $a$, $b$, $c$ and $d$. 
Hence $rp = (a/m)^2+(b/m)^2+(c/m)^2+(d/m)^2$
which establishes $P\,r$ for $1\leq r<m$, as required.
\end{proof}

\subsection{\texorpdfstring{\Hten reduces to \Htenz}{H10 reduces to H10Z}}

\newcommand{\dsquare}[1]{{(#1)}\dmult{(#1)}}
\newcommand{\dfoursquare}[4]{\dsquare{#1}\dplus\dsquare{#2}\dplus\dsquare{#3}\dplus\dsquare{#4}}
\newcommand{\dlagrange}[1]{\dfoursquare{4{#1}}{4{#1}+1}{4{#1}+2}{4{#1}+3}}

Let $f : \dpoly(\fin n, \fin 0) \to \dpolyz(\fin{4n}, \fin 0)$ be the function replacing every variable $i:\fin n$ in a polynomial by the 
polynomial expression 
\[ \dlagrange i. \]
The function $f$ can now be used to define the reduction from \Hten to \Htenz:

\setCoqFilename{H10.H10Z_undec}
\begin{corollary}[][H10_H10Z]
  $\Hten \reducesto \Htenz$.
\end{corollary}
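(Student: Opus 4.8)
The plan is to turn an instance $p \dequal q : \dsingle(\fin n, \fin 0)$ of \Hten into the single integer polynomial
$P \defeq f(p) \dplus \bigl((-1)\dmult f(q)\bigr) : \dpolyz(\fin{4n}, \fin 0)$,
where $f$ is the Lagrange substitution introduced just above (which, having codomain $\dpolyz$, also silently coerces the $\nat$-constants of its argument to $\Z$-constants), and to return the \Htenz-instance $(4n, P)$. The map $p \dequal q \mapsto P$ is a plain structural recursion over syntax trees, hence definable in Coq without any axiom, so it is automatically a legitimate many-one reduction in the synthetic setting; the whole content is the equivalence
\[(\exists \vec v : \nat^n,\ \semb p{\cnil}{\vec v} = \semb q{\cnil}{\vec v})\ \toot\ (\exists \vec w : \Z^{4n},\ \semb P{\cnil}{\vec w} = 0).\]

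The lemma I would prove first is the \emph{soundness of $f$}: for every $p : \dpoly(\fin n, \fin 0)$ and every $\vec w : \Z^{4n}$, if $\vec v : \nat^n$ is the vector with $\vec v_i \defeq \vec w_{4i}^2 + \vec w_{4i+1}^2 + \vec w_{4i+2}^2 + \vec w_{4i+3}^2$ (a genuine natural number, being a sum of integer squares), then $\semb{f(p)}{\cnil}{\vec w} = \iota\bigl(\semb{p}{\cnil}{\vec v}\bigr)$, where $\iota : \nat \hookrightarrow \Z$ is the inclusion. This is a routine induction on $p$: the constant, $\dplus$ and $\dmult$ cases use that $\iota$ is a semiring homomorphism, and the variable case unfolds $\dlagrange i$ to exactly $\iota(\vec v_i)$. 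The only genuine care is the index bookkeeping, i.e.\ that $i : \fin n$ entails $4i, 4i+1, 4i+2, 4i+3 : \fin{4n}$, so that $f$ is well typed.

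With that lemma the equivalence is short. For the forward direction, given a $\nat$-solution $\vec v$, apply Lagrange's theorem (\Cref{coq:lagrange_theorem_Z}) to each $\vec v_i \geq 0$ to obtain $a_i, b_i, c_i, d_i : \Z$ with $\vec v_i = a_i^2 + b_i^2 + c_i^2 + d_i^2$; arranging these into $\vec w : \Z^{4n}$ makes the vector reconstructed in the soundness lemma equal to $\vec v$, so $\semb P{\cnil}{\vec w} = \iota(\semb p{\cnil}{\vec v}) - \iota(\semb q{\cnil}{\vec v}) = 0$ because $\semb p{\cnil}{\vec v} = \semb q{\cnil}{\vec v}$. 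For the backward direction, given a $\Z$-solution $\vec w$, let $\vec v : \nat^n$ be the reconstructed vector; the soundness lemma turns $\semb P{\cnil}{\vec w} = 0$ into $\iota(\semb p{\cnil}{\vec v}) = \iota(\semb q{\cnil}{\vec v})$, and injectivity of $\iota$ yields $\semb p{\cnil}{\vec v} = \semb q{\cnil}{\vec v}$ over $\nat$, so $\vec v$ is a $\nat$-solution of $p \dequal q$.

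The hard mathematics — Lagrange's four-square theorem — is already available as \Cref{coq:lagrange_theorem_Z}, so the remaining obstacle is organisational rather than deep: getting the De~Bruijn / $\fin{}$ index arithmetic of the substitution $f$ exactly right, and threading the coercion $\iota : \nat \to \Z$ through the two semantics carefully enough that evaluation over $\nat$ and over $\Z$ can be compared inside a single equation.
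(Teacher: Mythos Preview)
Your proposal is correct and follows essentially the same route as the paper: the reduction function is exactly $f(p)\dplus(-1)\dmult f(q)$, Lagrange's theorem supplies the four squares in the forward direction, and the sum-of-four-squares reconstruction recovers a natural-number solution in the backward direction. The only difference is cosmetic: you explicitly isolate a ``soundness of $f$'' lemma relating $\semb{f(p)}{\cnil}{\vec w}$ to $\iota(\semb p{\cnil}{\vec v})$ and are more careful about the coercion $\iota:\nat\hookrightarrow\Z$, whereas the paper uses the same equality inline without naming it.
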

\begin{proof}
  Given polynomials $p, q : \dpoly(\fin n,\fin 0)$ the reduction function returns the polynomial $f(p) \dplus (-1) \dmult f(q) : \dpolyz(\fin {4n}, \fin 0)$.
  
  If $\Hten(n,p,q)$ holds, i.e.\ if $\semb{p}{\cnil}{\vec w} = \semb{q}{\cnil}{\vec w}$ for $\vec w = [w_1; \ldots; w_n]$ we know via 
  Lagrange's \Cref{coq:lagrange_theorem_Z} that there exist $a_1, b_1, c_1, d_1, \dots , a_n, b_n, c_n, d_n$ s.t.\ $w_i = a_i^2 + b_i^2 + c_i^2 + d_i^2$ and thus that $\semb{f(p)}\cnil{\vec v} = \semb{p}\cnil{\vec w}$ for 
$\vec v: \Z^{4n} \defeq [a_1; b_1; c_1; d_1; \ldots; a_n; b_n; c_n; d_n]$ as well as $\semb{f(q)}\cnil{\vec v} = \semb{q}\cnil{\vec w}$.
  Thus, $\semb{f(p) \dplus (-1) \dmult f(q)}\cnil{\vec v} = 0$ and $\Htenz\bigl(4n, f(p) \dplus (-1) \dmult f(q)\bigr)$.

  For the other direction, let $\vec v : \Z^{4n}$ be given s.t.\ $\semb{f(p) \dplus (-1) \dmult f(q)}\cnil{\vec v} = 0$.
  We thus know that $\semb{f(p)}\cnil{\vec v} = \semb{f(q)}\cnil{\vec v}$.
  Because $\vec v:\Z^{4n}$, it has the form $\vec v = [a_1; b_1; c_1; d_1; \ldots; a_n; b_n; c_n; d_n]$.
  Then all elements of $\vec w \defeq [a_1^2 + b_1^2 + c_1^2 + d_1^2; \ldots; a_n^2 + b_n^2 + c_n^2 + d_n^2]$ are natural numbers and 
  the equations $\semb{f(p)}\cnil{\vec v} = \semb{p}\cnil{\vec w}$ and $\semb{f(q)}\cnil{\vec v} = \semb{q}\cnil{\vec w}$ hold.
  Thus $\semb{p}\cnil{\vec w} = \semb{q}\cnil{\vec w}$, i.e.\ $\Hten(n, p, q)$.
\end{proof}

\section{\texorpdfstring{$\mu$}{mu}-Recursive Algorithms}
\label{sec:murec}

In order to show that $\M{MM}$, $\M{FRACTRAN}$, and $\M{H10}$ are in the same many-one reduction class (i.e.~interreducible via many-one reductions), we introduce $\mu$-recursive algorithms as intermediate layer.
Programming in this well-known model of computation resembles functional programming (in a first-order language) and we will use it first for a reduction $\M{H10} \preceq \M{$\mu$-rec}$.
Afterwards, we explain a compiler to Minsky machines, yielding a reduction from $\M{$\mu$-rec}$ to $\M{MM}$.
The next section will then connect $\mu$-recursive algorithms to the weak call-by-value $\lambda$-calculus.

\subsection{\mathversion{bold}$\mu$-Recursive Recognisability}

\newcommand{\cst}[1]{\texttt{cst}_{#1}}
\newcommand{\zero}{\texttt{zero}}
\newcommand{\succN}{\texttt{succ}}
\newcommand{\prj}[1]{\texttt{prj}_{#1}}
\newcommand{\comp}[2]{\texttt{comp}\,#1\,#2}
\newcommand{\recA}[2]{\texttt{rec}\,#1\,#2}
\newcommand{\minA}{\mu}

\makeatletter
\def\bi@steplft{\mbox{$-\mskip -4mu\lbrack$}}
\def\bi@steprt{\mbox{$\rangle$}}
\def\bi@stepsym{\mathrel{\bi@steplft\mskip -2mu\bi@steprt}}
\def\bi@step#1#2#3{[{#1};{#2}]\bi@stepsym{#3}}
\def\bi@@stepsym#1{\mathrel{\bi@steplft\mskip 1mu{#1}\mskip 1mu\bi@steprt}}
\def\bi@@step[#1]#2#3#4{[{#2};{#3}]\bi@@stepsym{#1}{#4}}
\newcommand{\bigstep}{\@ifnextchar[\bi@@step\bi@step}
\makeatother

\newcommand{\pfun}{\mathbin{{-}\mskip -3mu{\rightharpoondown}}}

\newcommand{\semra}[1]{\llbracket{#1}\rrbracket}

\setCoqFilename{MuRec.recalg}
We define a type \coqlink[recalg]{$\recalg k$ of $\mu$-recursive algorithms} 
representing $\mu$-recursive partial functions in $\nat^k \pfun \nat$.
\begin{mathpar}
\def\myspace{\mskip 80mu}%
\begin{array}{c}
  \infer{ n : \nat }{ \cst n : \recalg 0 }\myspace
  \infer{~}{\zero : \recalg 1}\myspace
  \infer{~}{\succN : \recalg 1}\myspace
  \infer{p : \fin k}{\prj p : \recalg k}\\[3ex]
  \infer{f : \recalg k \and g : \recalg i^k}{\comp f g : \recalg i}\myspace
  \infer{f : \recalg k \and g : \recalg {2 + k}}{\recA f g : \recalg{1 + k}}\myspace
  \infer{f : \recalg{1 + k}}{\minA f : \recalg k}
\end{array}
\end{mathpar}
We represent these $\mu$-recursive partial functions using the 
 standard \coqlink[ra_rel]{relational semantics $\semra f:\nat^k\to\nat\to\Prop$} of the $\mu$-recursive algorithm $f:\recalg k$, 
 formalised in Figure~\ref{ra_rel_sem} as a fixpoint definition.
Intuitively, $\cst n$ represents the constant $n$ of arity $0$, $\zero$ is the constant $0$-function of arity $1$, $\succN$ the successor function of arity $1$, projection $\prj p$ returns the $p$-th argument, $\comp f g$ where $g$ is a $k$-vector of functions of arity $i$ first applies each element of $g$ to the $i$ inputs and then $f$ to the resulting $k$ numbers.
$\recA f g$ is performing primitive recursion on the first argument.
If the argument is $0$, $f$ is used.
If the argument is $1+n$, $g$ is applied to $n$, the recursive call and the rest of the arguments.
Finally, minimisation $\minA f$ performs unbounded search returning the smallest number $x$ s.t.\ $f$ on $x$ returns $0$
and $f$ terminates on a non-zero value for every $y<x$.

\begin{figure}
$$\displaystyle
  \begin{array} {r@{~\toot~}l@{\qquad}r@{~\toot~}l}
    \semra{\cst n}~\vec v~x   & n = x 
 &  \semra{\zero}~\vec v~x    & 0 = x \\
    \semra{\succN}~\vec v~x   & 1+\vec v_0 = x 
 &  \semra{\prj p}~\vec v~x  & \vec v_p = x\\[1ex]
    \semra{\comp f{\vec g}}~\vec v~x 
 & \multicolumn{3}{@{}l}{\exists \vec w,\, \semra f~\vec w~x \lconj \forall p, \semra{\vec g_p}~\vec v~\vec w_p}\\
    \semra{\recA f g}~(0\ccons\vec v)~x 
  & \multicolumn{3}{@{}l}{\semra f~\vec v~x} \\
    \semra{\recA f g}~(1+n\ccons\vec v)~x 
  & \multicolumn{3}{@{}l}{\exists y, \semra{\recA f g}~(n\ccons \vec v)~y \lconj \semra g~(n\ccons y\ccons \vec v)~ x} \\
    \semra{\minA f}~\vec v~x
  & \multicolumn{3}{@{}l}{\exists \vec w:\nat^x, \semra f~(x\ccons\vec v)~0\lconj 
         \forall y:\fin x,\,\semra f~(\overline y\ccons\vec v)~(1+\vec w_y)}
\end{array}$$
\caption{\label{ra_rel_sem}Relational semantics for $\mu$-recursive algorithms.}
\end{figure}

\begin{figure}
\begin{mathpar}
  \infer{~}{ \bigstep[1 + c]{\cst n}{\vec v}{n}  }

  \infer{~}{ \bigstep[1 + c]{\zero}{\vec v}{0}  }

  \infer{~}{ \bigstep[1 + c]{\succN}{x\ccons\vec v}{1 + x}  }

  \infer{~}{ \bigstep[1 + c]{\prj p}{\vec v}{\vec v_p} }

  \infer{\forall p:\fin k,\, \bigstep[c - p]{\vec g_p}{\vec v}{\vec w_p} \and \bigstep[1 + c]{f}{\vec w}{x}}{\bigstep[2 + c]{\comp f {\vec g}}{\vec v}{x}}

  \infer{\bigstep[c]{f}{\vec v}{x} }{\bigstep[1 + c]{\recA f g}{0 \ccons \vec v}{x}}

  \infer{\bigstep[c]{\recA f g}{n \ccons \vec v}{y} \and \bigstep[c]{g}{n \ccons y \ccons \vec v}{x} }{\bigstep[1 + c]{\recA f g}{(1 + n) \ccons \vec v}{x}}

  \infer{\bigstep[c - x]{f}{x \ccons \vec v}{0} \and \forall p : \fin x,\,\bigstep[c - p]{f}{p \ccons \vec v}{1 + \vec w_p}}{\bigstep[1 + c]{\minA f}{\vec v}{x} }
\end{mathpar}
\caption{\label{ra_ca_sem}Cost aware big-step semantics for $\mu$-recursive algorithms.}
\end{figure}

Following~\cite{DLW17}, in Fig.~\ref{ra_ca_sem} we also formalise a cost aware big-step evaluation 
predicate $\bigstep[c] f v x$, where $f : \recalg k$, $v : \nat^k$, $x : \nat$, and where $c : \nat$ 
denotes the cost of a computation, the cost being tailored toward the naive step-indexed
evaluator to be defined in the next section\rlap.\footnote{In the Coq code, the predicate has one additional auxiliary argument to ease the correctness proof for the step-indexed interpreter which is only internal to the proof and thus omitted here.}\enspace
We can directly relate the relational semantics with the big-step semantics:

\setCoqFilename{L.Reductions.MuRec}
\begin{lemma}[][ra_bs_c_correct]
  $\forall\,k\,(f:\recalg k)\, (\vec v:\nat^k)\,(x:\nat), \semra f~\vec v~x \toot \exists c,\bigstep[c] f {\vec v} x$.
\end{lemma}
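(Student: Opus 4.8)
The plan is to establish the two implications of the equivalence separately, after first proving a cost-monotonicity lemma for the big-step predicate. That auxiliary lemma states: if $\bigstep[c] f {\vec v} x$ and $c \le c'$, then $\bigstep[c'] f {\vec v} x$. It goes by a routine induction on the derivation of $\bigstep[c] f {\vec v} x$; the only mild subtlety is that the premises of the rules for $\comp f {\vec g}$, $\recA f g$ and $\minA f$ mention subcall costs of the shape $c - p$ with truncated subtraction over $\nat$, but from $c \le c'$ we get $c - p \le c' - p$, so the induction hypotheses apply and the same rule re-applies at cost $c'$.

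For the forward direction $\semra f~\vec v~x \to \exists c,\, \bigstep[c] f {\vec v} x$, I would do structural induction on $f : \recalg k$, unfolding $\semra f$ to its clause from Figure~\ref{ra_rel_sem} in each case. The four base cases ($\cst n$, $\zero$, $\succN$, $\prj p$) are immediate with $c \defeq 1$. For $\comp f {\vec g}$ the relational clause yields an intermediate vector $\vec w$ with $\semra f~\vec w~x$ and $\semra{\vec g_p}~\vec v~\vec w_p$ for each $p : \fin k$; the induction hypotheses give a cost $c_f$ for $f$ and costs $c_p$ for the $\vec g_p$, and I pick $c \defeq \max\bigl(c_f,\ \max_{p : \fin k}(c_p + p)\bigr)$ — a maximum over a finite type — so that $c - p \ge c_p$ and $1 + c \ge c_f$; monotonicity then raises every subderivation to the required cost and the $\comp$ rule yields $\bigstep[2+c]{\comp f {\vec g}}{\vec v}{x}$. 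The case $\recA f g$ additionally needs an inner induction on the first argument $n$ (since $\recA f g$ is not a subterm of itself): for $0$ use the hypothesis for $f$; for $1 + n$ the relational clause supplies $y$ with $\semra{\recA f g}~(n\ccons\vec v)~y$ and $\semra g~(n\ccons y\ccons\vec v)~x$, the inner hypothesis gives a cost for the recursive call, the outer hypothesis for $g$ a cost for $g$, and taking their max and applying monotonicity closes the case. For $\minA f$ the relational clause gives $\vec w : \nat^x$ with $\semra f~(x\ccons\vec v)~0$ and $\semra f~(\overline p\ccons\vec v)~(1+\vec w_p)$ for each $p : \fin x$; the outer hypothesis for $f$ furnishes a cost for each of these $x + 1$ calls, I take $c$ above all of $c_0 + x$ and (for $p : \fin x$) $c_p + p$, and conclude with the $\minA$ rule.

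The converse $(\exists c,\, \bigstep[c] f {\vec v} x) \to \semra f~\vec v~x$ is comparatively direct: fixing $c$ and inducting on the derivation of $\bigstep[c] f {\vec v} x$, each rule of Figure~\ref{ra_ca_sem} maps onto the matching clause of Figure~\ref{ra_rel_sem}, the cost being simply discarded, so no bookkeeping is needed. The main obstacle is therefore the cost accounting in the forward direction — producing a single cost that simultaneously dominates all subderivation costs under the truncated-subtraction convention — which is exactly what the monotonicity lemma together with the finiteness of the index types $\fin k$ and $\fin x$ are designed to handle.
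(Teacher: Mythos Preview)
Your proof is correct: the monotonicity lemma plus structural induction on $f$ (with the inner induction on the recursion argument for $\recA$) handles the forward direction, and induction on the big-step derivation handles the converse. The paper does not spell out a proof of this lemma in the text (it is stated without proof and delegated to the Coq development), so there is no written argument to compare against; your approach is the standard one and matches what the cost-aware semantics of Figure~\ref{ra_ca_sem} is designed for.
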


\setCoqFilename{MuRec.recalg}

With these characterizations of the semantics of $\mu$-recursive algorithms, we can define $\mu$-recursive halting as follows:
\[ \coqlink[MUREC_HALTING]{\M{$\mu$-rec}} (k : \nat, f : \recalg k, \vec v : \nat^k ) := \exists x,\semra f~{\vec v}~x \]
A function $f : \recalg k$ is called \textit{total} if it terminates on any input, i.e.\ $\forall \vec v:\nat^k\, \exists x,\semra f~{\vec v}~x$.

\setCoqFilename{H10.DPRM}
\begin{definition}[][mu_recursive_n]
We call a relation $k$-ary relation $R$ over natural numbers \emph{$\mu$-recursively recognisable} if
there is a $\mu$-recursive algorithm\/ $f:\recalg k$ s.t.\/
for any\/ $\vec v : \nat^k$ the equivalence\/ $R~\vec v \toot \exists x,\sem f~ {\vec v}~ x$ holds.
\end{definition}

\subsection{\mathversion{bold}Compiling \texorpdfstring{$\mu$}{mu}-recursive algorithms into Minsky machines}

We describe how to compile a $\mu$-recursive algorithm 
$f:\recalg k$ into a Minsky machine. To avoid solving complicated
inequality constraints over the bounded numbers in $\fin m$,\footnote{Unlike $\nat$ or $\Z$, the inductive type $\fin m$
representing the numbers bounded by $m$ is not equipped with powerful linear
constraints solving tactics such as \ctactic{lia} or \ctactic{omega}. Working
with bounded numbers here would require from us to manipulate embeddings
of e.g.\ the disjoint sum $\fin i+\fin j$ into $\fin k$ (for some large enough $k$),
which would also impact types that depend on $i$, $j$ and $k$. For those who
experienced it, this should instantly trigger the bad memory of \emph{setoid hell}.
Using the above mentioned tools, it is much simpler to work with intervals of
$\nat$ instead.}
 we here work with Minsky machines where registers are indexed with $\nat$
instead of $\Fin m$. Hence the number of registers is not bounded a priori but of course, for a given
Minsky machine, the number of register that actually occur in the code is bounded. And indeed,
in the end of the process in \Cref{coq:ra_mm_simulator},
we project to Minsky machines with registers in $\Fin m$ (for some $m$) after the
$\mu$-recursive algorithm has been fully compiled. 
We will not enter into the details on how particular $\mu$-recursive operators are implemented
but instead focus on the global invariant of the compiler. 

For a $\mu$-recursive algorithm
$f:\recalg k$ with $k$ inputs and one output, we produce a Minsky machine $P:\instr_\nat$ with the following
structure for registers, described by the four extra parameters $i$, $p$, $o$ and $m$, all of type $\nat$: 
\begin{description}
\item[\mathversion{bold}$i$] is the \PC-index of the first instruction of $P$;
\item[\mathversion{bold}$p$] the $k$ inputs for $f$ are to be read in the $k$ registers $\{p,\ldots,p+k-1\}$ of $P$;
\item[\mathversion{bold}$o$] the output of $f$ is to be written in the register $o$ of $P$;
\item[\mathversion{bold}$m$] $P$ can use spare registers above $m$ and assume their initial value is $0$.
\end{description} 
Moreover, we require that all registers except the output register $o$ are returned to their initial
value when the computation is terminated. In particular, spare registers return to their initial value $0$. 

\newcommand{\mmenv}{\rho}
\newcommand{\mmsubst}[3]{{#1}\{{#2}\leftarrow{#3}\}}
\newcommand{\exteq}{\mathrel{\simeq}}

\setCoqFilename{MuRec.ra_mm_env}

\begin{definition}[][ra_compiled]
We say that\/ \emph{$P$ properly compiles $f$} under the constraints\/ $i,p,o,m$ and
we write\/ \coqlink[ra_compiled]{$\cstfont{ra\_compiled}~k~(f:\recalg k)~(i~p~o~m:\nat)~(P: \instr_\nat)$} if
for any\/ $\vec v:\nat^k$ and\/ $\mmenv:\nat\to \nat$ such that 
(1)\/~$\forall u\geq m,\, \mmenv_u = 0$ and (2)\/~$\forall q:\fin k,\,\mmenv_{\overline q+p} =\vec v_q$,
we have
\begin{description}
\item[soundness] $\forall x:\nat,\,\semra f~\vec v~x
    \to \mmeval iPi\mmenv{\clength P+i}{\mmsubst\mmenv o x}$;
\item[completeness] $\mmterm{(i,P)}{(i,\mmenv)} \to \exists x,\, \sem f~\vec v~x$.
\end{description}
\end{definition}

In these $\nat$-indexed register machines, the state of a machine cannot be described
by a finite vector $\vec v:\nat^k$ but is instead represented by an environment $\mmenv:\nat\to\nat$
mapping register  indices (in $\nat$) to the values they contain (in $\nat$ also, but with a different meaning).
The two conditions~(1) and~(2) state that $\mmenv$ is null above register $m$
and $\mmenv$ contains $\vec v$ starting at register $p$. And under these conditions, $(i,P)$ should 
simulate a terminating computation of $f$ on $\vec v$ outputting $x$ (soundness), while conversely, 
whenever $(i,P)$ terminates starting from $(i,\mmenv)$, this entails that $f$ terminates
on $\vec v$ (completeness).

We can now construct a certified compiler 
from $\mu$-recursive algorithms to $\nat$-indexed Minsky machines.
Literally, \Cref{coq:ra_compiler} below states that whenever the output
register $o$ does not belong to the input registers $\{p,\ldots,p+k-1\}$ or the spare 
registers $\{m,m+1,\ldots\}$ and that spare registers have indices above input registers, 
then one can properly compile $f$ at \PC\ value $i$.

\begin{theorem}[][ra_compiler]
Given a $\mu$-recursive algorithm\/ $f:\recalg k$, we can build a term:
$$\textstyle\forall\, i\, p\, o\, m,\, o < m
                 \to \lnot(p \leq o < k+p)
                 \to k+p \leq m
                 \to \sum P,\, \cstfont{ra\_compiled}~k~f~i~p~o~m~P.$$
\end{theorem}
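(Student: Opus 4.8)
The plan is to proceed by structural induction on the $\mu$-recursive algorithm $f:\recalg k$, producing for each of the seven constructors a Minsky machine $P$ together with a proof that it satisfies the soundness and completeness clauses of $\cstfont{ra\_compiled}~k~f~i~p~o~m$. The four base cases $\cst n$, $\zero$, $\succN$ and $\prj p$ compile to short self-contained blocks of increment/decrement instructions: each reads the designated input registers $\{p,\ldots,p+k-1\}$, writes the result into the output register $o$, and---crucially for later composition---restores every scratch register it touched to its entry value before exiting at program-counter value $\clength P+i$. The hypotheses $o<m$, $\lnot(p\le o<k+p)$ and $k+p\le m$ guarantee that $o$ clashes neither with the inputs nor with the spare area $\{m,m+1,\ldots\}$, so these blocks may freely use registers $\ge m$ as temporaries and copy values around without destroying the input.

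For the three inductive cases I would rely on the compositional reasoning infrastructure for low-level languages from~\cite{FLW19}, which allows gluing together independently-verified Minsky machine fragments while tracking program-counter offsets and register footprints. For $\comp f {\vec g}$ one lays out the sub-machines for $\vec g_1,\ldots,\vec g_k$ in sequence---each invoked on the shared inputs at $p$ and writing into a distinct fresh register carved out of the spare area---followed by the sub-machine for $f$, which reads those $k$ freshly written registers and writes the final value into $o$; the invariant that each fragment restores all non-output registers is exactly what makes this pipelining correct, since the inputs at $p$ survive until $f$ runs. For $\recA f g$ one allocates a loop counter and an accumulator, runs the base-case machine for $f$ once, then iterates the step machine for $g$, feeding it the current counter value, the running result and the remaining arguments. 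For $\minA f$ one compiles an unbounded loop: a counter starting at $0$, the sub-machine for $f$ on the counter prepended to $\vec v$, and a test whether its output is zero (exit if so, otherwise increment the counter and loop). Throughout, the cost-aware big-step semantics of Fig.~\ref{ra_ca_sem} together with \Cref{coq:ra_bs_c_correct} bridge the relational semantics $\semra f$ and concrete terminating runs: soundness unfolds a derivation of $\semra f~\vec v~x$ into a sequence of Minsky steps, while completeness runs the implication backwards, turning a terminating Minsky computation into a finite unfolding of the loops that must witness $\semra f~\vec v~x$.

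The main obstacle will be the register bookkeeping and the restoration invariant in the three compositional cases, and in particular arranging that the disjointness hypotheses propagate to the recursive calls: when compiling a sub-algorithm one must choose fresh values of its four parameters (entry label, input base, output register, spare threshold) so that its scratch space lies strictly above everything the enclosing machine relies on, and then discharge the three side conditions for that instance from the ambient ones. For $\minA f$ there is the extra subtlety that completeness must extract from a terminating run of the compiled loop not merely the witness $x$ but also that $f$ terminated on every $y<x$ with a non-zero value, as demanded by the definition of $\semra{\minA f}$ in Fig.~\ref{ra_rel_sem}; this holds because the loop advances past $y$ only after the $f$-sub-machine has halted on $y$ prepended to $\vec v$ with a non-zero output, but making it precise needs an inner induction on the number of loop iterations. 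Once all seven cases are discharged, the theorem follows by packaging the constructed $P$ with its soundness/completeness proof into the dependent pair $\sum P,\,\cstfont{ra\_compiled}~k~f~i~p~o~m~P$.
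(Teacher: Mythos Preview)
Your proposal is correct and follows essentially the same approach as the paper: structural induction on $f:\recalg k$, handling each of the seven constructors by building and gluing sub-machines with the compositional infrastructure of~\cite{FLW19}, while maintaining the register-restoration invariant that makes sequencing sound. The paper deliberately omits the very details you spell out (``the details are quite involved and not exposed in here''), so your elaboration is a faithful expansion of their sketch.

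One small remark: your appeal to the cost-aware big-step semantics of Fig.~\ref{ra_ca_sem} and \Cref{coq:ra_bs_c_correct} is an unnecessary detour here. That semantics is introduced \emph{after} the compiler section and is, as the paper says, ``tailored toward the naive step-indexed evaluator'' used for the $\M L$ reduction, not for the Minsky-machine compilation. The invariant $\cstfont{ra\_compiled}$ is stated directly in terms of the relational semantics $\semra f$, and the inductive proof works with that semantics throughout; no step index is needed to establish either soundness or completeness.
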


The proof proceeds by structural induction on $f:\recalg k$. The details are quite involved
and not exposed in here: we just wanted to state the above invariant, which is closed
under each $\mu$-recursive constructor. The sub-Minsky machines are
built and composed using the compositional techniques already presented 
in~\cite{FLW19}.
Notice that the statement of $\cstfont{ra\_compiled}$ takes the output value $x$
in $\sem f~\vec v~x$ into account while in the below \Cref{coq:ra_mm_simulator}, we only care about termination.
Because the termination of e.g.\ the $\mu$-recursive composition $\comp f{\vec g}$ depends 
not just on the termination of $\vec g_p$ (for all $p:\fin k$) but
on the actual output values $\vec w_p$ of 
$\semra{\vec g_p}~\vec v~{\vec w_p}$, 
it is necessary to be more precise in
the stated invariant used in the inductive construction of the compiler.

\setCoqFilename{MuRec.ra_simul}
\begin{theorem}[][ra_mm_simulator]
Given a $\mu$-recursive algorithm\/ $f:\recalg k$, one can compute\/ $n:\nat$
and a list of Minsky machine instructions\/ $P:\instr_{\fin{k+1+n}}$ such that
for any\/ $\vec v:\nat^k$, 
$$(\exists x,\,\sem f~\vec v~x)\toot \mmterm{(1,P)}{(1,\vec v\capp\vec 0)}.$$
\end{theorem}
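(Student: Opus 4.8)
The plan is to feed the $\mu$-recursive algorithm $f$ to the certified compiler of \Cref{coq:ra_compiler} with a fixed register layout, and then to package the resulting $\nat$-indexed Minsky machine as a $\fin{}$-indexed one matching the start state $\vec v\capp\vec 0$. Concretely, I would take $i \defeq 1$, $p \defeq 0$ (so the $k$ inputs sit in registers $0,\dots,k-1$), $o \defeq k$ (output register) and $m \defeq k+1$ (spare registers are those with index $\ge k+1$). The three side conditions of \Cref{coq:ra_compiler} are then trivial: $o < m$ since $k < k+1$; $\lnot(p \le o < k+p)$ since $k < k$ fails; and $k+p \le m$ since $k \le k+1$. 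Hence \Cref{coq:ra_compiler} produces $P_0 : \instr_\nat$ with $\cstfont{ra\_compiled}~k~f~1~0~k~(k+1)~P_0$. Let $M$ be a computable number with $M \ge k+1$ that is strictly above every register index occurring in $P_0$, and put $n \defeq M - (k+1)$, so $k+1+n = M$. Since no register used by $P_0$ reaches index $k+1+n$, the machine $P_0$ can be re-indexed over $\fin{k+1+n}$, yielding $P : \instr_{\fin{k+1+n}}$ that simulates $P_0$ step-for-step on all states that vanish above register $k+n$ — this is the routine lemma relating $\nat$-indexed and $\fin m$-indexed register machines.

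Next I would match the vector $\vec v\capp\vec 0 : \nat^{k+1+n}$ — i.e.\ $\vec v$ in registers $0,\dots,k-1$, a $0$ in the output register $k$, and $n$ further zeros in the spare registers $k+1,\dots,k+n$ — with the environment $\mmenv : \nat \to \nat$ that reads off $\vec v\capp\vec 0$ below index $k+1+n$ and is $0$ above. Condition~(1) of \Cref{coq:ra_compiled}, $\forall u \ge m,\, \mmenv_u = 0$, holds because $\mmenv$ is $0$ on every register $\ge k$, a fortiori $\ge k+1$; condition~(2), $\forall q : \fin k,\, \mmenv_{\overline q + p} = \vec v_q$, holds because with $p = 0$ register $\overline q$ carries precisely $\vec v_q$. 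For the left-to-right implication, suppose $\semra f~\vec v~x$; then soundness of \Cref{coq:ra_compiled} gives the run $\mmeval{1}{P_0}{1}{\mmenv}{\clength{P_0}+1}{\mmsubst{\mmenv}{k}{x}}$, and since $\outcode~(\clength{P_0}+1)~(1,P_0)$ holds (the terminal PC lies just past the code), this is a halting computation, i.e.\ $\mmterm{(1,P_0)}{(1,\mmenv)}$; transferring along the re-indexing yields $\mmterm{(1,P)}{(1,\vec v\capp\vec 0)}$. For the converse, a halting computation of $(1,P)$ from $(1,\vec v\capp\vec 0)$ transfers back to a halting computation of $(1,P_0)$ from $(1,\mmenv)$, whereupon completeness of \Cref{coq:ra_compiled} supplies some $x$ with $\semra f~\vec v~x$. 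Together these give the desired equivalence, with $n$ and $P$ obtained effectively from $f$.

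The step I expect to be the main obstacle is the interface between $\nat$-indexed and $\fin{}$-indexed Minsky machines: one must choose $n$ large enough to hold every register $P_0$ ever mentions (which is why it is convenient that the compiler of \Cref{coq:ra_compiler} works with $\nat$-indexed machines and defers the projection), prove the re-indexing faithful on states that are $0$ above index $k+n$, and check that the environment obtained from $\vec v\capp\vec 0$ is exactly the one the invariant of \Cref{coq:ra_compiled} demands — in particular that the output register starts at $0$, which forces the trailing zero block to have length $1+n$ rather than $n$. The remaining ingredients — the three numeric side conditions of \Cref{coq:ra_compiler} and the out-of-code check for the terminal program counter — are immediate.
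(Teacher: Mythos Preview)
Your proposal is correct and follows essentially the same route as the paper: instantiate \Cref{coq:ra_compiler} with $i=1$, $p=0$, $o=k$, $m=k+1$, then pick $n$ so that $k+1+n$ strictly bounds every register index in the resulting $\nat$-indexed program and re-index into $\fin{k+1+n}$. Your explicit unpacking of the soundness/completeness clauses and of the correspondence between the environment $\mmenv$ and the start vector $\vec v\capp\vec 0$ is exactly the bookkeeping the paper leaves implicit.
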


\begin{proof}
By \Cref{coq:ra_compiler}, we compile $f$ into $P':\instr_\nat$ under the constraints
$i\defeq 1$, $p\defeq 0$, $o\defeq k$ and $m\defeq 1+k$. Then we compute $n$
such that $k+1+n$ is a strict upper bound of all the register indices that occur 
in $P'$. Now we can map $P':\instr_\nat$ to $P:\instr_{\fin{k+1+n}}$ while
preserving its semantics. Notice that contrary to \Cref{coq:ra_compiler}, 
the resulting output value of $f$ or $(1,P)$ can now be disregarded since 
we only care about termination.
\end{proof}

\setCoqFilename{MinskyMachines.Reductions.MUREC_MM}
\begin{corollary}[][MUREC_MM_HALTING]
  $\M{$\mu$-rec} \reducesto \M{MM}$.
\end{corollary}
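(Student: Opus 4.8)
The plan is to read the reduction off directly from \Cref{coq:ra_mm_simulator}, which already does all the work. Recall that an instance of $\M{$\mu$-rec}$ is a triple $(k,f,\vec v)$ with $f:\recalg k$ and $\vec v:\nat^k$, and that $\M{$\mu$-rec}(k,f,\vec v)$ holds by definition iff $\exists x,\,\semra f~\vec v~x$; likewise $\M{MM}(N,P,\vec w)$ holds iff $\mmterm{(1,P)}{(1,\vec w)}$. First I would define the reduction function $g$: given $(k,f,\vec v)$, apply \Cref{coq:ra_mm_simulator} to $f$ to obtain a number $n$ and a Minsky machine $P:\instr_{\fin{k+1+n}}$, put $N\defeq k+1+n$, and pad the input to $\vec w\defeq\vec v\capp\vec 0:\nat^N$ by appending $1+n$ zeros. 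The reduction maps $(k,f,\vec v)$ to the $\M{MM}$ instance $(N,P,\vec w)$.

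Next I would discharge the correctness equivalence. Unfolding the definitions, $\M{$\mu$-rec}(k,f,\vec v)$ is $\exists x,\,\semra f~\vec v~x$ and $\M{MM}(N,P,\vec w)$ is $\mmterm{(1,P)}{(1,\vec v\capp\vec 0)}$, and these are exactly the two sides of the biconditional supplied by \Cref{coq:ra_mm_simulator}. Hence $\M{$\mu$-rec}(k,f,\vec v)\toot\M{MM}(g(k,f,\vec v))$. Since $n$ and $P$ are \emph{computed} from $f$ by \Cref{coq:ra_mm_simulator}, and the remaining operations (the arithmetic $k+1+n$, list concatenation, padding with zeros, packaging the triple) are plainly definable in Coq without axioms, the function $g$ is automatically computable in the synthetic setting, so no separate computability argument is required. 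Combined with the correctness equivalence this is precisely $\M{$\mu$-rec}\reducesto\M{MM}$.

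There is essentially no obstacle at the level of this corollary: the entire difficulty was already absorbed into \Cref{coq:ra_mm_simulator} and, transitively, into the certified compiler \Cref{coq:ra_compiler} together with its invariant $\cstfont{ra\_compiled}$. The only points needing minor care are bookkeeping ones — making sure the arity index $N=k+1+n$ of the target Minsky machine matches the length of the padded input vector $\vec v\capp\vec 0$, and that the initial configuration used in $\M{MM}$ (starting at \PC{} value $1$ with registers $\vec v$ followed by $1+n$ zeros) is literally the one appearing in the conclusion of \Cref{coq:ra_mm_simulator}.
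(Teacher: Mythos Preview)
Your proposal is correct and matches the paper's intended argument: the corollary is stated without proof precisely because it is an immediate consequence of \Cref{coq:ra_mm_simulator}, and you have unpacked that derivation accurately. The bookkeeping points you flag (matching the arity $k+1+n$ and the zero-padded input vector) are exactly the only things to check.
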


\subsection{\mathversion{bold}Diophantine relations are \texorpdfstring{$\mu$}{mu}-recursively recognisable}

We encode $n$-ary  Diophantine relations using $\mu$-recursive algorithms.
Following the DPRM \Cref{coq:DPRM_n}, we consider a $n$-ary relation $R:\nat^n\to\Prop$ to be Diophantine 
if there is a single Diophantine equation\/ $p\dequal q:\dsingle(\fin m,\fin n)$ 
with\/ $n$ parameters and\/ $m$ (existential) variables such that\/ 
\[ \forall \vec v : \nat^n,\, R~\vec v \toot \exists \vec w : \nat^m,\, \semb p{\vec v}{\vec w} = \semb q{\vec v}{\vec w}.\]
Again we abusively confuse vectors with maps using the equivalence $X^n \simeq\fin n\to X$. 
We will prove that every Diophantine relation in this sense is $\mu$-recursively recognisable (see
\Cref{coq:mu_recursive_n}).
The proof idea is relatively straightforward.

\newcommand{\mueval}{\cstfont{eval}}
\newcommand{\mutest}{\cstfont{test}}

First, we implement $\mueval : \dpoly(\fin m, \fin n) \to \recalg{m + n}$ evaluating any polynomial:

\setCoqFilename{MuRec.ra_dio_poly}
\begin{lemma}[][ra_dio_poly_val]
  Given any Diophantine polynomial\/ $p : \dpoly(\fin m,\fin n)$, one can compute a\/ $\mu$-recursive algorithm\/ $\mueval_p : \recalg {m + n}$ 
  s.t.\ $\semra{\mueval_p}~(\vec w \app \vec v)~\semb p {\vec v}{\vec w}$.
\end{lemma}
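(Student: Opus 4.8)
The plan is to proceed by structural induction on the Diophantine polynomial $p : \dpoly(\fin m,\fin n)$, producing in each case an algorithm $\mueval_p : \recalg{m+n}$ together with a proof that $\semra{\mueval_p}~(\vec w \app \vec v)~\semb p{\vec v}{\vec w}$ for all $\vec v : \nat^n$ and $\vec w : \nat^m$; it is enough to exhibit the algorithm and check, directly from the recursive clauses of Figure~\ref{ra_rel_sem}, that it relates the input $\vec w \app \vec v$ to the number $\semb p{\vec v}{\vec w}$. Before the induction I would record two auxiliary $\mu$-recursive algorithms built by primitive recursion: an $a_+ : \recalg 2$ with $\semra{a_+}~[x;y]~(x+y)$, obtained as $\recA{(\prj 0)}{(\comp{\succN}{[\prj 1]})}$ (recurse on the first argument, successor in the step), and an $a_\times : \recalg 2$ with $\semra{a_\times}~[x;y]~(x\cdot y)$, built likewise from $a_+$. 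Their correctness against Figure~\ref{ra_rel_sem} is routine, and in any case such algorithms are already available in the accompanying libraries.

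The induction then distinguishes the five shapes of $p$, always reading the input as $\vec w \app \vec v$ with $\vec w$ occupying the first $m$ slots and $\vec v$ the last $n$. For a \emph{bound variable} $u : \fin m$, set $\mueval_u \defeq \prj{\ell(u)}$, where $\ell : \fin m \to \fin{m+n}$ embeds into the first $m$ positions, so that $(\vec w \app \vec v)_{\ell(u)} = \vec w_u = \semb u{\vec v}{\vec w}$. For a \emph{parameter} $x : \fin n$, set $\mueval_x \defeq \prj{r(x)}$ with $r : \fin n \to \fin{m+n}$, $x \mapsto m+x$ (suitably coerced), so that $(\vec w \app \vec v)_{r(x)} = \vec v_x$. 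For a \emph{constant} $c : \nat$, set $\mueval_c \defeq \comp{(\cst c)}{\cnil}$, the arity-$(m+n)$ lift of $\cst c : \recalg 0$ by composition with the empty vector of algorithms, which evaluates to $c$. For a \emph{sum} $p_1 \dplus p_2$, take $\mueval_{p_1 \dplus p_2} \defeq \comp{a_+}{[\mueval_{p_1}; \mueval_{p_2}]}$: by the semantics of $\comp{}{}$ it suffices to supply the intermediate pair $[\semb{p_1}{\vec v}{\vec w}; \semb{p_2}{\vec v}{\vec w}]$, whose components are produced by $\mueval_{p_1}, \mueval_{p_2}$ from the induction hypotheses, and $a_+$ returns their sum, namely $\semb{p_1 \dplus p_2}{\vec v}{\vec w}$. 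The \emph{product} $p_1 \dmult p_2$ is identical with $a_\times$ in place of $a_+$.

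The single genuine difficulty is bookkeeping rather than mathematics: pinning down the two embeddings $\fin m \hookrightarrow \fin{m+n}$ and $\fin n \hookrightarrow \fin{m+n}$ and proving the indexing lemma that evaluating $\vec w \app \vec v$ at an embedded index returns the expected component of $\vec w$ or $\vec v$ — exactly the sort of dependent-type index juggling over $\fin{\cdot}$ that the paper flags as a source of pain. Once that lemma is available, every case closes by a direct unfolding of Figure~\ref{ra_rel_sem}, so the overall argument stays short.
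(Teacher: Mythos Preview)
Your proposal is correct and matches the paper's own approach, which merely states that the implementation relies on $\mu$-recursive implementations of addition and multiplication; you have simply spelled out the structural induction and the bookkeeping with projections in detail. The only thing to note is that the paper's proof is a one-liner deferring everything to the libraries, so your expanded version is more explicit than what appears in the text but not different in substance.
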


\begin{proof}
  The implementation relies on implementations of addition and multiplication on natural numbers, which are relatively straightforward for $\mu$-recursive 
  algorithms.
\end{proof}

Secondly, we implement a bijection between $\nat$ and $\nat^m$ for any $m$:

\newcommand{\mupr}{\cstfont{pr}}
\newcommand{\muinj}{\cstfont{inj}}
\newcommand{\muproject}{\cstfont{project}}
\newcommand{\mufind}{\cstfont{find}}

\setCoqFilename{MuRec.ra_recomp}
\begin{lemma}[][ra_project_val]
  There are functions\/ $\mupr_m : \nat \to \nat^m$ and\/ $\muinj_m : \nat^m \to \nat$ s.t.\ $\mupr_m(\muinj_m\, \vec v) = \vec v$.
  Furthermore, given\/ $i : \fin m$, one can compute a\/ $\mu$-recursive algorithm $\muproject_i : \recalg 1$ s.t.\ 
  $\semra{\muproject_i}~(x\ccons\cnil)~{(\mupr_m\,x)_i}$, i.e.\ all the\/ $m$ components of\/ $\mupr_m$ are\/ $\mu$-recursive.
\end{lemma}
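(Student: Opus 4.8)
The plan is to obtain the bijection $\nat\simeq\nat^m$ by iterating a fixed pairing bijection $\nat\simeq\nat^2$, and then to read off the projections $(\mupr_m\,\cdot)_i$ as $\mu$-recursive algorithms.

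\emph{Step 1: the bijections.} Fix a pairing function $\langle\cdot,\cdot\rangle:\nat\times\nat\to\nat$ with inverses $\cstfont{pr1},\cstfont{pr2}:\nat\to\nat$ such that $\cstfont{pr1}\langle a,b\rangle=a$ and $\cstfont{pr2}\langle a,b\rangle=b$. Define $\muinj_m:\nat^m\to\nat$ and $\mupr_m:\nat\to\nat^m$ by recursion on $m$: put $\muinj_0\,\cnil\defeq 0$ and $\muinj_{1+m}(x\ccons\vec v)\defeq\langle x,\muinj_m\,\vec v\rangle$; dually $\mupr_0\,x\defeq\cnil$ and $\mupr_{1+m}\,x\defeq(\cstfont{pr1}\,x)\ccons\mupr_m(\cstfont{pr2}\,x)$. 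The round-trip $\mupr_m(\muinj_m\,\vec v)=\vec v$ is then an immediate induction on $m$ (equivalently, on $\vec v$) using only the two pairing equations.

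\emph{Step 2: the projections are $\mu$-recursive, assuming $\cstfont{pr1},\cstfont{pr2}$ are.} Unfolding the definitions yields, for every $i:\fin m$, the closed form $(\mupr_m\,x)_i=\cstfont{pr1}\bigl(\cstfont{pr2}^{\,i}\,x\bigr)$ — a one-line induction on $i$ in which the ambient $m$ only enters through $i<m$. So, given $\mu$-recursive algorithms $P_1,P_2:\recalg 1$ realising $\cstfont{pr1}$ and $\cstfont{pr2}$ (i.e.\ $\semra{P_1}\,(x\ccons\cnil)\,(\cstfont{pr1}\,x)$, similarly for $P_2$), we set $\muproject_i\defeq\comp{P_1}{[\,h_i\,]}$ where $h_i:\recalg 1$ is defined by meta-level recursion on $i$ via $h_0\defeq\prj 0$ and $h_{1+j}\defeq\comp{P_2}{[\,h_j\,]}$. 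A routine induction on $i$ gives $\semra{h_i}\,(x\ccons\cnil)\,(\cstfont{pr2}^{\,i}\,x)$, and one application of the semantic clause for $\comp{}{}$ then yields $\semra{\muproject_i}\,(x\ccons\cnil)\,\bigl(\cstfont{pr1}(\cstfont{pr2}^{\,i}\,x)\bigr)=(\mupr_m\,x)_i$, as required.

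\emph{Step 3: realising $\cstfont{pr1},\cstfont{pr2}$, and the main obstacle.} Take the Cantor pairing $\langle a,b\rangle\defeq T_{a+b}+a$ with $T_k\defeq k(k+1)/2$ the $k$-th triangular number. $T$ is primitive recursive ($T_0=0$, $T_{1+k}=T_k+(1+k)$), hence $\mu$-recursive, so from addition (already available, cf.\ the proof of \Cref{coq:ra_dio_poly_val}) $\langle\cdot,\cdot\rangle$ is $\mu$-recursive. To invert $n$: since $k\mapsto T_k$ is strictly increasing and unbounded there is a unique $k$ with $T_k\le n<T_{1+k}$, computed by a minimisation $\minA f$ for a $\mu$-recursive $f$ with $f(j\ccons n\ccons\cnil)=0$ iff $T_{1+j}>n$ (built from $T$, truncated subtraction and a sign test); this search is total. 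Setting $a\defeq n-T_k$, $b\defeq k-a$ with truncated subtraction (primitive recursive, hence $\mu$-recursive) defines $\cstfont{pr1},\cstfont{pr2}$, and $\cstfont{pr1}\langle a,b\rangle=a$, $\cstfont{pr2}\langle a,b\rangle=b$ reduce to the bracketing $T_{a+b}\le T_{a+b}+a<T_{1+(a+b)}$ (strict because $a\le a+b$) plus uniqueness of the index sandwiching $n$ between consecutive triangular numbers. I expect this to be the only delicate point: proving the minimisation total and returning exactly the intended index, and discharging the off-by-one facts about $T$ over $\nat$ where $-$ is truncated. In the mechanisation these pairing/triangular-number facts may already reside in the shared arithmetic library, in which case Step 3 collapses to packaging them into $\mu$-recursive form; Steps 1, 2 and the assembly of $\muproject_i$ are pure structural inductions with no surprises.
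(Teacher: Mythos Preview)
Your proposal is correct and follows the standard route: iterate a Cantor pairing bijection $\nat\simeq\nat^2$ to obtain $\nat\simeq\nat^m$, then realise each component $(\mupr_m\,x)_i=\cstfont{pr1}(\cstfont{pr2}^{\,i}\,x)$ as a composition of $\mu$-recursive algorithms for $\cstfont{pr1}$ and $\cstfont{pr2}$, themselves obtained by bounded search on triangular numbers. The paper gives no textual proof for this lemma---it is stated and immediately followed by the next step---and defers the details to the Coq files \filefont{recomp.v} and \filefont{ra\_recomp.v}; the name already hints at the pairing/recomposition approach you describe, so your decomposition matches what the mechanisation does.
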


Thirdly, given an equation $p \dequal q$, we compute a test algorithm $\mutest_{p,q}:\recalg {1+n}$, always terminating and 
returning $0$ iff $p \dequal q$ is satisfied when decoding the first argument
as a vector $\vec w$ using $\mupr_m$:

\setCoqFilename{MuRec.ra_dio_poly}
\begin{lemma}[][ra_dio_poly_test_val]
  Given any $p \dequal q : \dsingle(\fin m, \fin n)$, one can compute a\/ $\mu$-recursive algorithm 
  $\mutest_{p,q} : \recalg {1+n}$ s.t.\/
  $\semra{\mutest_{p,q}}~(x \ccons \vec v)~0 \toot \semb p {\vec v}{\vec w} = \semb q {\vec v}{\vec w}$
  where $\vec w \defeq \mupr_m(x)$.
  Furthermore,\/ $\mutest_{p,q}$ is 
  \coqlink[ra_dio_poly_test_prim]{primitive recursive} hence
  \coqlink[ra_dio_poly_test_total]{total}.
\end{lemma}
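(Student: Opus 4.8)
The plan is to assemble $\mutest_{p,q}$ by composition from ingredients that are already available: the polynomial evaluators $\mueval_p,\mueval_q:\recalg{m+n}$ of \Cref{coq:ra_dio_poly_val}, the decoding projections $\muproject_0,\dots,\muproject_{m-1}:\recalg 1$ of \Cref{coq:ra_project_val}, and a small, manifestly primitive recursive equality test. Conceptually, on input $(x\ccons\vec v):\nat^{1+n}$ the algorithm should (i)~rebuild the vector $\vec w=\mupr_m(x)$ slot by slot, (ii)~run $\mueval_p$ and $\mueval_q$ on $\vec w\app\vec v$, and (iii)~compare the two resulting numbers, returning $0$ exactly when they agree. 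Since $\mu$-recursive composition plugs a $k$-vector of arity-$i$ algorithms into a $k$-ary one, all of this can be expressed without ever using minimisation, which is what will ultimately yield the ``primitive recursive, hence total'' addendum.

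Concretely, first I would define, for every index $j:\fin{m+n}$, an arity-$(1+n)$ feeder $g_j:\recalg{1+n}$: for $j<m$ set $g_j\defeq\comp{\muproject_j}{[\prj 0]}$ (extract the encoding $x$ from $(x\ccons\vec v)$, then decode its $j$-th component, obtaining $\vec w_j$), and for $m\le j<m+n$ set $g_j\defeq\prj{1+(j-m)}$ (extract $\vec v_{j-m}$). Writing $\vec g=[g_0;\dots;g_{m+n-1}]:\recalg{1+n}^{\,m+n}$, the algorithm $P_p\defeq\comp{\mueval_p}{\vec g}:\recalg{1+n}$ then satisfies $\semra{P_p}~(x\ccons\vec v)~{\semb p{\vec v}{\vec w}}$ with $\vec w=\mupr_m(x)$, by unfolding the semantics of composition from Figure~\ref{ra_rel_sem} and invoking \Cref{coq:ra_dio_poly_val} and \Cref{coq:ra_project_val}; similarly for $P_q$. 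Next I would implement a primitive recursive $\cstfont{eq}:\recalg 2$ computing the ``absolute difference'' of its two arguments (built from addition and truncated subtraction, a.k.a.\ monus), so that it is total and satisfies $\semra{\cstfont{eq}}~(a\ccons b\ccons\cnil)~0\toot a=b$. Finally I set $\mutest_{p,q}\defeq\comp{\cstfont{eq}}{[P_p;P_q]}:\recalg{1+n}$.

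For the stated equivalence, unfolding the relational semantics of composition gives that $\semra{\mutest_{p,q}}~(x\ccons\vec v)~0$ holds iff there exist $a,b$ with $\semra{P_p}~(x\ccons\vec v)~a$, $\semra{P_q}~(x\ccons\vec v)~b$ and $\semra{\cstfont{eq}}~(a\ccons b\ccons\cnil)~0$; by the previous step $a=\semb p{\vec v}{\vec w}$ and $b=\semb q{\vec v}{\vec w}$ (the relational semantics being deterministic on these minimisation-free algorithms), and the last conjunct reduces to $a=b$, so the whole thing is equivalent to $\semb p{\vec v}{\vec w}=\semb q{\vec v}{\vec w}$, as required. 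For the addendum, none of the building blocks --- $\mueval_p$, $\mueval_q$, the $\muproject_j$, the plain projections $\prj{\cdot}$, and $\cstfont{eq}$ --- uses the operator $\minA$, hence neither does $\mutest_{p,q}$, so it belongs to the primitive recursive fragment; totality then follows from the standard lemma that every primitive recursive algorithm terminates on every input, by structural induction on the membership derivation for that fragment. I expect the main obstacle to be organisational rather than mathematical: making the $\fin{m+n}$-indexed case split defining $\vec g$ type-check (uniform arity $1+n$, exact length $m+n$, and the off-by-one bookkeeping in the $\prj{1+(j-m)}$ branch), together with having a clean inductive characterisation of the primitive recursive sub-language of $\mu$-recursive algorithms and its totality lemma on hand --- including the routine but necessary checks that $\mueval_p$ and the $\muproject_j$ already live there.
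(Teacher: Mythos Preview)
Your proposal is correct and follows the natural construction implied by the paper's surrounding text: the paper does not spell out a proof for this lemma, but your decomposition---feed the decoded components $\vec w_j=\muproject_j(x)$ together with the parameters $\vec v$ into the evaluators $\mueval_p,\mueval_q$ of \Cref{coq:ra_dio_poly_val}, then compare via a primitive recursive absolute-difference test---is exactly the intended route. Your closing caveat is apt: the only real work is the bookkeeping on the $\fin{m+n}$ case split and confirming that $\mueval_{(\cdot)}$ and the $\muproject_j$ are already minimisation-free, both of which are routine.
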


To finish, given a vector $\vec v : \nat^n$ describing the values of parameters, 
we use minimisation $\minA$ to search for a number $x$ such that $\vec w\defeq\mupr_m(x)$ 
is a solution for $p \dequal q$:

\begin{theorem}[][ra_dio_poly_find_spec]
  Given\/ $p \dequal q : \dsingle(\fin m, \fin n)$, one can compute a\/ $\mu$-recursive algorithm $\mufind_{p,q} : \recalg{n}$ 
  s.t.\ for any\/ $\vec v : \nat^n$, we have the equivalence 
  $$(\exists  x:\nat,\,\semra{\mufind_{p,q}}~{\vec v}~{x})~ \toot~ \exists \vec w : \nat^m,\, \semb p{\vec v}{\vec w} = \semb q{\vec v}{\vec w}.$$
\end{theorem}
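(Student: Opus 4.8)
The plan is to set $\mufind_{p,q} \defeq \minA\,\mutest_{p,q}$, where $\mutest_{p,q} : \recalg{1+n}$ is the total test algorithm from \Cref{coq:ra_dio_poly_test_val}: it always terminates, and $\semra{\mutest_{p,q}}~(x\ccons\vec v)~0$ holds exactly when $\vec w \defeq \mupr_m(x)$ is a solution of $p\dequal q$ for the parameter vector $\vec v$. Since $\mutest_{p,q}$ has arity $1+n$, the minimisation $\minA\,\mutest_{p,q}$ has arity $n$ as required, and it is clearly computable from $p\dequal q$. It then remains to verify the stated equivalence by unfolding the relational semantics of $\minA$ from \Cref{ra_rel_sem}, namely $\semra{\minA f}~\vec v~x \toot \exists \vec w':\nat^x,\, \semra f~(x\ccons\vec v)~0 \lconj \forall y:\fin x,\,\semra f~(\overline y\ccons\vec v)~(1+\vec w'_y)$.

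For the forward direction, assume $\semra{\mufind_{p,q}}~\vec v~x$. Unfolding the $\minA$ semantics yields in particular $\semra{\mutest_{p,q}}~(x\ccons\vec v)~0$, so \Cref{coq:ra_dio_poly_test_val} gives $\semb p{\vec v}{\vec w} = \semb q{\vec v}{\vec w}$ with $\vec w \defeq \mupr_m(x)$; we take this $\vec w$ as the witness. For the backward direction, assume there is $\vec w : \nat^m$ with $\semb p{\vec v}{\vec w} = \semb q{\vec v}{\vec w}$. Put $x_0 \defeq \muinj_m(\vec w)$; by \Cref{coq:ra_project_val} we have $\mupr_m(x_0) = \vec w$, hence $\semra{\mutest_{p,q}}~(x_0\ccons\vec v)~0$ by \Cref{coq:ra_dio_poly_test_val} again. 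Since $\mutest_{p,q}$ is total, the predicate $\lambda x.\,\semra{\mutest_{p,q}}~(x\ccons\vec v)~0$ is decidable, and together with the witness $x_0$ a constructive bounded search below $x_0$ produces the least $x^\ast \le x_0$ satisfying it. For every $y < x^\ast$, totality of $\mutest_{p,q}$ provides a value $r_y$ with $\semra{\mutest_{p,q}}~(\overline y\ccons\vec v)~r_y$, and $r_y \neq 0$ by minimality of $x^\ast$, so $r_y$ is a successor, say $r_y = 1 + s_y$. Collecting the $s_y$ into a vector $\vec w' : \nat^{x^\ast}$ provides exactly the data demanded by the $\minA$ semantics, whence $\semra{\mufind_{p,q}}~\vec v~x^\ast$ and in particular $\exists x,\,\semra{\mufind_{p,q}}~\vec v~x$.

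The main obstacle, modest in this already-built infrastructure, is the backward direction: one must extract the \emph{least} witness of a decidable predicate from an arbitrary witness in a constructive setting, and then repackage the values of $\mutest_{p,q}$ on all smaller inputs --- guaranteed to exist by totality and to be nonzero by minimality --- into the finite vector required by the relational semantics of $\minA$. This is routine given that $\mutest_{p,q}$ is total (\Cref{coq:ra_dio_poly_test_val}) and that $\nat$ admits constructive bounded minimisation, but it is the only spot where anything beyond a direct unfolding of definitions is needed.
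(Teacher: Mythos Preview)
Your proposal is correct and follows exactly the approach the paper intends: define $\mufind_{p,q}\defeq\minA\,\mutest_{p,q}$ and unfold the semantics of $\minA$, using \Cref{coq:ra_dio_poly_test_val} for the characterisation of the zero case and \Cref{coq:ra_project_val} to manufacture a witness $x_0=\muinj_m(\vec w)$ in the backward direction. The paper itself does not spell out a proof here; the sentence preceding the theorem and the remark after it (``a single unbounded minimisation applied to an otherwise primitive recursive algorithm'') are all it gives, and your argument is precisely the intended expansion.

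One minor wording point: when you write ``since $\mutest_{p,q}$ is total, the predicate \ldots\ is decidable'', note that totality alone is not quite enough in a constructive setting --- you are implicitly using that the relational semantics is functional (or, equivalently, that $\mutest_{p,q}$ is primitive recursive, as \Cref{coq:ra_dio_poly_test_val} also records). With functionality plus totality you indeed get a computable output value and hence decidability of whether it equals~$0$; the rest of your bounded-search argument then goes through exactly as you wrote it.
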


Notice that while $\mufind_{p,q}$ cannot be primitive recursive (it does not terminate when the equation $p\dequal q$ is unsolvable), 
it is however implemented as a single unbounded minimization applied to an otherwise primitive recursive algorithm.

\setCoqFilename{MuRec.Reductions.H10_to_MUREC_HALTING}
\begin{corollary}[][H10_MUREC_HALTING]
  $\M{H10}\reducesto \M{$\mu$-rec}$.
\end{corollary}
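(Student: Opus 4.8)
The plan is to obtain the reduction by a direct instantiation of \Cref{coq:ra_dio_poly_find_spec} in the parameter-free case. An instance of $\M{H10}$ is a pair $(n,\,p\dequal q)$ with $p\dequal q:\dsingle(\fin n,\fin 0)$, and by definition $\M{H10}(n,p\dequal q)$ holds iff $\exists\vec w:\nat^n,\,\semb p\cnil{\vec w}=\semb q\cnil{\vec w}$, where $\cnil$ denotes the unique valuation of the empty parameter type $\fin 0$ (equivalently the empty vector of $\nat^0$).

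First I would apply \Cref{coq:ra_dio_poly_find_spec} with $n$ existential variables and $0$ parameters. This computes, from $p\dequal q$, a $\mu$-recursive algorithm $\mufind_{p,q}:\recalg 0$ such that, at the empty parameter vector,
\[ (\exists x:\nat,\,\semra{\mufind_{p,q}}~\cnil~x)~\toot~\exists\vec w:\nat^n,\,\semb p\cnil{\vec w}=\semb q\cnil{\vec w}. \]
The right-hand side is exactly $\M{H10}(n,p\dequal q)$, and the left-hand side is, unfolding the definition of $\M{$\mu$-rec}$, the proposition $\M{$\mu$-rec}(0,\mufind_{p,q},\cnil)$. Hence I would take as reduction function the map sending the instance $(n,p\dequal q)$ to the $\M{$\mu$-rec}$ instance $(0,\mufind_{p,q},\cnil)$; it is a bona fide function of Coq's type theory since $\mufind_{p,q}$ is produced computably by \Cref{coq:ra_dio_poly_find_spec}, and the displayed equivalence is precisely the defining property of a many-one reduction, which in the synthetic setting of the preliminaries is all that is required.

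I do not expect a genuine obstacle here: all the substantive work has already been discharged in \Cref{coq:ra_dio_poly_find_spec}, which itself combines \Cref{coq:ra_dio_poly_val} ($\mu$-recursive evaluation of Diophantine polynomials), \Cref{coq:ra_project_val} (the $\mu$-recursive bijection $\nat\simeq\nat^m$ used to enumerate candidate tuples of existential variables with a single natural-number index) and \Cref{coq:ra_dio_poly_test_val} (the total, primitive recursive satisfaction test). The only point worth keeping in mind while writing up is that $\mufind_{p,q}$ is implemented as a single unbounded minimisation $\minA$ of an otherwise primitive recursive algorithm, so it diverges exactly when $p\dequal q$ has no solution in $\nat$ — which is precisely what makes the construction a reduction rather than a decision procedure, and why the equivalence above is the natural statement to prove.
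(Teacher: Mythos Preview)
Your proposal is correct and matches the paper's own proof, which is a one-line ``direct application of \Cref{coq:ra_dio_poly_find_spec} with no parameters, i.e.\ $n\defeq 0$.'' Your unpacking of the reduction function $(n,p\dequal q)\mapsto(0,\mufind_{p,q},\cnil)$ and the resulting equivalence is exactly what that one line amounts to.
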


\begin{proof}
Direct application of \Cref{coq:ra_dio_poly_find_spec} with no parameters, i.e.\ $n\defeq 0$.
\end{proof}

\section{The Weak Call-by-Value \texorpdfstring{$\lambda$}{lambda}-Calculus \texorpdfstring{$\M{L}$}{L}}
\label{sec:wcbv}

Until now, we have shown that the problems $\M{MM}$, $\M{FRACTRAN}$, $\M{H10}$, and $\M{$\mu$-rec}$ are all interreducible w.r.t.\ many-one reductions.
We contribute our proofs to the Coq library of undecidability proofs~\cite{PSLSyntCT}, which contains several other well-known reduction proofs.
Amongst them is a chain of reductions published in related work and discussed in Section~\ref{sec:rel_work}, 
establishing that the halting problem of Turing machines reduces to $\M{MM}$ and thus to all other problems we consider.

It is also possible to prove that all considered problems are in fact interreducible to the halting problem of Turing machines.
We demonstrate one technique to do so, based on the weak call-by-value $\lambda$-calculus $\M{L}$,
which is already shown interreducible with the halting problem of Turing machines in~\cite{forster2021simulation}.

We briefly introduce $\M L$ in this section and then reduce $\M{$\mu$-rec}$ to halting in $\M{L}$.
Syntactically, $\M L$ is just the untyped $\lambda$-calculus, De Bruijn style:
\[ s, t, u : \M{L} \bnfdef n ~|~ s t ~|~ \lambda s \qquad \text{where $n : \nat$} \]
We define a weak call-by-value evaluation predicate $s \triangleright t$, coinciding with other definitions of reduction for closed terms,
where $s[t/0]$ means ``replace De Bruijn variable 0 in $s$ by $t$''.
\begin{mathpar}
  \infer{~}{\lambda s \triangleright \lambda s}

  \infer{s \triangleright \lambda s' \and t \triangleright t' \and s'[t'\!/0] \triangleright u}{st \triangleright u}
\end{mathpar}
The halting problem for $\M L$ can then be defined as $\M{WCBV} (s : \M L) := \exists t,\, s \triangleright t$.

It is possible to encode natural numbers, vectors over natural numbers, and other data types into $\M L$ using Scott's encoding.
We will denote with $\overline{\vphantom{t} \,\cdot\,}$ such encoding functions.
For further details on encodings and $\M{L}$ we refer to~\cite{FS}, since they do not actually matter to understand the reduction.

In general, a predicate reduces to the $\M{L}$-halting problem $\M{WCBV}$ if it is $\M L$-recognisable:

\setCoqFilename{L.Computability.Synthetic}

\begin{definition}[][L_recognisable]
  We say that a function $f : {X_1 \to \dots \to X_n \to Y}$ is \emph{$\M L$-computable} if there exists a closed term $t$ s.t.\ $\forall x_1 : X_1.\dots\forall x_n : X_n.\;t \; \overline {x_1} \dots \overline {x_n} \triangleright \overline{f x_1 \dots x_n}$.

  Given a type $X$ encodable in $\M L$ and a predicate $P$ over $X$ we say that $P$ is \emph{$\M{L}$-recognisable} if there is an $\M{L}$-computable function $f : X \to \nat \to \bool$ s.t. $P\,x \toot \exists n, f\,x\,n=\mathsf{true}$.
\end{definition}

\begin{theorem}[][L_recognisable_halt]\label{thm:Lrec}
  Let $X$ be encodable in $\M L$.
  For a predicate $P$ over $X$ we have that
  $P \preceq \M{WCBV}$ if and only if $P$ is $\M{L}$-recognisable.
\end{theorem}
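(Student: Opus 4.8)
The plan is to prove the two implications separately; only the direction from $\M L$-recognisability to a reduction is needed in the remainder of the paper (it is what yields a reduction of $\mu$-recursive halting to $\M{WCBV}$), but both follow quickly once the standard facts about $\M L$ from~\cite{FS} are available.

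\emph{Recognisability implies a reduction.} Suppose $f : X \to \nat \to \bool$ is $\M L$-computable, witnessed by a closed term $t_f$ with $t_f\,\overline{x}\,\overline{n} \triangleright \overline{f\,x\,n}$, and $P\,x \toot \exists n,\, f\,x\,n = \mathsf{true}$. Using a call-by-value fixed-point combinator together with the Scott-encoded case distinction on $\bool$ and the Scott successor on $\nat$, I would build a closed term $\cstfont{search}$ such that $\cstfont{search}\,\overline{x}\,\overline{n}$ reduces to a value iff there exists $m \geq n$ with $f\,x\,m = \mathsf{true}$, and diverges otherwise: at stage $n$ it evaluates $t_f\,\overline{x}\,\overline{n}$, returns if the result encodes $\mathsf{true}$, and otherwise calls itself on $\overline{1+n}$—with the recursive call suitably suspended (e.g.\ by $\eta$-expansion) so that the combinator behaves correctly under the call-by-value order. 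The reduction function is then $r\,x \defeq \cstfont{search}\,\overline{x}\,\overline{0}$, which is a bona fide function $X \to \M L$ (a composition of the encoding and of $\M L$ application) and hence, synthetically, a computable reduction. The equivalence $P\,x \toot \M{WCBV}(r\,x)$ reduces to the two halves of the claim about $\cstfont{search}$: convergence yields a witness by inspecting the terminating evaluation, and a witness yields convergence by induction on the least one, using that $f$ is a total Coq function so that each intermediate value $f\,x\,k$ is a genuine Boolean.

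\emph{A reduction implies recognisability.} Suppose $r : X \to \M L$ with $P\,x \toot \M{WCBV}(r\,x)$. Here I would invoke the step-indexed self-interpreter of $\M L$ from~\cite{FS}: an $\M L$-computable function $\cstfont{eval} : \nat \to \M L \to \option\M L$ returning the value reached within $n$ reduction steps, if any, so that $\M{WCBV}(s) \toot \exists n,\, \cstfont{eval}\,n\,s \neq \emptyset$. Define $f : X \to \nat \to \bool$ by $f\,x\,n \defeq \mathsf{true}$ iff $\cstfont{eval}\,n\,(r\,x) \neq \emptyset$. Then $P\,x \toot \M{WCBV}(r\,x) \toot \exists n,\, f\,x\,n = \mathsf{true}$, and it only remains to observe that $f$ is $\M L$-computable, being the composition of the $\M L$-computable $\cstfont{eval}$, an emptiness test on $\option\M L$, and $r$—the $\M L$-computability of the last factor being discharged, as for every reduction function in this development, by the $\M L$-library's extraction infrastructure (cf.~\cite{FS}).

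\emph{Main obstacle.} The one genuinely hard ingredient is imported rather than proved here: the $\M L$-computable step-indexed self-interpreter of $\M L$ used in the second direction, and more broadly the machinery internalising Coq-level functions as $\M L$-terms. The only real work inside this proof is the first direction—implementing unbounded search by a fixed-point combinator so that it converges \emph{exactly} when a witness exists, which is where the delicate point is to delay the recursive call so the construction remains well-behaved under call-by-value evaluation.
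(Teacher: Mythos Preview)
The paper does not prove this theorem: it is stated as a result imported from the $\M L$ library (the \texttt{coqlink} points into \texttt{L/Computability/Synthetic}), and the surrounding text immediately moves on to instantiating it. So there is no paper proof to compare against; your sketch stands on its own.

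Your direction \emph{recognisable $\Rightarrow$ reduction} is correct and is exactly the standard construction: unbounded linear search implemented with a call-by-value fixed-point combinator, with the recursive call thunked so that the search actually diverges when no witness exists. This is the only direction the paper needs (for $\M{\mu\text{-rec}}\preceq\M{WCBV}$), and your account of it is fine.

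The direction \emph{reduction $\Rightarrow$ recognisable} has a genuine gap. You define $f\,x\,n$ via the step-indexed self-interpreter applied to $r\,x$, and then discharge $\M L$-computability of $f$ by appealing to the extraction framework for $r$. But the hypothesis $P\preceq\M{WCBV}$ gives you only an \emph{arbitrary} Coq function $r:X\to\M L$; nothing in the theorem statement says $r$ lies in the simply-typed, non-dependent fragment that the extraction framework of~\cite{forster_et_al:LIPIcs:2019:11072} handles, and in an axiom-free setting you cannot conclude that every Coq function is $\M L$-computable (that would be a form of Church's thesis, which the paper explicitly avoids). So ``the extraction infrastructure handles it'' is not a proof here. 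The actual Coq lemma resolves this either by a weaker notion of recognisability on that side, by packaging the reduction with its $\M L$-computability, or by going through an equivalent synthetic semi-decidability characterisation; whichever it is, you should identify it rather than wave at extraction.
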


To instantiate the theorem to \M{$\mu$-rec} one would like to give a step-indexed evaluation function for $\mu$-recursive algorithms, i.e.\ for $f : \recalg k$ a function $\sem{f}{}{} : \nat \to \nat ^k  \to \option \nat$ s.t.\ $\bigstep[c] f{\vec v} x \toot \sem f {}{} c~\vec v = \some x$ and then show that it is computable in $\M L$.
The second step, i.e.~proving that this function is $\M L$-computable, is in principle fully automatic using the certifying extraction framework from Forster and Kunze~\cite{forster_et_al:LIPIcs:2019:11072}, which is implemented using tools from the MetaCoq projet~\cite{metacoq}.
The framework supports extracting non-dependent, non-mutual, non-nested, simply-typed Coq functions to $\M L$ automatically, and also generates a proof of correctness of the extract.
However, recall that the type $\recalg k$ of $\mu$-recursive functions is both heavily dependent and used nested constructions:
It has a type index in $k$, makes use of finite types $\Fin {k}$, and has a nested use of the dependent vector type, i.e.\ contains subterms of type $(\recalg k)^i$.
Thus, a direct step-indexed interpreter will not be extractable.
Instead, we use a general technique and implement a step-indexed interpreter working on the \textit{syntactic skeleton} of $\mu$-recursive functions.

A syntactic skeleton for a type $I$ mirrors the constructors of $I$, but without any dependent types.
If $I$ furthermore has nested applications of types $N_1, \dots, N_n$, the syntactic skeleton also contains the constructors of $N_1, \dots, N_n$.
That is, if $I$ has $i$ constructors and $N_1, \dots, N_n$ have $m_1, \dots, m_n$ constructors respectively,
the syntactic skeleton of $I$ has $i + m_1 + \dots + m_n$ constructors.
The syntactic skeleton $\recalg{}'$ of the type $\recalg{}$ is defined as follows:
\begin{mathpar}
  \def\myspace{\mskip 50mu}%
  \begin{array}{c@{\mskip 40mu}c}
    \infer{ n : \nat }{ \cst n : \recalg{}' }\myspace
    \infer{~}{\zero : \recalg{}'}\myspace
    \infer{~}{\succN : \recalg{}'}\myspace
    \infer{j : \nat}{\prj j : \recalg{}'}
&   \infer{~}{\cstfont{nil} : \recalg{}'}\\[3ex]
    \infer{f : \recalg{}' \and g : \recalg{}'}{\comp f g : \recalg{}'}\myspace
    \infer{f : \recalg{}' \and g : \recalg {}'}{\recA f g : \recalg{}'}\myspace
    \infer{f : \recalg{}'}{\minA f : \recalg{}'}
&   \infer{f : \recalg{}' \and g : \recalg{}'}{\cstfont{cons}\,f\,g : \recalg{}'}\\[2ex]
  \end{array}
\end{mathpar}

Note that we re-use constructor names.
In the syntactic skeleton we have $j : \nat$ instead of $j : \Fin {k}$ for $\prj j$,
$g : \recalg{}'$ instead of $G :{(\recalg k)^i}$ for $\comp fg$,
and the constructors $\cstfont{cons}$ and $\cstfont{nil}$ as the constructors of the vector type are added.

\newcommand{\erase}{\cstfont{erase}}
It is straightforward to implement mutually recursive functions $\erase:{\recalg k \to \recalg{}'}$ and $\erase' : {(\recalg k)^i \to \recalg{}'}$ which are essentially the identity.
We call a skeleton $s : \recalg{}'$ \emph{valid} if it corresponds to a $\mu$-recursive function or a vector, i.e.\ if $\exists k : \nat,\;(\exists f :{\recalg k},\;s = \erase\,f) \lor (\exists i : \nat,\exists l :{(\recalg k)^i},\;s = \erase'\,l)$.

For $\recalg{}'$ we can now define a step-indexed evaluation function \[ \semb{\cdot}{}{}{} :{\recalg{}'\to\nat\to\nat\to\List\nat\to\option(\nat+\List\nat)}. \]
A call $\semb{f}{c}{m}~l$ uses $c$ as step-index, $m$ as auxiliary counter to implement unbounded search, and $l$ as input.
If $\semb{f}{c}{m}~l = \some {\inl v}$, then $v$ is the value of the evaluation.
If $\semb{f}{c}{m}~l = \some {\inr l'}$, then $f$ encoded a list of functions via $\cstfont{cons}$ and $\cstfont{nil}$ which pointwise evaluated to $l' :{\List\nat}$.
If $\semb{f}{c}{m}~l = \emptyset$ either if usual the step-index $c$ was not big enough, the function $f$ does not terminate on input $l$, or $f$ is not a valid skeleton.

\bigskip

\centerline{\small$\begin{array}{r@{\,\defeq\,}l@{~~}l}
  \semb{\cst n}{1 + c}{m}~l              & \some{\inl n} \\
  \semb{\zero}{1 + c}{m}~l               & \some{\inl 0} \\
  \semb{\succN}{1 + c}{m}~(x\ccons l)    & \some{\inl (1 + x)} \\
  \semb{\prj p}{1 + c}{m}~l              & \some{\inl x} 
                                         & \text{for $l_p = \some x$} \\
  \semb{\comp f g}{1 + c}{m}~l           & \some{\inl x} 
                                         & \text{for $\semb{g}{c}{m}~l=\some{\inr l'}$ and $\semb{f}{c}{m}~l' = \some{\inl x}$} \\
  \semb{\recA f g}{1 + c}{m}~(0\ccons l) & \some{\inl x} 
                                         & \text{for $\semb{f}{c}{m}~l = \some{\inl x}$} \\
  \semb{\recA f g}{1 + c}{m}~\bigl((1 + n)\ccons l\bigr) 
                                         & \some{\inl x} 
                                         & \text{for $\semb{\recA f g}{c}{m}~(n\ccons l) = \some{\inl y}$ and $\semb{g}{c}{m}~(n\ccons y\ccons l) = \some{\inl x}$} \\
  \semb{\cstfont{nil}}{1 + c}{m}~l       & \some{\inr \cnil} \\
  \semb{\cstfont{cons}~f~g}{1 + c}{m}~l  & \some{\inr(x \ccons l')} 
                                         & \text{for $\semb{f}{c}{m}~l=\some{\inl x}$ and $\semb{g}{c}{m}~l=\some{\inr l'}$} \\
  \semb{\minA f}{1 + c}{m}~l             & \some{\inl m} 
                                         & \text{for $\semb{f}{c}0~(m \ccons l) = \some{\inl 0}$} \\
  \semb{\minA f}{1 + c}{m}~l             & \some{\inl x} 
                                         & \text{for $\semb{f}{c}0~(m \ccons l) = \some{\inl (1 + y)}$ and $\semb{\minA f}{c}{1 + m}~l = \some{\inl x}$} \\[0.5ex] 
  \semb{f}{c}{m}~l                       & \emptyset
                                         & \text{in all other cases}
\end{array}$}

\medskip

\setCoqFilename{L.Reductions.MuRec}
\begin{lemma}[][erase_correct]
  For all $f :{\recalg k}$, $\vec v :{\nat^k}$, $c : \nat$, and $n : \nat$ we have
  \[\bigstep[c]{f}{\vec v}{n} \leftrightarrow \semb{f}{c}{0} (\cstfont{v2l}~{\vec v}) = \some{\inl n}\]
\end{lemma}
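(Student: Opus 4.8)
The plan is to prove both directions by induction, after strengthening the statement in two ways. First, since $\comp f{\vec g}$ carries a \emph{nested} vector $\vec g$ of sub-algorithms, and the skeleton evaluator threads such vectors through the $\cstfont{cons}/\cstfont{nil}$ constructors of $\recalg{}'$, I would generalise to a mutual claim that additionally relates, for any vector $\vec g$ of $\mu$-recursive algorithms, the pointwise evaluations $\bigstep[c]{\vec g_p}{\vec v}{\vec w_p}$ (at the appropriate per-component step budgets) to $\semb{\erase'\,\vec g}{c}{m}(\cstfont{v2l}~\vec v) = \some{\inr(\cstfont{v2l}~\vec w)}$. Second, I would reinstate the auxiliary argument that Figure~\ref{ra_ca_sem} suppresses (the Coq development's big-step predicate carries it, as the paper's footnote notes): I would work with an inductively-structured, offset/accumulator-carrying variant of the $\minA f$ judgement that unfolds the minimisation search one step at a time, mirroring the accumulator $m$ in $\semb{\minA f}{1+c}{m}~l$, and relate it to the flat rule displayed in Figure~\ref{ra_ca_sem} by a separate side induction. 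The displayed lemma is then the instance of the strengthened equivalence at the initial offset, with $\erase$ left implicit on the left of $\semb{\cdot}{}{}$.

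For the forward direction I would induct on the derivation of the big-step judgement, mutually with an induction on the length of $\vec g$ for the vector statement; each big-step rule of Figure~\ref{ra_ca_sem} is matched against the corresponding defining equation of the skeleton evaluator, its premises discharged by the induction hypotheses. Beyond these, the only facts needed are that $\erase$ and $\erase'$ commute with the head constructors — e.g.\ $\erase(\comp f{\vec g}) = \comp{(\erase f)}{(\erase'\,\vec g)}$, $\erase'$ of a cons-vector is $\cstfont{cons}$ of the erased head and the erased tail, $\erase'$ of the empty vector is $\cstfont{nil}$, and likewise for projections, primitive recursion and minimisation — and that $\cstfont{v2l}$ behaves as expected under head-cons and indexed access ($(\cstfont{v2l}~\vec v)_p = \some{\vec v_p}$). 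The $\comp f{\vec g}$ case uses the vector statement to obtain $\semb{\erase'\,\vec g}{c}{m}(\cstfont{v2l}~\vec v) = \some{\inr(\cstfont{v2l}~\vec w)}$ and then feeds the list $\cstfont{v2l}~\vec w$ into $\semb{\erase f}{c}{m}$; the $\minA f$ case consumes the incremental search rule step by step.

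For the backward direction I would induct on the step index $c$ (again mutually, generalising over the skeleton, the offset/accumulator and the input list), case-analysing on the head constructor of $f$. Since $f:\recalg k$ is genuinely typed, $\erase f$ is a \emph{valid} skeleton whose head matches that of $f$, so exactly the defining equation of $\semb{\cdot}{}{}$ attached to that head can fire; reading it off yields subgoals about $\semb{\cdot}{}{}$ at a strictly smaller step index, to which the induction hypothesis applies, and the big-step derivation is reassembled from the results. One must track the step indices carefully, but the cost annotations of Figure~\ref{ra_ca_sem} were chosen precisely to mirror the decrements of the evaluator — the split between $1+c$ and $2+c$ in the $\comp f{\vec g}$ rule, for instance, reflects first spending one step to descend the $\cstfont{cons}$-list and then evaluating $f$ — so the two recursions line up; at worst a routine monotonicity-in-the-step-index lemma for the evaluator absorbs any residual slack.

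The main obstacle I anticipate is the minimisation case together with the off-by-one cost accounting in the $\comp f{\vec g}$ and $\recA f g$ rules. For minimisation one must arrange the incremental, offset-carrying big-step judgement so that each of its steps corresponds to exactly one iteration of the accumulator in $\semb{\minA f}{1+c}{m}~l$, and then prove it equivalent to the flat, $\fin x$-indexed rule of Figure~\ref{ra_ca_sem}; both require careful side inductions and a correct account of the step budgets. The nested-vector bookkeeping is conceptually the entire reason $\recalg{}'$ and $\erase/\erase'$ were introduced, but once the mutual statement is set up it is routine; the genuinely delicate part is the arithmetic of costs and search offsets.
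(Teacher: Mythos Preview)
The paper does not give a proof of this lemma; it is stated and immediately used in the reduction $\M{\text{$\mu$-rec}}\preceq\M{WCBV}$. Your plan is sound and matches what the surrounding text implies: the footnote attached to Figure~\ref{ra_ca_sem} explicitly says that the Coq big-step predicate carries an extra auxiliary argument ``to ease the correctness proof for the step-indexed interpreter,'' which is exactly the offset/accumulator you reinstate for the $\minA$ case; and the cost annotations are, as you observe, engineered so that the per-component budgets $c-p$ in the $\comp{}{}$ and $\minA$ rules line up with the successive decrements incurred by walking the $\cstfont{cons}/\cstfont{nil}$ list and the minimisation loop, so no monotonicity slack is actually needed.
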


Note that on the right hand side of this equivalence, the vector $\vec v$ is just mapped to its underlying list $\cstfont{v2l}~{\vec v}$.

\begin{theorem}[][MUREC_WCBV]
  $\M{$\mu$-rec} \preceq \M{WCBV}$.
\end{theorem}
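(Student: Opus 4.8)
The plan is to invoke \Cref{thm:Lrec}: since it characterises $P\reducesto\M{WCBV}$ by $\M L$-recognisability of $P$, it suffices to package the step-indexed interpreter $\semb{\cdot}{}{}{}$ on syntactic skeletons as an $\M L$-recognisability witness. First I would note that the reduction we build is an ordinary Coq function, hence computable in the synthetic sense without further ado, so the only genuine computability obligation concerns the $\M L$-term that performs the bounded search. Concretely, I map an instance $(k,f,\vec v)$ of $\M{$\mu$-rec}$ to the pair $(\erase\,f,\cstfont{v2l}\,\vec v):\recalg{}'\times\List\nat$, which lives in an ordinary (non-dependent, non-nested) inductive type and is therefore encodable in $\M L$. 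Chaining \Cref{coq:ra_bs_c_correct} (to pass from the relational semantics to the cost-aware big-step semantics) with \Cref{coq:erase_correct} (to pass from the latter to the interpreter on the skeleton, the vector being turned into its underlying list $\cstfont{v2l}\,\vec v$ and $f$ into $\erase\,f$) yields
\[ \M{$\mu$-rec}(k,f,\vec v)~\toot~\exists x,\,\semra f~\vec v~x~\toot~\exists c\,x,\,\semb{\erase\,f}{c}{0}(\cstfont{v2l}\,\vec v)=\some{\inl x}. \]
Hence $\M{$\mu$-rec}$ many-one reduces to the predicate $Q$ on $\recalg{}'\times\List\nat$ defined by $Q(s,l)\defeq\exists c\,x,\,\semb{s}{c}{0}\,l=\some{\inl x}$.

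It remains to show that $Q$ is $\M L$-recognisable, and then to conclude by \Cref{thm:Lrec} together with composition of reductions. Take $g:(\recalg{}'\times\List\nat)\to\nat\to\bool$ sending $(s,l)$ and a step index $c$ to $\mathsf{true}$ whenever $\semb{s}{c}{0}\,l$ has the form $\some{\inl x}$, and to $\mathsf{false}$ otherwise. Then $Q(s,l)\toot\exists c,\,g(s,l)\,c=\mathsf{true}$ follows by reordering two existential quantifiers. Finally, $g$ is $\M L$-computable: it is built from a single pattern match on $\option(\nat+\List\nat)$ and from $\semb{\cdot}{}{}{}$, and $\semb{\cdot}{}{}{}$ is a non-mutual, non-nested, simply-typed Coq function --- it is recursive on its explicit step-index argument and operates on the skeleton type $\recalg{}'$ rather than on $\recalg k$ --- so it, and hence $g$, lies in the fragment handled by the certifying extraction framework of Forster and Kunze~\cite{forster_et_al:LIPIcs:2019:11072}, which automatically produces the closed $\M L$-term and a proof of its correctness. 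Thus $Q$ is $\M L$-recognisable, so $Q\reducesto\M{WCBV}$, and composing with the reduction of the previous paragraph gives $\M{$\mu$-rec}\reducesto\M{WCBV}$.

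The conceptually hard point --- that the dependent and nested type $\recalg k$ is not amenable to the automatic extraction --- has already been dealt with before the statement, by interpreting on the skeleton $\recalg{}'$ and bridging the gap with \Cref{coq:erase_correct}; what is left here is the routine plumbing of transporting the termination predicate along $\erase$ and invoking the extraction mechanism. The only delicate bit is to make sure the interpreter really fits the simply-typed, non-mutual, non-nested fragment accepted by the framework, which is exactly why $\semb{\cdot}{}{}{}$ folds the list case into the same type $\recalg{}'$ and threads its search counter $m$ by hand instead of recursing structurally on $\recalg k$.
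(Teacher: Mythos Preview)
Your proof is correct and follows essentially the same route as the paper: invoke \Cref{thm:Lrec}, chain \Cref{coq:ra_bs_c_correct} and \Cref{coq:erase_correct} to reach the step-indexed interpreter on skeletons, package the result as a boolean test, and discharge $\M L$-computability via the extraction framework of~\cite{forster_et_al:LIPIcs:2019:11072}. The only cosmetic difference is that you factor explicitly through an intermediate predicate $Q$ on the simply-typed domain $\recalg{}'\times\List\nat$ and then compose reductions, whereas the paper folds the erasure into a single recogniser $F$ defined directly on the original instances $(k,f,\vec v)$; the underlying argument is the same.
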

\begin{proof}
  We use \Cref{thm:Lrec} and have to provide a function $F$ s.t.\ $\M{$\mu$-rec} (k : \nat, f : \recalg k, \vec v : \nat^k) \leftrightarrow \exists c,\,F\,c\,(k,f,\vec v) = \mathsf{true}$.
  
  We define $F\,(k : \nat, f : \recalg k, \vec v : \nat^k)\,c \defeq \mathsf{true}$ if $\semb{f}{c}{0} (\cstfont{v2l}~{\vec v}) = \some {\inl n}$ for some $n$ and $F\, (k,f,v)\,c \defeq \mathsf{false}$ otherwise.
  
  Now $F$ is part of the Coq fragment supported by the extraction framework in~\cite{forster_et_al:LIPIcs:2019:11072},
  since the step-indexed evaluator $\sem{\cdot}{}{}$ does not use any dependent, nested, mutual, or non-simple types.
  Thus $F$ is $\M L$-computable.

  Furthermore we have as wanted that
  \begin{align*}
    \M{$\mu$-rec}(k,f,\vec v) ~~&\toot~~ \exists n,\,\semra{f}~{\vec v}~{n} \\
                              &\toot~~ \exists n\,c,\,\bigstep[c]{f}{\vec v}{n} \\
                              &\toot~~ \exists n\,c,\,\semb{f}{c}{0}\,(\cstfont{v2l}~{\vec v}) = \some{\inl n} \\
                              &\toot~~ \exists c,\,F\,(k, f, \vec v)\,c = \mathsf{true} \qedhere
  \end{align*}
\end{proof}

\section{Related Work}

\label{sec:rel_work}

Regarding formalisations of Hilbert's tenth problem, there are various unfinished and preliminary results in different proof assistants: Carneiro~\cite{carneiro_lean_2018} formalises Matiyasevich's theorem (Diophantineness of exponentiation) in Lean, 
but does not consider computational models or the DPRM theorem.
P{a}k formalises results regarding Pell's equation~\cite{pak2017matiyasevich} and proves that Diophantine sets are closed under union and intersection~\cite{pkak2018diophantine}, both as parts of the Mizar Mathematical Library. 
Stock et al.~\cite{stock_hilbert_2018,bayer_et_al:LIPIcs:2019:11088} report on an unfinished formalisation of the DPRM theorem in Isabelle based on~\cite{Matiyasevich2000}.
They cover some parts of the proof, but acknowledge for important missing
results like Lucas's or ``Kummer's theorem'' and a ``formalisation of a register machine\rlap.''\enspace Moreover, none of the cited reports considers the computability of the reductions involved or the verification of a universal machine in the chosen model of computation yet, one of them being a necessary proof goal for an actual undecidability result in the classical meta-theories of Isabelle/HOL and Mizar.

Regarding undecidability proofs in type theory, Forster, Heiter, and Smolka~\cite{FHS18} reduce the halting problem of Turing machines to \M{PCP}.
Forster and Larchey-Wendling~\cite{FLW19} reduce \M{PCP} to provability in linear logic via the halting problem of Minsky machines, which we build on.
Forster, Kirst and Smolka develop the notion of synthetic undecidability in Coq and prove the undecidability of various notions in first-order logic~\cite{0002KS19}.
Spies and Forster mechanise the undecidability proof of second-order unification by reduction from $\M{H10}$~\cite{spies2020undecidability} originally shown by Goldfarb~\cite{goldfarb1981undecidability}.
Forster, Kunze, and Wuttke reduce the halting problem of multi-tape Turing machines to single-tape Turing machines~\cite{forster2020verified}.
Dudenhefner and Rehof~\cite{dudenhefner2018simpler} mechanise a recently simplified undecidability proof for System F inhabitation.

Regarding formalisations of $\mu$-recursive functions, Larchey-Wendling~\cite{DLW17} shows that every total $\mu$-recursive function can directly be computed in Coq and Carneiro~\cite{carneiro:LIPIcs:2019:11067} mechanises standard computability theory based on $\mu$-recursive functions.
Xu, Zhang, and Urban~\cite{xu2013} mechanise $\mu$-recursive functions and Turing machines in Isabelle and prove their computational equivalence.
Their proof uses Abacus machines as intermediate layer, which are similar to our Minsky machines.

\section*{Acknowledgements}

We would like to thank Gert Smolka, Dominik Kirst, and Simon Spies for helpful discussion regarding the presentation.
The first author was partially supported by the \href{https://ticamore.logic.at/}{TICAMORE} project (\href{http://www.agence-nationale-recherche.fr/?Projet=ANR-16-CE91-0002}{ANR grant 16-CE91-0002}).

\bibliographystyle{alphaurl}
\bibliography{biblio}

\appendix

\section{Some Remarks about the Coq Code Contents}

\label{code_comments_appendix} 

\newcommand{\undecroot}[1]{https://github.com/uds-psl/coq-library-undecidability/#1/H10-LMCS-v1.1}
\newcommand{\undecref}[3]{{\color{ACMDarkBlue}\expandafter\href\expandafter{\undecroot{#1}/theories/#2}{#3}}}
\newcommand{\dirref}[1]{\undecref{tree}{#1}{\filefont{#1}}}
\newcommand{\fileref}[2]{\undecref{blob}{#1/#2}{\filefont{#2}}}

The file names below are hyper-linked to the corresponding files
in the following specific release of the Coq library of undecidability proofs:
$$\text{\color{ACMDarkBlue}\expandafter\url\expandafter{\undecroot{tree}}}$$
To help at understanding the \Hten\ code from a high-level point of view, 
we provide two additional Coq source code files:
\begin{itemize}
\item the file \fileref{H10}{summary.v} which gives specific pointers
      to the main problems contained in the reduction chain described in this paper;
\item the file \fileref{H10}{standalone.v} which gives a (combined)
      reduction from the binary \M{BPCP} to \Hten\ but where the statements of 
      those two problems have been rewritten to minimize dependencies.
\end{itemize}

Additionally, we here give a detailed overview of the structure of the code corresponding 
to the results presented in this paper, and which was contributed to our
\href{https://github.com/uds-psl/coq-library-undecidability}{Coq library of undecidability proofs}.
The following \emph{lines of code (loc)} measurements combine both definitions
and proof scripts but do not account for comments. Notice that there are more
files in the whole library than those needed to actually cover \Hten, but here, we 
only present the latter.
In total, we contribute 16k loc to our undecidability project, 4k being additions 
to its shared libraries as extensions of the Coq standard library.

Concerning the multi-purpose shared libraries in \dirref{Shared/Libs/DLW/Utils}:
\begin{itemize}
\item we implemented finitary sums/products (over monoids) up to the binomial
      theorem (Newton) over non-commutative rings in \fileref{Shared/Libs/DLW/Utils}{sums.v} and
      \fileref{Shared/Libs/DLW/Utils}{binomial.v} for a total of 550 loc;  
\item we implemented bitwise operations over $\nat$, both a lists of bits 
      in \fileref{Shared/Libs/DLW/Utils}{bool\_list.v} and Peano \cstfont{nat}
      in \fileref{Shared/Libs/DLW/Utils}{bool\_nat.v} for a total of 1700 loc;
\item we implemented many results about Euclidean division and Bézout's identity
      in \fileref{Shared/Libs/DLW/Utils}{gcd.v}, prime numbers
      and their unboundedness in  \fileref{Shared/Libs/DLW/Utils}{prime.v}, 
      and base $p$ representations in \fileref{Shared/Libs/DLW/Utils}{power\_decomp.v}
      for a total of 1200 loc;
\item we implemented miscellaneous libraries for the
      reification of 
      \fileref{Shared/Libs/DLW/Utils}{bounded\_quantification.v}
      (120 loc), the Pigeon Hole Principle in \fileref{Shared/Libs/DLW/Utils}{php.v}
      (350 loc) and iterations of binary relations in \fileref{Shared/Libs/DLW/Utils}{rel\_iter.v}
      (230 loc).
\end{itemize}
Concerning the libraries for Minsky machines and \FRACTRAN programs:
\begin{itemize}
\item by a slight update to the existing code~\cite{FLW19}, we proved in \fileref{MinskyMachines/MM}{mm\_comp.v}
      that \M{MM}-termination (on any state) is undecidable (10 loc). Both the
      pre-existing result (undecidability of \M{MM}-termination on the zero state) and the new result
      derive from the correctness of the compiler of binary stack machines into Minsky machines;
\item we implemented the removal of self loops in Minsky machines in
      \fileref{MinskyMachines/MM}{mm\_no\_self.v} (340 loc);
\item we construct two infinite sequences of primes $\pprime i$ and $\qprime i$
      in \fileref{FRACTRAN/Utils}{prime\_seq.v} (240 loc);
\item \FRACTRAN definitions and basic results occur in \fileref{FRACTRAN/FRACTRAN}{fractran\_utils.v}
      (310 loc) and the verified compiler from Minsky machines to \FRACTRAN
      occurs in  \fileref{FRACTRAN/Reductions}{MM\_FRACTRAN.v} (300 loc);
\end{itemize}
Concerning the libraries for proving Matiyasevich's theorems:
\begin{itemize}
\item we implemented a library for modular arithmetic ($\mathbb Z/p\mathbb Z$)
      in \fileref{H10/ArithLibs}{Zp.v} (920 loc);
\item we implemented a library for $2\times 2$-matrix computation including 
      exponentiation  and determinants 
      in \fileref{H10/ArithLibs}{matrix.v} (210 loc);
\item we implemented an elementary proof of Lucas's theorem in \fileref{H10/ArithLibs}{luca.v}
      (330 loc);
\item we implemented the ``classical proof'' of Lagrange's theorem in \fileref{H10/ArithLibs}{lagrange.v}
      (520 loc);
\item the solution $\alpha_b(n)$ of Pell's equation and its (modular) arithmetic properties
      up to a proof of its Diophantineness are in \fileref{H10/Matija}{alpha.v} (1150 loc);
\item from $\alpha_b(n)$, we implement the meta-level Diophantine encoding of the exponential 
      in \fileref{H10/Matija}{expo\_diophantine.v} (150 loc);
\item we implement the sparse ciphers used in the Diophantine elimination of bounded universal quantification 
      in \fileref{H10/Matija}{cipher.v} (1450 loc).
\end{itemize}
Concerning the object-level Diophantine libraries:
\begin{itemize}
\item the definition of Diophantine logic and basic results is in 
      \fileref{H10/Dio}{dio\_logic.v} (540 loc);
\item the definition of elementary Diophantine constraints and the reduction
      from Diophantine logic is in \fileref{H10/Dio}{dio\_elem.v} (440 loc);
\item the definition of single Diophantine equations and the reduction from
      elementary Diophantine constraints is in \fileref{H10/Dio}{dio\_single.v} (350 loc);
\item we implement the object-level Diophantine encoding of the exponential relation 
      in \fileref{H10/Dio}{dio\_expo.v} (60 loc); but all the work is done
      in the previously mentioned libraries;
\item the object-level Diophantine encoding of bounded universal quantification
      spans over \fileref{H10/Dio}{dio\_binary.v},  \fileref{H10/Dio}{dio\_cipher.v}
      and \fileref{H10/Dio}{dio\_bounded.v} (460 loc);
\item we derive the object-level Diophantine encoding of the reflexive-transitive closure 
      in \fileref{H10/Dio}{dio\_rt\_closure.v} (40 loc);
\item we implement the object-level Diophantine encoding of the \FRACTRAN termination predicate
      in \fileref{H10/Fractran}{fractran\_dio.v} (80 loc).
\end{itemize}
Concerning $\mu$-recursive algorithms, reducing \Hten\ and reduced to Minsky machines:  
\begin{itemize}
\item  building on the pre-existing developments corresponding to~\cite{DLW17} (1000 loc);
\item the $\mu$-recursive algorithm that searches for a solution to a
      single Diophantine equation in \fileref{MuRec}{ra\_utils.v}, 
      \fileref{MuRec}{ra\_dio\_poly.v}, \fileref{MuRec}{recomp.v} and
      \fileref{MuRec}{ra\_recomp.v} (1370 loc);
\item extensions of the Minsky machines library for $\nat$-indexed registers in \fileref{MinskyMachines/MMenv}{env.v},
      \fileref{MinskyMachines/MMenv}{mme\_defs.v} and \fileref{MinskyMachines/MMenv}{mme\_utils.v} (600 loc);
\item the certified compiler from $\mu$-recursive algorithms to Minsky machines 
      in \fileref{MuRec}{ra\_mm.v} and \fileref{MuRec}{ra\_mm\_env.v} (1250 loc).
\end{itemize}
Concerning the reduction from $\mu$-recursive algorithms to $\M{L}$:
\begin{itemize}
\item the step-indexed evaluator and the reduction are in \fileref{L/Reductions}{MuRec.v} (450 loc)
\item the framework used to extract the step-indexed evaluator is in the directory \texttt{L} and consists of about 2500 lines of code.
  More details can be found in~\cite{forster_et_al:LIPIcs:2019:11072}.
\end{itemize}
To finish, the main undecidability results and the DPRM:
\begin{itemize}
\item the undecidability of Minsky machines is in \fileref{MinskyMachines}{MM\_undec.v} (20 loc);
\item the reduction from \M{MM} to \FRACTRAN is in \fileref{FRACTRAN/Reductions}{MM\_FRACTRAN.v} (90 loc);
\item the Diophantine encoding of \FRACTRAN termination is in 
      \fileref{H10}{FRACTRAN\_DIO.v} (70 loc);
\item the whole reduction chain leading to the undecidability 
      of \Hten\ is in \fileref{H10}{H10\_undec.v} (60 loc);
\item the reduction from \Hten to \Htenz is in \fileref{H10}{H10Z.v} (200 loc)
\item and the DPRM theorem is in \fileref{H10}{DPRM.v} (170 loc).
\end{itemize}

\section{Lucas's theorem}

\label{append:lucas} 

\newcommand{\Binomial}[2]{\left(\begin{array}{@{\,}c@{\,}}{#1}\\{#2}\end{array}\right)}

Lucas's theorem allows for the computation of the binomial coefficient $\binomial m n$ modulo 
a prime number $p$ using the base $p$ expansions of $m$ and $n$.
There are various proofs of this theorem but most of them involve high-level concepts 
like \emph{generating functions} or \emph{group action} and we choose instead to implement
a low-level \href{https://math.stackexchange.com/questions/1463758/proof-of-lucas-theorem-without-the-polynomial-hint}{combinatorial proof} 
of the theorem. Such a proof provides specific combinatorial insights into Lucas's results.

While a high-level proof could rightfully be considered as more beautiful,
it would also likely require a lot of additional library code. 
We could of course rely on an external library such as MathComp~\cite{mathcomp} but this would generate
a new external dependency, additional to the already involved Equations~\cite{equations} 
and MetaCoq~\cite{metacoq} libraries, increasing the likelihood of synchronization issues between 
all these developments. An additional benefit of a low-level proof is that it could reasonably
be imported in alternate developments (e.g.~\cite{bayer_et_al:LIPIcs:2019:11088}) in other proof 
assistants (e.g.\ Isabelle/HOL) without the need to convert a substantial helper library.

\smallskip

Before we enter this low-level proof of Lucas's theorem, 
we must give a light-weight, working and formal definition of binomial coefficients.
For this, we use Pascal's identity as a ground for a fixpoint definition:
$$
\binomial m 0 \defeq 1
\qquad\text{and}\qquad
\binomial 0{1+n} \defeq 0
\qquad\text{and}\qquad
\binomial {1+m}{1+n} \defeq \binomial mn+\binomial m{1+n}
$$
where we use the compact notation $\binomial m n$ for a more compact
typesetting of the upcoming equations.
Starting from Pascal's definition,
one can derive the following identities
$$n!\,(m-n)!\,\binomial m n = m! \quad \text{for $n\leq m$}
\qquad\text{and}\qquad
\binomial m n = 0 \quad \text{for $m<n$}$$
and we call the leftmost one as \emph{the binomial identity}\rlap.\footnote{We do not
enter the
details  of these inductive proofs which are standard exercises in basic arithmetic.}

\setCoqFilename{H10.ArithLibs.luca}
\begin{lemma}[][lucas_lemma]
\label{lem:luca}
Let\/ $p$ be a prime number and let us consider\/ $M=mp+m_0$ and\/ $N =np+n_0$ with\/ $m_0,n_0 < p$.
Then the identity\/ $\modeq{\binomial MN}{\binomial{m}{n}\,\binomial{m_0}{n_0}}p$ holds.
\end{lemma}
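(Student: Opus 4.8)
The plan is to reduce the lemma to two facts already in hand: the \emph{binomial identity} $n!\,(m-n)!\,\binomial{m}{n}=m!$ for $n\le m$ (stated above), and Pascal's defining recursion for $\binomial{m}{n}$, together with the primality of $p$. First I would isolate the crucial prime fact: if $0<j<p$ then $\modeq{\binomial{p}{j}}{0}{p}$. Indeed the binomial identity gives $j!\,(p-j)!\,\binomial{p}{j}=p!$; since $p$ is prime and $0<j<p$, $p\ndivides j!$ and $p\ndivides (p-j)!$, whereas $p\divides p!$, so $p\divides\binomial{p}{j}$. Together with $\binomial{p}{0}=\binomial{p}{p}=1$ and $\binomial{p}{j}=0$ for $j>p$, this determines $\binomial{p}{j}$ modulo $p$ for every $j$.

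Next I would prove, as a standalone reusable lemma, Vandermonde's identity $\binomial{a+b}{N}=\sum_{i=0}^{N}\binomial{a}{i}\binomial{b}{N-i}$ (with the convention $\binomial{a}{i}=0$ for $i>a$), by induction on $a$ using Pascal's recursion — the only subtlety is handling the boundary index when re-indexing the sum. Specialising $a\defeq p$ and reducing modulo $p$ with the previous paragraph annihilates every term except $i=0$ and $i=p$, yielding the \textbf{row-split congruence}
\[ \modeq{\binomial{p+b}{N}}{\binomial{b}{N}+[\,p\le N\,]\cdot\binomial{b}{N-p}}{p}, \]
that is, $\modeq{\binomial{p+b}{N}}{\binomial{b}{N}}{p}$ when $N<p$, and $\modeq{\binomial{p+b}{N}}{\binomial{b}{N}+\binomial{b}{N-p}}{p}$ when $N\ge p$.

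Finally I would prove the lemma by induction on $m$, for all $n$ and all $m_0,n_0<p$ simultaneously. For $m=0$ we have $M=m_0$: if $n=0$ then $N=n_0$ and both sides equal $\binomial{m_0}{n_0}$ already over $\nat$; if $n>0$ then $N=np+n_0\ge p>m_0$, so $\binomial{M}{N}=0$ and also $\binomial{0}{n}\binomial{m_0}{n_0}=0$. For the step $m\mapsto m+1$, write $M=p+(mp+m_0)$ and apply the row-split congruence with $b\defeq mp+m_0$. If $n=0$ (so $N=n_0<p$), it gives $\binomial{(m+1)p+m_0}{N}\equiv\binomial{mp+m_0}{N}$, and the induction hypothesis rewrites the right side as $\binomial{m}{0}\binomial{m_0}{n_0}=\binomial{m+1}{0}\binomial{m_0}{n_0}$. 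If $n\ge 1$, it gives $\binomial{(m+1)p+m_0}{N}\equiv\binomial{mp+m_0}{N}+\binomial{mp+m_0}{N-p}$; since $N=np+n_0$ and $N-p=(n-1)p+n_0$ are both of the required shape, the induction hypothesis (applied at $n$ and at $n-1$) turns this into $\bigl(\binomial{m}{n}+\binomial{m}{n-1}\bigr)\binomial{m_0}{n_0}$, which equals $\binomial{m+1}{n}\binomial{m_0}{n_0}$ by Pascal's identity. This closes the induction.

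The main obstacle is entirely bookkeeping rather than mathematics: since everything lives over $\nat$, I must systematically split on $N<p$ versus $N\ge p$ so that $N-p$ is honest subtraction, and carry the vanishing convention $\binomial{a}{b}=0$ for $a<b$ (which is what lets the degenerate cases of the induction go through without extra hypotheses like $n\le m$); the Vandermonde step, though routine, also needs its summation bounds and the $\binomial{a}{-1}=0$ boundary handled cleanly. None of this is deep, but in a mechanisation it is where essentially all the effort goes.
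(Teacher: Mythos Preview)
Your argument is correct, but it follows a genuinely different route from the paper's own proof. The paper does not use Vandermonde's identity or induction on $m$; instead it works directly with the factorial form of the binomial identity. It introduces auxiliary products $\mphi{i}{r}\defeq(ip+1)\cdots(ip+r)$ and $\mpsi{i}\defeq\mphi{0}{p-1}\cdots\mphi{i-1}{p-1}$, proves the exact factorisation $(ip+r)!=i!\,p^i\,\mphi{i}{r}\,\mpsi{i}$ in $\nat$, observes $\modeq{\mphi{i}{r}}{r!}{p}$ and $\modeq{\mpsi{i}}{(p-1)!^i}{p}$, and then expands $N!\,(M-N)!\,\binomial{M}{N}=M!$ via this factorisation. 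After cancelling matching powers of $p$ and of $(p-1)!$ (all invertible in $\Zp p$), what remains collapses to the desired congruence. This requires a careful two-way case split on whether $n\le m\land n_0\le m_0$ or $n<m\land m_0<n_0$ (the remaining cases being trivial in $\nat$), because the base-$p$ shape of $M-N$ differs in those two situations.

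Your approach trades that somewhat intricate case analysis and the $\phi/\psi$ bookkeeping for a single reusable lemma (Vandermonde) plus a clean induction on $m$; the only residual case split is the innocuous $N<p$ versus $N\ge p$. The paper's approach, by contrast, needs no summation machinery at all and never inducts on $m$: it is a one-shot algebraic manipulation, at the price of heavier invertibility reasoning in $\Zp p$ and the delicate handling of $M-N$ when a ``borrow'' occurs between the low and high digits. Both are elementary; yours is closer to the textbook combinatorial proof, the paper's is closer to a purely arithmetic one tailored to minimise external library dependencies.
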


\newcommand{\mphi}[2]{\phi^{#1}_{#2}}
\newcommand{\mpsi}[1]{\psi_{#1}}
\newcommand{\mphipsi}[2]{{#1}!\,p^{#1}\,\mphi {#1}{#2}\,\mpsi {#1}}
\newcommand{\mphipsip}[2]{{(#1)}!\,p^{#1}\,\mphi {#1}{#2}\,\mpsi {#1}}
\newcommand{\msimple}[2]{\mphi {#1}{#2}\,\mpsi {#1}}

\begin{proof}
Let us first notice that the identity  is trivial (in $\nat$ already) when $n > m$ or $n = m\lconj n_0 > m_0$
because both sides evaluate to $0$.
So below, we only consider the cases  
$n \leq m\lconj n_0 \leq m_0$
or
$n < m \lconj m_0 < n_0$ which
cover the remaining part of the domain. 

Before we split those two cases, 
let us define  $\phi:\nat\to\nat\to\nat$ and $\psi:\nat\to\nat$ by
$$\mphi i r\defeq (ip+1)\cdots(ip+r)
\qquad\text{and}\qquad
\mpsi i\defeq\mphi 0{p-1}\cdots\mphi {i-1}{p-1}.$$
Notice that $\mphi i r$ generalize the factorial function
as $r! = \mphi 0 r$ but we will always use $\mphi i r$ with $r < p$.
Projected on $\Zp p$, both $\phi$ and $\psi$ simplify to factorial
as it is easy to show\footnote{Notice that Wilson's theorem
establishes $\modeq{(p-1)!}{-1}p$ but this equation is not needed in this proof.}\enspace
\begin{equation}
\label{eq:luca:2}
\modeq{\mphi i r}{r!}p
\qquad\text{and}\qquad
\modeq{\mpsi i}{(p-1)!^i}p.
\end{equation}
Moreover, $\mphi ir$ is invertible in $\Zp p$ for any $r<p$, and
in particular $(p-1)!=\mphi 0{p-1}$ and $\mpsi i$ are
also invertible in $\Zp p$. Using $\phi$, $\psi$ and
the semiring structure of $\nat$, 
we establish the identity
\begin{equation}
\label{eq:luca:1}
(ip+r)!=\mphi 0{p-1}\cdot(1p+0)\,\cdots\,\mphi {i-1}{p-1}\cdot(ip+0)\cdot\mphi ir
=\mphipsi ir 
\end{equation}

\medskip

In the case $n \leq m$ and $n_0 \leq m_0$, we have $M-N = (m-n)p+(m_0-n_0)$ 
and we rewrite the binomial equation $N!\,(M-N)!\,\binomial M N = M!$ in $\nat$ as
$$  (np+n_0)!\,\bigl((m-n)p+(m_0-n_0)\bigr)!\,\binomial M N = (mp+m_0)!$$ 
and then rewrite again with Eq.~\eqref{eq:luca:1} and the binomial equation
$m! = n!\,(m-n)!\,\binomial m n$ as
$$      
   \mphipsi n{n_0}\cdot\mphipsip {m-n}{m_0-n_0}\cdot\binomial M N 
=  n!\,(m-n)!\,\binomial m n\cdot p^m\,\mphi m{m_0}\,\mpsi m$$
In $\nat$, we simplify by $p^n p^{m-n}=p^m$ and 
$n!(m-n)!$ and get
$$      \msimple n{n_0}\,\msimple {m-n}{m_0-n_0}\,\binomial M N 
      = \binomial m n\,\msimple m{m_0}$$
Now switching to $\Zp p$ using Eqs.~\eqref{eq:luca:2},
we derive:
$$ \modeq{n_0!\,(p-1)!^n\,(m_0-n_0)!\,(p-1)!^{m-n}\,\binomial M N} 
         {\binomial m n\,m_0!\,(p-1)!^m}p$$
Because $(p-1)!$ is invertible in $\Zp p$, we can simplify by 
$(p-1)!^n(p-1)!^{m-n}=(p-1)!^m$. Then we
rewrite using the binomial equation 
$m_0! = n_0!\,(m_0-n_0)!\,\binomial {m_0}{n_0}$
and deduce
$$ \modeq{n_0!\,(m_0-n_0)!\,\binomial M N} 
        {\binomial m n\,m_0!\equiv \binomial m n\,n_0!\,(m_0-n_0)!\,\binomial {m_0}{n_0}}p$$
Finally, as both $n_0<p$ and $m_0-n_0<p$ hold, then $n_0!\,(m_0-n_0)!$ is invertible
in $\Zp p$ and, simplifying, we get 
$\modeq{\binomial M N}{\binomial m n\binomial {m_0}{n_0}}p$ as required.

\medskip

Then we consider the alternative case where $n < m$ and $m_0 < n_0 < r$. In this case
we have $M-N = \bigl(m-(n+1)\bigr)p+\bigl(p-(n_0-m_0)\bigr)$ and again we develop
the binomial equation $N!\,(M-N)!\,\binomial M N = M!$ in $\nat$ as
$$  (np+n_0)!\,\bigl((m-(n+1))p+(p-(n_0-m_0))\bigr)!\,\binomial M N = (mp+m_0)!$$
and thus, using Eq.~\eqref{eq:luca:1} and the binomial equation
$m! = n!\,(m-n)!\,\binomial m n$, rewriting it into 
$$      
   \mphipsi n{n_0}\mphipsip {m-(n+1)}{p-(n_0-m_0)}\binomial M N 
=  n!\,(m-n)!\,\binomial m n\,p^m\,\mphi m{m_0}\,\mpsi m$$
that we then simplify by $n!\,(m-(n+1))!$ and $p^n\,p^{m-(n+1)} = p^{m-1}$ to get
$$      \msimple n{n_0}\,\msimple {m-(n+1)}{p-(n_0-m_0)}\,\binomial M N 
      = (m-n)\,\binomial m n\,p\,\msimple m{m_0}$$
Notice that $p$ appears (at least once) on the right-hand side, thus 
switching to $\Zp p$, we get:
$$ \modeq{n_0!\,(p-1)!^n\,(p-(n_0-m_0))!\,(p-1)!^{m-(n+1)}\,\binomial M N} 
         {0}p$$
Now, as $n_0<p$ and $p-(n_0-m_0)<p$,
the factorials $n_0!$ and $(p-(n_0-m_0))!$ are both invertible in $\Zp p$.
This is also the case of any power of $(p-1)!$. We deduce that $\modeq{\binomial M N}{0}p$
which leads us to conclude $\modeq{\binomial M N}{\binomial m n\binomial {m_0}{n_0}}p$
as $\binomial {m_0}{n_0}=0$ trivially follows from $m_0<n_0$.
\end{proof}

\begin{theorem}[Lucas~\cite{lucas1878}][lucas_theorem]
\label{thm:luca}
Whenever\/ $u_0,\ldots u_n,v_1,\ldots,v_n < p$, the following identity holds:
$$\modeq{\Binomial {u_np^n+\cdots+u_0}{v_np^n+\cdots+v_0}}{\Binomial{u_n}{v_n}\cdots\Binomial{u_0}{v_0}}p.$$
\end{theorem}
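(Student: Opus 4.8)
The plan is to deduce the theorem from Lemma~\ref{lem:luca} by a straightforward induction on $n$, the number of (non-leading) base-$p$ digits.

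For the base case $n = 0$ there is nothing to prove, since the claimed congruence is literally $\modeq{\Binomial{u_0}{v_0}}{\Binomial{u_0}{v_0}}p$. For the inductive step I would set $M = u_n p^n + \cdots + u_0$ and $N = v_n p^n + \cdots + v_0$ and peel off the lowest digit, writing $M = m\,p + u_0$ with $m = u_n p^{n-1} + \cdots + u_1$ and $N = n'\,p + v_0$ with $n' = v_n p^{n-1} + \cdots + v_1$. Since $u_0 < p$ and $v_0 < p$, Lemma~\ref{lem:luca} applies with these $m, n', u_0, v_0$ and gives $\modeq{\Binomial M N}{\Binomial{m}{n'}\,\Binomial{u_0}{v_0}}p$. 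The numbers $m$ and $n'$ have base-$p$ digit sequences $u_1, \ldots, u_n$ and $v_1, \ldots, v_n$, each entry $< p$, so the induction hypothesis yields $\modeq{\Binomial{m}{n'}}{\Binomial{u_n}{v_n}\cdots\Binomial{u_1}{v_1}}p$. Multiplying this congruence by $\Binomial{u_0}{v_0}$ (congruence modulo $p$ being compatible with products) and substituting into the previous one gives $\modeq{\Binomial M N}{\Binomial{u_n}{v_n}\cdots\Binomial{u_0}{v_0}}p$, which closes the induction.

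There is no genuine obstacle here: all the arithmetic content has already been discharged in Lemma~\ref{lem:luca}, and what remains is only the bookkeeping of the base-$p$ expansions along the recursion. In the mechanisation this is just a clean structural recursion over the list of digits, relying on the identity $\sum_{i\le n} u_i p^i = u_0 + p\bigl(\sum_{i<n} u_{i+1} p^i\bigr)$ and the compatibility of congruence with multiplication; the only mild care needed is to keep the digit conditions $u_i, v_i < p$ propagated to the shifted sequences fed to the induction hypothesis.
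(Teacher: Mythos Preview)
Your proposal is correct and matches the paper's proof, which is the single line ``By induction on the number $n$ of coefficients using Lemma~\ref{lem:luca}.'' Your write-up simply unpacks this induction explicitly, peeling off the least significant digit via the decomposition $M = m p + u_0$, $N = n' p + v_0$ so that Lemma~\ref{lem:luca} applies directly.
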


\begin{proof}
By induction on the number $n$ of coefficients using Lemma~\ref{lem:luca}.
\end{proof}

\setCoqFilename{Shared.Libs.DLW.Utils.bool_nat}
\begin{definition}[][binary_le]
Let\/ $a=a_n2^n+\cdots+a_0$ and\/ $b=b_n2^n+\cdots+b_0$ representations
in base\/ $2$. We denote\/ $a\bwleq b$ if\/ $a_i\leq b_i$ holds for
any\/ $i\leq n$. 
\end{definition}

Notice that the definition of $a\bwleq b$ is irrelevant to which
base $2$ representations of $a$ and $b$ are picked up, i.e.\ it is
not influenced by trailing zeros that might appear in front of (the
representations of) $a$ or $b$.

\setCoqFilename{H10.Dio.dio_binary}
\begin{corollary}[][binary_le_binomial]
For any $a,b:\nat$ we have\/ $a\bwleq b\toot \modeq{\binomial b a}12$. 
\end{corollary}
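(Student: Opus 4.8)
The plan is to derive this directly from Lucas's theorem (\Cref{thm:luca}) specialised to $p = 2$. First I would pick $n$ large enough that $a, b < 2^{n+1}$ and write the base-$2$ expansions $a = a_n2^n+\cdots+a_0$ and $b = b_n2^n+\cdots+b_0$ with all digits $a_i, b_i < 2$. By the remark following the definition of $\bwleq$ (that the relation does not depend on trailing/leading zeros), for this common-length pair of expansions we have $a \bwleq b \toot \forall i\leq n,\, a_i\leq b_i$.

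Next I would apply \Cref{thm:luca} at $p = 2$, obtaining $\modeq{\binomial b a}{\binomial{b_n}{a_n}\cdots\binomial{b_0}{a_0}}2$. The elementary point is the behaviour of $\binomial x y$ on single bits $x, y\in\{0,1\}$: from Pascal's recurrence one reads off $\binomial 00 = \binomial 10 = \binomial 11 = 1$ and $\binomial 01 = 0$, i.e.\ $\binomial x y = 1$ when $y\leq x$ and $\binomial x y = 0$ otherwise. Hence the product $\binomial{b_n}{a_n}\cdots\binomial{b_0}{a_0}$ equals $1$ if $a_i\leq b_i$ for every $i\leq n$, and equals $0$ as soon as some bit of $a$ exceeds the corresponding bit of $b$ (which, note, also covers the numerical case $a > b$ automatically, via the most significant differing bit).

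Putting the pieces together: $\binomial b a$ is odd, i.e.\ $\modeq{\binomial b a}12$, iff its residue modulo $2$ is $1$; by the Lucas congruence this residue equals the product of the bitwise binomials, which is itself $0$ or $1$; so it holds iff that product is $1$, iff $a_i\leq b_i$ for all $i\leq n$, iff $a\bwleq b$. This gives the claimed equivalence.

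The main obstacle I expect is not the arithmetic but the representation bookkeeping: lining up $a$ and $b$ at a common bit-length so that \Cref{thm:luca} (whose statement fixes the number of coefficients) applies, and discharging the invariance of $\bwleq$ under padding so that the digit-wise comparison appearing in its definition matches the one produced by Lucas. Once the expansions are aligned, what remains is the four-case check on bit pairs sketched above.
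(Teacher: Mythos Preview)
Your proposal is correct and matches the paper's intent: the corollary is stated immediately after Lucas's theorem with no explicit proof, precisely because it is the specialisation of \Cref{thm:luca} to $p=2$ together with the four-case bit check you describe. The bookkeeping you anticipate (aligning bit-lengths, padding-invariance of $\bwleq$) is exactly the residual work, and there is nothing more to it.
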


\section{Avoiding Overflows\texorpdfstring{ in the Proof of~\Cref{coq:dio_rel_fall_lt}}{}}

\label{overflow_appendix}

The section explains why we slightly modified the original proof of the
elimination of bounded universal quantification~\cite{Matiyasevich1997}
to avoid overflows when dealing with parallel multiplications, in search for
a Diophantine encoding of the simultaneous equations
$a_1=b_1c_1,\ldots, a_n=b_nc_n$, which can also be viewed
as the identity between the vectors $\vec a = [a_1;\ldots;a_n]$
and the scalar/dot product $\vec b\cdot\vec c = [b_1c_1;\ldots;b_nc_n]$.

Following Matiyasevich~\cite{Matiyasevich1997}, the component of those vectors
are encoded sparsely within the digits in base $r$ of the corresponding so called \emph{ciphers} (see below for examples). 
Using the natural product of ciphers, he can somehow recover the identity between
the components of $\vec a$ and those of $\vec b\cdot\vec c$. The basis $r$ for
the encoding is chosen large enough w.r.t.\ the component of $\vec a$, $\vec b$ and $\vec c$ 
to avoid digit overflow during the natural product of ciphers. However, this product 
generates artefacts on in-between digits that need 
to be masked out and, with Matiyasevich's choice for $r$ as $r\cdef 2^{2q}$, 
some overflow could occur within these artefacts, a subtlety which we
speculate he might have missed. This overflow pointlessly complicates the 
correctness proof of masking. In our Coq code, we propose to 
avoid any digit overflow by increasing $r$ to $r\cdef 2^{4q}$, but, as we realized
later on, any power of $2$ greater than $2^{2q+1}$ would work as well.

\newcommand{\Ir}{D_q}

Let us switch to the technical details explaining how the overflow occurs
and how it can be avoided, with little impact on the rest of the argument.
For the moment, we describe what Matiyasevich
does in~\cite{Matiyasevich1997} so we stick with his definition of $r$ as $r\cdef 2^{2q}$
and consider the encoding of vectors of $\Ir^n$
as ciphers, written in base $r$, where $q\geq 1$ and 
$\Ir\cdef \{0,\ldots,2^q-1\}$ is the set of allowed digits for vectors.
He considers five vectors and their respective ciphers:
\[\begin{array}{r@{\,=\,}lcr@{\,=\,}l}
\multicolumn{2}{c}{\text{vector/length}} &\qquad & \multicolumn{2}{c}{\text{cipher}}\\[1ex]
\vec u  & [1;\ldots;1]\in \Ir^n        & & u  & 1.r^{2^1}+\cdots +1.r^{2^n} \\
{\vec u}\,' & [0;1;\ldots;1]\in \Ir^{n+1}  & & u' & 0.r^{2^1}+1.r^{2^2}+\cdots +1.r^{2^{n+1}} \\
\vec a  & [a_1;\ldots;a_n]\in \Ir^n    & & a  & a_1r^{2^1}+\cdots +a_nr^{2^n}\\
\vec b  & [b_1;\ldots;b_n]\in \Ir^n    & & b  & b_1r^{2^1}+\cdots +b_nr^{2^n}\\
\vec c  & [c_1;\ldots;c_n]\in \Ir^n    & & c  & c_1r^{2^1}+\cdots +c_nr^{2^n}.
\end{array}\]
Each summand is an actual digit of the cipher in base $r$. Notice 
that $q$ is to be chosen large enough so that the digits in $\Ir$ cover the 
components of the vectors $\vec a$, $\vec b$ and $\vec c$.  
Also, the ciphers have a sparse representation in base $r$, i.e.\ only the
digits which correspond to powers $r^{2^i}$ are possibly non-zero.
Finally, most of the digits of base $r$ are unused, i.e.\ only $2^q$ of them out of $r=2^{2q}$ many,
to avoid overflows when adding or multiplying digits. But remark that
which overflows are avoided (or not) is the purpose of the whole discussion here.

The problem solved in~\cite{Matiyasevich1997} is to find, depending on $q/r$ and $n$, 
Diophantine representations for (the ciphers of) $\vec u$ and $\vec u\,'$, and most importantly, for 
the identities $\vec a = \vec b+\vec c$ (the sum) and $\vec a = \vec b\cdot\vec c$ (the scalar product),
implementing some kind of parallel addition and multiplication. We skip over the 
Diophantine encodings of $\vec u$ and $\vec u\,'$, of which the description spans about one fourth 
of~\cite{Matiyasevich1997}, to directly consider the parallel sum and multiplication.
For $\vec a = \vec b+\vec c$, the encoding is as straightforward as the addition of ciphers $a=b+c$,
because the basis $r$ is large enough to avoid overflows of digit additions, i.e.\ 
$2(2^q-1) < r=2^{2q}$, and no artefacts appear at powers different from $r^{2^i}$.

The case of parallel multiplication is more complicated.
To recover the identity between $\vec a$ and the scalar product $\vec b\cdot\vec c =[b_1c_1;\ldots;b_nc_n]$,
Matiyasevich also uses the natural product of the ciphers but it generates artefacts on in-between digits 
that need to be masked out. Indeed, considering the identities
\begin{equation}
\label{appC_eq_au}
au=\sum_{i=1}^na_ir^{2^i}\times\sum_{i=1}^nr^{2^i} =
\sum_{i=1}^{n}a_ir^{2^{i+1}}+\!\!\!\!
 \sum_{1\leq i < j\leq n}\!\!\!\!\!(a_i+a_j)r^{2^i+2^j}
\end{equation}
and
\begin{equation}
\label{appC_eq_bc}
bc=\sum_{i=1}^nb_ir^{2^i}\times\sum_{i=1}^nc_ir^{2^i} =
\sum_{i=1}^{n}b_ic_ir^{2^{i+1}}+\!\!\!\!
 \sum_{1\leq i < j\leq n}\!\!\!\!\!(b_ic_j+b_jc_i)r^{2^i+2^j}
\end{equation}
by filtering out the two parts $\sum_{1\leq i<j\leq n}\ldots{}$
on the right hand side of the $+$ sign,
and then identifying the two values,
we may recover the identity $\vec a= \vec b\cdot\vec c$.

For this, he uses $(r-1)u' = (r-1)r^{2^2}+\cdots +(r-1)r^{2^{n+1}}$
as a mask.
This is the meaning of Equation~(40) of page~3232 in~\cite{Matiyasevich1997},
(slightly) reformulated here as
\begin{equation}
\label{appC_eq_40}
au \mathbin{\&} (r-1)u' = bc \mathbin{\&} (r-1)u'
\end{equation}
where $\&$ is the digit by digit A\!N\!D operator\rlap.\footnote{The Diophantine encoding of the A\!N\!D operator is itself derived from  Lucas's theorem.}\enspace
However, to show the equivalence between 
Equation~\eqref{appC_eq_40} and the identity $\vec a= \vec b\cdot\vec c$,
we have to make sure that the masking operator $x\mapsto x\mathbin{\&} (r-1)u'$
 effectively erases the two rightmost sums $\sum_{1\leq i<j\leq n}\ldots{}$
in Equations~\eqref{appC_eq_au} and~\eqref{appC_eq_bc}, while keeping the left sums 
intact. Proving this property is not even considered worthy of an explanation 
in~\cite{Matiyasevich1997} but it turned out to be not that obvious.

We first remark that \emph{the digit by digit A\!N\!D operator $\mathbin{\&}$ does not commute with $+$} (in general), hence
computing the mask can be tricky. However, $\mathbin{\&}$ commutes with the
digit by digit O\!R operator. Moreover O\!R and $+$ coincide whenever the binary digits
of the added numbers do not overlap (there are never two $1$ on the same binary digit power
in each number). As $r$ is a power of $2$, this generalises from base $2$ to base $r$, and 
provided the below summations are base $r$ representations, one can commute the sum/$+$ with $\mathbin{\&}$ and apply
masking component by component, e.g.\ one can show the identity
\[ \left({\,\sum_{i\in I} x_ir^{\varphi(i)}}\right)  \mathbin{\&} \left({\,\sum_{i\in I} m_ir^{\varphi(i)}}\right) = \sum_{i\in I} (x_i \mathbin{\&} m_i)r^{\varphi(i)}\]
as soon as $x_i<r$ and $m_i<r$ holds for any $i\in I$, and $\varphi:I\to\nat$ \emph{injectively} maps the finite type $I$.
But this implies that we have to check that Equations ~\eqref{appC_eq_au} and~\eqref{appC_eq_bc}
contain such genuine base $r$ representations.

In that spirit, if $i< j$ holds then the powers $r^{2^k}$ and $r^{2^i+2^j}$ are never 
the same power of $r$. So these represent two distinct digit positions in base $r$
and hence, in Equations~\eqref{appC_eq_au} and~\eqref{appC_eq_bc}, those digits from 
the left of $+$, and those from the right of $+$ do not overlap/interfere.
This also motivates the use of the mask  $(r-1)u' = (r-1)r^{2^2}+\cdots +(r-1)r^{2^{n+1}}$ 
which indeed filters out digits at any $r^{2^i+2^j}$ position while leaving those
at $r^{2^i}$ intact.

The way Equations~\eqref{appC_eq_au} and~\eqref{appC_eq_bc} are written
unfortunately suggests that these are base $r$ representations of both $au$
and $bc$. However, formally, we have to verify that the two rightmost sums 
$\sum_{1\leq i<j\leq n}\ldots{}$ are proper base $r$ representations. 
This is not the case for~\eqref{appC_eq_bc} and we here give a counter example.
On the one hand, it is true that $2^i+2^j=2^{i'}+2^{j'}$ implies $(i,j)=(i',j')$
when $i<j$ and $i'<j'$, ensuring that digits at those powers of $r$ do not accidentally overlap.
On the other hand, it possible that the value  $b_ic_j+b_jc_i$ overflows the digit range 
of base $r$, even though considerable room was initially reserved to avoid that situation. 
For instance, when $b_i=c_j=b_j=c_i=2^q-1$ (the maximum allowed
digit in $\Ir$), then we get $b_ic_j+b_jc_i=2^{2q+1}-2^{q+2}+2\geq r+2$ 
as soon as $q>1$. So this component at digit position $r^{2^i+2^j}$ overflows
and we would have to consider the spill out at digit position $r^{2^i+2^j+1}$
and show that it does not overlap with the other digits or other spill
outs. This is a property which we think holds true but the proof of it
would add significant formal complexity.

On the other hand, with our proposed alternative choice of $r\cdef 2^{4q}$, 
keeping the same set $\Ir=\{0,\ldots,2^q-1\}$ for
the allowed digits in vectors, the overflow does not occur any more, i.e.\ no spill out,
and Equations~\eqref{appC_eq_au} and~\eqref{appC_eq_bc} become
genuine base $r$ representations, allowing smooth component by component masking.

Because there is an obvious way out of a tricky overflow management
problem, we think that it is possible that Matiyasevich simply 
did not notice the eventuality of an overflow in the to be masked parts
of Equation~\eqref{appC_eq_bc}. We speculate this
because this overflow does not occur in the parts which are left intact by the mask, 
at the left of the $+$ sign.
With this remark, we do not imply that Equation~(40) of~\cite{Matiyasevich1997} 
is improper in any way. However, the formal proof of its equivalence with
the identity $\vec a= \vec b\cdot\vec c$ is really more complicated 
when overflows occur and we did not try to prove it in the case where $r\cdef 2^{2q}$, 
this situation 
being straightforward to avoid with a change (of otherwise low impact) in the value of $r$.

\section{Proof of Proposition~\ref{convexity2}}

\label{app_convexity2}

Let\/ $(p_1,q_1),\ldots,(p_n,q_n)$ be a sequence of pairs in\/ $\nat\times\nat$. We establish
the equivalence:
\[\hfil \sum_{i=1}^n 2p_iq_i = \sum_{i=1}^n p_i^2+q_i^2~\toot~p_1=q_1\lconj\cdots\lconj p_n=q_n.\]

\begin{proof}
We give an elementary arithmetic justification of the result, 
proof which involves none of the high-level tools of mathematical analysis.
We first show the two following statements
\begin{equation}
\label{convexity_eq}
2ab \leq a^2+b^2\quad\text{and}\quad 2ab = a^2+b^2\toot a=b
\qquad\text{for any\/ $a,b:\nat$}
\end{equation}
Assuming w.l.o.g.\ that $a\leq b$, we can write $b=a+\delta$ with $\delta\in\nat$ and
then, for ${\grel}\in\{{\leq},{=}\}$ we have 
$2ab \grel a^2+b^2\toot 2a^2+2a\delta \grel a^2+a^2+2a\delta+\delta^2\toot 0\grel \delta^2$
hence the desired result.

From the left inequality~\eqref{convexity_eq}, we easily generalize by induction on $n$
and obtain the following inequality:
\begin{equation}
\textstyle
\label{convexity_eq1}
\sum_{i=1}^n 2p_iq_i \leq \sum_{i=1}^n p_i^2+q_i^2
\end{equation}

Now we can proceed with the proof of the main stated equivalence.
The \emph{if case} is obvious so we only describe the \emph{only if case}. 
Hence we show that
$\sum_{i=1}^n 2p_iq_i = \sum_{i=1}^n p_i^2+q_i^2$ entails $p_i=q_i$ for all $i\in[1,n]$.
We proceed by induction on $n$ again.
The base case $n=0$ is trivial. For the inductive step $1+n$, let us assume
\begin{equation}
\textstyle
\label{convexity_eq2}
\sum_{i=1}^n 2p_iq_i +2p_{n+1}q_{n+1}=\sum_{i=1}^n (p_i^2+q_i^2) + p_{n+1}^2+q_{n+1}^2.
\end{equation}
By the left inequality of~\eqref{convexity_eq} and inequality~\eqref{convexity_eq1},
we have both $2p_{n+1}q_{n+1}\leq p_{n+1}^2+q_{n+1}^2$
and $\sum_{i=1}^n 2p_iq_i \leq \sum_{i=1}^n p_i^2+q_i^2$. 
The only possibility for the identity in hypothesis~\eqref{convexity_eq2} to hold 
is that both inequalities are in fact identities,
hence both $2p_{n+1}q_{n+1}= p_{n+1}^2+q_{n+1}^2$ and
$\sum_{i=1}^n 2p_iq_i = \sum_{i=1}^n p_i^2+q_i^2$ hold.
From this we derive $p_{n+1} = q_{n+1}$ by the  
equivalence on the right of~\eqref{convexity_eq} and $p_1=q_1, \ldots, p_n=q_n$
by the induction hypothesis.
\end{proof}

\end{document}